\newtheorem{lem}{Lemma}[section]
\newtheorem{thm}{Theorem}[section]
\newtheorem{cor}{Corollary}[section]
\def\BState{\State\hskip-\ALG@thistlm}
\DeclareMathOperator*{\argmin}{arg\,min}
\DeclareMathOperator*{\argmax}{arg\,max}
\DeclareRobustCommand*{\IEEEauthorrefmark}[1]{%
\raisebox{0pt}[0pt][0pt]{\textsuperscript{\footnotesize\ensuremath{#1}}}}
\newcommand\independent{\protect\mathpalette{\protect\independenT}{\perp}}
\def\independenT#1#2{\mathrel{\rlap{$#1#2$}\mkern2mu{#1#2}}}
\newcommand{\spn}[1]{\mbox{\texttt{span}}\left\{#1\right\}}
\newcommand{\inprod}[2]{\left\langle #1,#2 \right\rangle}
\newcommand{\bvec}[1]{\mathbf{#1}}
\newcommand{\real}{\mathbb{R}}
\newcommand{\support}{\mathcal{H}}
\newcommand{\proj}[1]{\mathbf{P}_{#1}}
\newcommand{\dualproj}[1]{\mathbf{P}_{#1}^\perp}
\newcommand{\norm}[1]{\left\|#1\right\|_2}
\newcommand{\opnorm}[1]{\left\|#1\right\|}
\newcommand{\abs}[1]{\left|#1\right|}
\newcommand{\expect}[1]{\mathbb{E}\left(#1\right)}
\newtheorem{lemma}{Lemma}
\newtheorem{definition}{Definition}
\newtheorem{remark}{Remark}
\title{Signal Recovery in Uncorrelated and Correlated Dictionaries Using Orthogonal Least Squares  }
\begin{document}

\author{\IEEEauthorblockN{Samrat~Mukhopadhyay\IEEEauthorrefmark{1}, Prateek Vashishtha\IEEEauthorrefmark{2} and Mrityunjoy Chakraborty\IEEEauthorrefmark{3}} \\
  \IEEEauthorblockA{Department of Electronics and Electrical Communication Engineering,\\ Indian Institute of Technology, Kharagpur, INDIA\\
    Email: \IEEEauthorrefmark{1}samratphysics@gmail.com,\IEEEauthorrefmark{2}prateek.iit.kgp.12@gmail.com,
    \IEEEauthorrefmark{3}mrityun@ece.iitkgp.ernet.in}}
\IEEEoverridecommandlockouts    
\maketitle
\begin{abstract}
 Though the method of least squares has been used for a long time in solving signal processing problems,
 in the recent field of sparse recovery from compressed measurements, this method has not been given much attention. In this paper we show that a method in the least squares family, known in the literature as Orthogonal Least Squares (OLS), adapted for compressed recovery problems, has competitive recovery performance and computation complexity, that makes it a suitable alternative to popular greedy methods like Orthogonal Matching Pursuit (OMP). We show that with a slight modification, OLS can exactly recover a $K$-sparse signal, embedded in an $N$ dimensional space ($K<<N$) in $M=\mathcal{O}(K\log (N/K))$ no of measurements with Gaussian dictionaries. We also show that OLS can be easily implemented in such a way that it requires $\mathcal{O}(KMN)$ no of floating point operations similar to that of OMP. In this paper  performance of OLS is also studied with sensing matrices with correlated dictionary, in which algorithms like OMP does not exhibit good recovery performance. We study the recovery performance of OLS in a specific dictionary called \emph{generalized hybrid dictionary}, which is shown to be a correlated dictionary, and show numerically that OLS has is far superior to OMP in these kind of dictionaries in terms of recovery performance. Finally we provide analytical justifications that corroborate the findings in the numerical illustrations. 
\end{abstract}

\section{Introduction}

\lettrine[findent=2pt]{\textbf{C}}{}OMPRESSED SENSING (CS) \cite{eldar2012compressed} has led to a new paradigm in signal processing. Compressed sensing provides a novel way of acquiring a sparse signal $\bm{x} \in \mathbb{R}^{N}$ such that $\|x\|_{0} \le K$ with very few number of linear measurements $M$ as compared to the original length of the signal N. The linear measurements $\bm{y} \in \mathbb{R}^{M}$ is acquired using a measurement matrix $\mathbf{\Phi}$ as.
\begin{equation}
\bvec{y}=\mathbf{\Phi x}
\end{equation}
The original signal $x$ is recovered back using a reconstruction algorithm. Compressive sensing research, is mainly concentrated around the following questions

\begin{itemize}
\item What class of measurement matrix $\mathbf{\Phi}$ can be used to acquire a compressed signal.
\item Given, the measurements $\bm{y}$ and $\mathbf{\Phi}$, which algorithms can be used to recover the original signal $\bm{x}\in R^{N}$.
\item How many measurements are required to reliably recover the signal.
\end{itemize}

Earlier, from the works of Candes-Tao~\cite{candes2006robust} and Rudelson–Vershynin~\cite{rudelson2008sparse}, it has been established that it is possible to reconstruct every $K$-sparse signal from Gaussian measurements $M$ with probability exceeding $1-e^{-cM}$ given $M \ge CK\ln(\frac{N}{K})$. The recovery is possible through the following convex minimization program.
\begin{equation}
\label{convex_program}
\min \|\bm{x}\|_{1} \hspace{0.2cm} \textnormal{s.t.} \hspace{0.2cm}
\bm{y}=\mathbf{\Phi}\bm{x}
\end{equation}
There are mainly two broad classes of algorithms which are talked about in literature. These are convex relaxation algorithms such as Basis Pursuit \cite{mallat_matching_1993} and second class of algorithms are iterative greedy algorithms~\cite{tropp2004greed}. Basis pursuit~\ref{convex_program} is an example of a convex programming approach.

However, convex program algorithms are computationally very expensive. For instance Basis Pursuit (BP) requires running time of the order of $\mathcal{O}(N^2 M^{3/2})$. As a consequence, there has been a lot of study devoted to alternative algorithms based on greedy approaches. Mallat~\cite{mallat_matching_1993} was the first to propose matching pursuit and Pati et. al.~\cite{pati-etal-1993-omp} proposed an extension of that known as Orthogonal Matching Pursuit(OMP) ~\cite{davis1997adaptive} which was one of the first of these greedy algorithms. Another early greedy algorithm is Orthogonal Least Squares(OLS)~\cite{chen1989orthogonal},\cite{natarajan1995sparse}. These algorithms greedily select indices at each step of the algorithm and append them to the already constructed support to create a successively increasing support over which they take projections to get reconstructed signal. The only difference between these two algorithms is in the method of selecting an index in the identification step. Tropp \cite{tropp2007signal} was first to find that OMP requires $\mathcal{O}(K\ln(N))$ number of measurements which is quite competitive to BP but still is quite higher than the number of measurements required for BP. However, similar efforts does not seem to have spent on the analysis of OLS. Also, the structure of OLS apparently makes it computationally more expensive, which is why in literature OLS has not not gained as much popularity as OMP.

Quite recently, Soussen et. al.~\cite{soussen2013joint} have discussed superior recovery performance of OLS in coherent dictionaries, thorough numerical simulations. Coherent dictionaries are dictionaries that have high mutual coherence~\cite{tropp2004greed}, and as a result, the columns are highly correlated. Soussen et.al. have studied OMP and OLS in these kind of dictionaries, and have given theoretical conditions for their success. However, those conditions do not seem to explain the observed superiority of OLS in recovery performance. 

\subsection{Main Objectives} It is the goal of this paper to analyze and discuss properties of OLS and try to justify the superiority it shows relative to OMP.
\begin{itemize}
\item In the first part of the paper we establish recovery guarantees for OLS under the conditions where signal is measured through a measurement matrix whose entries are i.i.d. Gaussian. Specifically, we show that a slight modification to OLS can allow it recover a $K$-sparse unknown vector with $\mathcal{O}(K\ln(N/K))$ number of measurements. 
\item Though, OLS seems to be a computationally heavy algorithm, We show that running time complexity of OLS can be made comparable to that of OMP i.e. $\mathcal{O}(KMN)$.
\item We empirically show that with correlated Measurement matrix OLS is able to successfully recover a true support set from while OMP does not.
\item Apart from empirical results, we provide analytical arguments that explain why OLS can outperform OMP in correlated dictionaries.
\end{itemize}


\section{Description of OLS for CS}
\subsection{Notation}
\label{sec:notation}
The following notation will be used throughout the paper. 
\begin{description}
\item[$\opnorm{\cdot}_p$:] The $l^p$ norm of a vector $\bvec{v}$, i.e., $\opnorm{\bvec{v}}_p=\left(\sum_{i=1}^n |v_i|^p\right)^{1/p}$.
\item[$\inprod{\cdot}{\cdot}$:] The inner product function defined on $\real^N$, defined as $\inprod{\bvec{u}}{\bvec{v}}=\sum_{i=1}^N u_iv_i,\ \forall \bvec{u,\ v}\in \real^N$.
\item[$\bvec{\Phi}$:] The real measurement matrix with $M$ rows and $N$ columns with $M<N$.
\item[$\bm{\phi}_i$:] The $i$ th column of $\bvec{\Phi}$, for $i=1,2,\cdots,\ N$. We assume that $\norm{\bm{\phi}_i}=1,\ \forall i$.
\item[$K$:]  The sparsity of unknown signal. It is assumed to be \emph{exactly} known, i.e. $\opnorm{\bvec{x}}_0=K$.
\item[$\bvec{x}_S$:] the vector $\bvec{x}$ restricted to the subset of indices $S$, i.e. ${x}_{S,i}=x_i\cdot I(i\in S)$, where $I(\cdot)$ is the set indicator function.
\item[$\support$:] The set of all the indices $\{1,2,\cdots,\ N\}$.
\item[$T$:] The unknown support set of the unknown vector $\bvec{x}$.
\item[$\bvec{\Phi}_S$:] The submatrix of $\bvec{\Phi}$ formed with the columns restricted to index set $S$.
\item[$\bvec{\Phi}_{S}^\dagger:$] The Moore-Penrose pseudo-inverse of $\bvec{\Phi}_{S}$, which exists when $\bvec{\Phi}_S$ has column rank $|S|$. It is defined as $(\bvec{\Phi}_{S}^T\bvec{\Phi}_{S})^{-1}\bvec{\Phi}_{S}^T$.
\item[$\proj{S}$:] The projection operator on $span(\bvec{\Phi}_{S})$. It is defined as  $\bvec{\Phi}_{S}\bvec{\Phi}_{S}^\dagger$.
\item[$\dualproj{S}$:] The projection operator on the orthogonal complement of $span(\bvec{\Phi}_{S})$. It is defined as $\bvec{I}-\proj{S}$.
\end{description}  
\subsection{OLS algorithm}
\label{sec:ols-algo}
\begin{table}[ht!]
\begin{subfigure}{0.5\textwidth}
\caption{OLS \textsc{Algorithm}}
\centering
\begin{tabular}{p{8.5cm}}
\hrulefill
\begin{description}
\item[\textbf{Input:}]\  measurement vector $\bvec{y}\in \real^M$, sensing matrix $\bvec{\Phi}\in \real^{M\times N}$, sparsity level $K$
\item[\textbf{Initialize:}]$\quad$ counter $k=0$, residue $\bvec{r}^0=\bvec{y}$, estimated support set, $T^0=\emptyset$, total set $\mathcal{H}=\{1,2,\cdots,\ N\}$, tolerance $\epsilon>0$ 
\item[\textbf{While}]($\norm{\bvec{r}^k}\ge  \epsilon\ \mbox{and}\ \ k<K$)
\begin{description}
\item[]\  $k=k+1$
\item[]\  {\emph{Identify:}} $\displaystyle h^k=\argmin_{i\in {\mathcal{H}}}\|\dualproj{T^{k-1}\cup \{i\}}\bvec{y}\|_2^2$
\item[]\  {\emph{Augment:}} $T^k=T^{k-1}\cup h^k$
\item[]\  {\emph{Estimate:}} $\displaystyle \bvec{x}^k=\argmin_{\bvec{u}:\bvec{u}\in \real^n,\ supp(\bvec{u})= T^k}\|\bvec{y}-\bvec{\Phi}\bvec{u}\|_2$
\item[]\  \emph{Update:} $\bvec{r}^k=\bvec{y}-\bvec{\Phi}\bvec{x}^k$
\end{description}
\item[\textbf{End While}]
\end{description}
\hrulefill
\begin{description}
\item[\textbf{Output:}]$\quad$  estimated support set $\displaystyle \hat{T}=\argmin_{S:|S|=K}\|\bvec{x}^k-\bvec{x}^k_S\|_2$ and $K$-sparse signal $\hat{\bvec{x}}$ satisfying $\hat{\bvec{x}}_{\hat{T}}=\bvec{x}^k_{\hat{T}},\ \hat{\bvec{x}}_{\support\setminus\hat{T}}=\mathbf{0}$
\end{description}
\hrulefill
\label{tab:OLS}
\end{tabular}
\end{subfigure}
\begin{subfigure}{0.5\textwidth}
\caption{OMP \textsc{Algorithm}}
\centering
\begin{tabular}{p{8.5cm}}
\hrulefill
\begin{description}
\item[\textbf{Input:}]\  measurement vector $\bvec{y}\in \real^M$, sensing matrix $\bvec{\Phi}\in \real^{M\times N}$, sparsity level $K$
\item[\textbf{Initialize:}]$\quad$ counter $k=0$, residue $\bvec{r}^0=\bvec{y}$, estimated support set, $T^0=\emptyset$, total set $\mathcal{H}=\{1,2,\cdots,\ N\}$, tolerance $\epsilon>0$ 
\item[\textbf{While}]($\norm{\bvec{r}^k}\ge  \epsilon\ \mbox{and}\ \ k<K$)
\begin{description}
\item[]\  $k=k+1$
\item[]\  {\emph{Identify:}} $\displaystyle h^k=\argmax_{i\in {\mathcal{H}}}\abs{\inprod{\bm{\phi}_i}{\bvec{r}^{k-1}}}$
\item[]\  {\emph{Augment:}} $T^k=T^{k-1}\cup h^k$
\item[]\  {\emph{Estimate:}} $\displaystyle \bvec{x}^k=\argmin_{\bvec{u}:\bvec{u}\in \real^n,\ supp(\bvec{u})= T^k}\|\bvec{y}-\bvec{\Phi}\bvec{u}\|_2$
\item[]\  \emph{Update:} $\bvec{r}^k=\bvec{y}-\bvec{\Phi}\bvec{x}^k$
\end{description}
\item[\textbf{End While}]
\end{description}
\hrulefill
\begin{description}
\item[\textbf{Output:}]$\quad$  estimated support set $\displaystyle \hat{T}=\argmin_{S:|S|=K}\|\bvec{x}^k-\bvec{x}^k_S\|_2$ and $K$-sparse signal $\hat{\bvec{x}}$ satisfying $\hat{\bvec{x}}_{\hat{T}}=\bvec{x}^k_{\hat{T}},\ \hat{\bvec{x}}_{\support\setminus\hat{T}}=\mathbf{0}$
\end{description}
\hrulefill
\label{tab:OMP}
\end{tabular}
\end{subfigure}
\end{table}
In Table.~\ref{tab:OLS} descriptions of OLS as well as the OMP algorithm are presented. It can be observed from these descriptions that OLS functionally differs from OMP only at the atom identification step. At this step, OMP chooses a new atom by evaluating the list of absolute correlations of the atoms of the dictionary with the residual vector at the last step, and then finding the index corresponding to the maximum of that list. OLS, on the other hand, at the identification step, creates a list of residual vector norm, that would have been obtained if index of a new atom of the dictionary were added to the support. OLS, then looks up that index, the inclusion of which results in the least residual vector norm. This procedure, however, seems to be formidable to work with, because it needs to evaluate the orthogonal projection error with respect to different subspaces, as indicated by the term $\norm{\dualproj{T^{k-1}\cup \{i\}}\bvec{y}}$. Fortunately, the following lemma exhibits that this selection procedure has an equivalent form that is much easier to work with.
\begin{lemma}
\label{lem:OLS-selection-step}
Let $k\ge 1$. Let $T^{k-1}$ be the support set constructed by OLS after the $(k-1)^{th}$ iteration, and  let $\bm{r^{k-1}}$ be the corresponding residual. Then, an index $i\in \support\setminus T^{k-1}$ will be chosen at the $k^{th}$ iteration if $$\bm{\phi}_{i}=\argmax_{i\in \support\setminus T^{k-1}}\frac{\abs{\inprod{\bm{\phi}_i}{\bvec{r}^{k-1}}}}{\norm{\dualproj{T^{k-1}}\bm{\phi}_i}}$$.
\end{lemma}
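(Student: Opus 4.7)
The plan is to reduce the criterion $\|\dualproj{T^{k-1}\cup\{i\}}\bvec{y}\|_2^2$ to an explicit, one-line expression in which the dependence on $i$ is transparent, and then observe that minimizing this expression is the same as maximizing the stated ratio. Write $S=T^{k-1}$ and $\bvec{v}_i=\dualproj{S}\bm{\phi}_i$ for each $i\in\support\setminus S$; note $\bvec{v}_i\ne\mathbf 0$ whenever $\bm{\phi}_i$ is not in $\mathrm{span}(\bvec{\Phi}_S)$, which is the only case in which adding column $i$ changes the residual.

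First I would establish the rank-one projection update
\[
\proj{S\cup\{i\}}=\proj{S}+\frac{\bvec{v}_i\bvec{v}_i^{T}}{\norm{\bvec{v}_i}^{2}}.
\]
The argument is the standard Gram–Schmidt identity: the space $\mathrm{span}(\bvec{\Phi}_{S\cup\{i\}})$ is the orthogonal direct sum of $\mathrm{span}(\bvec{\Phi}_S)$ and the line spanned by $\bvec{v}_i$ (the component of $\bm{\phi}_i$ orthogonal to $\mathrm{span}(\bvec{\Phi}_S)$), and the projection onto an orthogonal direct sum equals the sum of the projections. Taking orthogonal complements gives $\dualproj{S\cup\{i\}}=\dualproj{S}-\bvec{v}_i\bvec{v}_i^{T}/\norm{\bvec{v}_i}^{2}$.

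Next I would apply this identity to $\bvec{y}$. Since $\dualproj{S}\bvec{y}=\bvec{r}^{k-1}$ and $\bvec{v}_i\perp \bvec{r}^{k-1}-\tfrac{\inprod{\bvec{v}_i}{\bvec{y}}}{\norm{\bvec{v}_i}^2}\bvec{v}_i$ is in fact not quite right — what is correct is that the two summands of $\dualproj{S\cup\{i\}}\bvec{y}=\bvec{r}^{k-1}-\tfrac{\inprod{\bvec{v}_i}{\bvec{y}}}{\norm{\bvec{v}_i}^2}\bvec{v}_i$ are orthogonal, because $\bvec{v}_i\in \mathrm{range}(\dualproj{S})$ and $\bvec{r}^{k-1}-\tfrac{\inprod{\bvec{v}_i}{\bvec{y}}}{\norm{\bvec{v}_i}^2}\bvec{v}_i=\dualproj{S\cup\{i\}}\bvec{y}$ is orthogonal to $\bvec{v}_i$ by construction. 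Expanding the square therefore yields
\[
\norm{\dualproj{S\cup\{i\}}\bvec{y}}^{2}
=\norm{\bvec{r}^{k-1}}^{2}-\frac{\inprod{\bvec{v}_i}{\bvec{y}}^{2}}{\norm{\bvec{v}_i}^{2}}.
\]
Finally I would use self-adjointness and idempotency of $\dualproj{S}$ to rewrite the numerator: $\inprod{\bvec{v}_i}{\bvec{y}}=\inprod{\dualproj{S}\bm{\phi}_i}{\bvec{y}}=\inprod{\bm{\phi}_i}{\dualproj{S}\bvec{y}}=\inprod{\bm{\phi}_i}{\bvec{r}^{k-1}}$. Substituting,
\[
\norm{\dualproj{S\cup\{i\}}\bvec{y}}^{2}=\norm{\bvec{r}^{k-1}}^{2}-\frac{\inprod{\bm{\phi}_i}{\bvec{r}^{k-1}}^{2}}{\norm{\dualproj{S}\bm{\phi}_i}^{2}}.
\]
The first term is independent of $i$, so minimizing the left-hand side over $i\in\support\setminus S$ is equivalent to maximizing $|\inprod{\bm{\phi}_i}{\bvec{r}^{k-1}}|/\norm{\dualproj{S}\bm{\phi}_i}$, which is the claimed selection rule.

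There is no real obstacle; the only subtlety worth flagging is the degenerate case $\dualproj{S}\bm{\phi}_i=\mathbf 0$, which must be excluded from the maximization because the ratio is ill-defined. This is harmless for the statement because such an $i$ cannot improve the residual (adding it leaves the span unchanged) and therefore cannot be the minimizer of the original criterion except in trivial situations; the standard convention is to restrict the argmax to indices with $\dualproj{T^{k-1}}\bm{\phi}_i\ne\mathbf 0$.
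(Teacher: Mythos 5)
Your proposal is correct and follows essentially the same route as the paper: derive the identity $\norm{\dualproj{T^{k-1}\cup\{i\}}\bvec{y}}^{2}=\norm{\bvec{r}^{k-1}}^{2}-\abs{\inprod{\bm{\phi}_i}{\bvec{r}^{k-1}}}^{2}/\norm{\dualproj{T^{k-1}}\bm{\phi}_i}^{2}$ via the rank-one Gram--Schmidt update of the projector, then note that minimizing the left side is maximizing the subtracted ratio. You merely make explicit two steps the paper leaves implicit (the projector update itself and the orthogonality justifying the Pythagorean expansion), and your remark on the degenerate case $\dualproj{T^{k-1}}\bm{\phi}_i=\mathbf{0}$ is a sensible addition the paper omits.
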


\begin{proof}
First, observe that, by definition, for any $i\in T^{k-1}$, $\proj{T^{k-1}\cup \{i\}}\bvec{y}\in span(\bvec{\Phi}_{T{k-1}})\implies \norm{\dualproj{T^{k-1}\cup \{i\}}\bvec{y}}=0$. Now, note that, for any $i\in T^{k-1}$,
\begin{align}
{\proj{T^{k-1}\cup \{i\}}\bvec{y}} & = \proj{T^{k-1}}\bvec{y}+\frac{\inprod{\bm{\phi}_i}{\bvec{r}^{k-1}}}{\norm{\dualproj{T^{k-1}}\bvec{\phi}_i}^2}{\dualproj{T^{k-1}}\bvec{\phi}_i}\nonumber\\
\implies {\dualproj{T^{k-1}\cup \{i\}}\bvec{y}} & =\bvec{r}^{k-1}-\frac{\inprod{\bm{\phi}_i}{\bvec{r}^{k-1}}}{\norm{\dualproj{T^{k-1}}\bvec{\phi}_i}^2}{\dualproj{T^{k-1}}\bvec{\phi}_i}\nonumber\\
\implies \norm{\dualproj{T^{k-1}\cup \{i\}}\bvec{y}}^2 & =\norm{\bvec{r}^{k-1}}^2-\frac{\abs{\inprod{\bm{\phi}_i}{\bvec{r}^{k-1}}}^2}{\norm{\dualproj{T^{k-1}}\bvec{\phi}_i}^2}
\label{eq:OLS-selection-step-equivalence}
\end{align}
Since $i\in \support\setminus T^{k-1}$ is chosen if the L.H.S. of Eq.~\eqref{eq:OLS-selection-step-equivalence} is minimized, the R.H.S. of Eq.~\eqref{eq:OLS-selection-step-equivalence} implies the desired result.
\end{proof}

\subsection{Implementation and Time Complexity of the Algorithm}
\label{sec:OLS-runtime-complexity}
Though the atom selection criteria leveraging the result of Lemma.~\ref{lem:OLS-selection-step} is relatively simpler than the original atom selection criteria of OLS, as described in Table.~\ref{tab:OLS}, it is still, apparently, seems very expensive to be implemented efficiently, because of the involvement of the orthogonal projection operators. However, exploiting QR decomposition of the projected matrices,the OLS algorithm can be implemented with a time complexity of $\mathcal{O}(KMN)$ which is same as that of OMP. This can be done by allowing twice as much space as that of OMP, maintaining the space complexity of $\mathcal{O}(MN)$. The algorithm mainly consists of two steps 
\begin{itemize}
\item Identification of the columns of the support set.
\item Computation of the signal vector $\bm{x}$
\end{itemize}
Throughout the algorithm,for any iteration say $k$ we maintain a matrix consisting the columns $\{\dualproj{T^k}\bm{\phi_{i}}\}_{i=1,2..N}$. If a column $\bm{\phi_{j}}$ is chosen at iteration $k+1$, we can modify the columns as 
\begin{equation}
\dualproj{T^{k+1}}\bm{\phi}_{i}=\dualproj{T^{k}}\bm{\phi}_{i}-\frac{\inprod{\dualproj{T^{k}}\bm{\phi}_i}{\dualproj{T^{k}}\bm{\phi}_j}}{\norm{\dualproj{T^k}\bm{\phi}_j}^2}\dualproj{T^{k}}\bm{\phi}_j
\end{equation}
Overall for $N$ columns, at any iteration the above step does not take more than $\mathcal{O}(MN)$ of floating point operations. The rest of the steps are similar to that as that of the OMP. Since, the algorithm runs for $K$ number of iterations, therefore the time complexity for identification step is $\mathcal{O}(KMN)$.\\
For,the second step, we maintain $QR$ decomposition of the selected columns. The signal vector $\bm{x}$ can then be found in not more than $\mathcal{O}(K^{2}M)$ operations.\\
Thus, the overall time complexity is $\mathcal{O}(KMN)$. 
\section{Random Measurement Ensembles}
\label{sec:random-measurement-ensemble}
In this section, we describe the type of matrix ensembles that will be used throughout the paper to carry out analysis of OLS. In this paper, two types of matrix ensembles are considered, which are refereed to ``uncorrelated'' and ``correlated'' dictionaries. These dictionaries have the following key properties:
\paragraph{\textbf{Uncorrelated dictionary}}
  This kind of matrix ensemble represents incoherent dictionaries (referred to as ``uncorrelated'' dictionaries in the sequel), i.e. dictionaries which constitute ``almost'' orthonormal matrix ensembles. Matrices of this type are assumed to have the following properties:

\begin{itemize}
\item  The entries of the matrix are i.i.d. Gaussian, i.e. $~\mathcal{N}(0,1/M)$.
\item  The columns of the matrix are stochastically independent.
\item  The columns are normalized, i.e., $\mathbb{E}\| \bm{\phi}_j \|_{2}^{2}=1,\ j=1,2,3....N$
\end{itemize} 
\paragraph{\textbf{Correlated dictionary:}}
The main property that distinguishes a correlated dictionary to a uncorrelated dictionary, is the relatively higher mutual inner product between the columns of the matrices, compared to the uncorrelated dictionaries.  One of the key measures of correlatedness of a matrix is the \emph{worst-case coherence}, defined as \begin{align*}
\mu:=\max_{i\ne j}\left\{\frac{\abs{\inprod{\bm{\phi}_i}{\bm{\phi}_j}}}{\norm{\bm{\phi}_i}\norm{\bm{\phi}_j}}\bigg|1\le i,j\le N\right\}
\end{align*}
We define the correlated dictionaries with the key property that it has high worst case coherence. For the purpose of demonstration, in this paper, a particular type of correlated dictionary is considered, that, in the sequel, will be referred to as the \emph{generalized hybrid dictionary}.
We define the generalized hybrid dictionary model as below:
\begin{definition}
\label{defn:hybrid}
A generalized hybrid dictionary of order $r$, is defined as the collection of normalized vectors $\{\bm{\phi}_{i}/\norm{\bm{\phi}_i}\}_{i=1}^N,\ \bm{\phi}_i\in\real^M $ such that \begin{align*}
\bm{\phi}_i=\bvec{n}_i+\sum_{j=1}^r u_{ij} \bvec{a}_j
\end{align*} where $\{\bvec{n}_i\}_{i=1}^N$ are i.i.d.$\sim\mathcal{N}_{M}(\bvec{0},M^{-1}I_M)$, $\{u_{ij}\}_{\scriptsize{1\le i\le N,\ 1\le j\le r}}$ are i.i.d.$\sim \mathcal{U}[0,T)$, $u_{ij}\independent \bvec{n}_k\ \forall i,j,k$ with $T>0$, and $\{\bvec{a}_i\}_{i=1}^r,\ \bvec{a}_i\in\real^M$, is an orthonormal basis for $\real^r$. 
\end{definition}
Note that this model can describe a matrix with rank $r$ in the case when $MT>>1$. Denote $W_r=\spn{\bvec{a}_1,\cdots,\ \bvec{a}_r}$. Then, the model can be seen to describe each column $\bm{\phi}_i$ as a random vector concentrated around a vector that lies inside the positive orthant of the space $W_r$, with random components, lying within the $r-$hypercube of edge length $T$, embedded in $W_r$. Specific properties of this dictionary are discussed in greater detail in Section.~\ref{sec:ols-hybrid-dictionary}.

\section{Properties of Orthonormally Projected Uncorrelated Dictionaries}
\label{sec:projected-uncorrelated-dictionary-properties}
A crucial property of OLS is that, OLS can be thought of acting like OMP at each step, but with a matrix with columns projected on a subspace. This property was exhibited by lemma.~\ref{lem:OLS-selection-step} where it was found that the $(k+1)^{\mathrm{th}}$ step of OLS can be thought as the $k^{th}$ step of OMP, but with the matrix ensemble, consisting of the columns $\{\bvec{c}_i^k\}_{i}$, where \begin{align*}
\bvec{c}_i^k=\left\{\begin{array}{ll}
\frac{\dualproj{T^k}\bvec{\phi}_i}{\norm{\dualproj{T^k}\bvec{\phi}_i}}, & i\in T^k\\
\bvec{0}, & i\notin T^k
\end{array}\right.
\end{align*} 
So, it is important to study the properties of this kind of matrix ensembles, before attempting an analysis of OLS. 
\subsection{Joint Correlation}
The following lemma shows that with high probability, a projected column of the form $\frac{\dualproj{T^k}\bvec{\phi}_i}{\norm{\dualproj{T^k}\bvec{\phi}_i}}$ is ``almost'' orthogonal to at least one of any sequence of unit norm vectors. For the uncorrelated dictionaries that are considered here with i.i.d. gaussian entries, one can use standard concentration inequalities, along with some results from matrix theory to establish this result as shown below.
\begin{lem}
\label{lem:projected-joint-correlation}
Consider the uncorrelated dictionary $\bm{\Phi}$ assumed in Section.~\ref{sec:random-measurement-ensemble}. Let ${\bm{u}}\in \real^{M}$ be a vector whose $l_{2}$ norm do not exceed one. Given set $T^k$ of columns of $\bm{\Phi}$ and let $\bm{z}\in \real^{M}$ be a gaussian random vector with i.i.d. entries, independent of ${\bm{u}}$ and the columns of $\bm{\Phi}_{T^k}$. Then,
\begin{align}
\mathbb{P}\left\{\abs{\inprod{\frac{\dualproj{T^k}\bvec{z}}{\norm{\dualproj{T^k}\bvec{z}}}}{\bvec{u}} }\le \epsilon\right\} & \ge 1-e^{-\frac{m(M_1-1)\epsilon^2}{2}}
\end{align} 
where $M_{1}=M-k-1$, and $m(n)=(\sqrt{n-1}+1)^2,\ \forall n\in \mathbb{N}$. 
\end{lem}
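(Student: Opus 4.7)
The plan is to reduce the random quantity to the absolute value of a single coordinate of a uniformly random direction in $\real^{M-k}$, and then obtain the claimed tail bound via a combination of a Gaussian-tail estimate and a norm-concentration estimate.

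First I would condition on $\bvec{\Phi}_{T^k}$, so that $\dualproj{T^k}$ becomes a deterministic projection of rank $d:=M-k$. Since $\bvec{z}$ has i.i.d.\ Gaussian entries and is independent of $\bvec{\Phi}_{T^k}$ and of $\bvec{u}$, its law is unchanged by this conditioning. Using self-adjointness and idempotency of $\dualproj{T^k}$,
\[
\inprod{\dualproj{T^k}\bvec{z}/\norm{\dualproj{T^k}\bvec{z}}}{\bvec{u}}=\frac{\inprod{\dualproj{T^k}\bvec{z}}{\bvec{v}}}{\norm{\dualproj{T^k}\bvec{z}}},
\]
where $\bvec{v}:=\dualproj{T^k}\bvec{u}$ satisfies $\norm{\bvec{v}}\le\norm{\bvec{u}}\le 1$.

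Because $\dualproj{T^k}\bvec{z}$ is Gaussian, is supported on the $d$-dimensional subspace $\mathrm{range}(\dualproj{T^k})$, and is isotropic on that subspace, rotational invariance of the Gaussian law lets me identify it (after a deterministic scaling) with a standard Gaussian $\bvec{g}\in\real^{d}$ and identify $\bvec{v}$ with a vector of the same norm in $\real^{d}$. Applying rotational invariance once more lets me align that vector with $\bvec{e}_1$. The statement then reduces to showing
\[
\mathbb{P}(|g_1|/\norm{\bvec{g}}>\epsilon)\le e^{-m(M_1-1)\epsilon^2/2}
\]
for $\bvec{g}$ standard Gaussian in $\real^{d}$ with $d-1=M_1$.

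The main step would be the threshold splitting, valid for every $s>0$,
\[
\mathbb{P}(|g_1|/\norm{\bvec{g}}>\epsilon)\le \mathbb{P}(|g_1|>\epsilon s)+\mathbb{P}(\norm{\bvec{g}}<s).
\]
The first term is controlled by the (two-sided) Gaussian tail $\mathbb{P}(|g_1|>\epsilon s)\le e^{-\epsilon^2 s^2/2}$. The second is controlled by Borell--Tsirelson applied to the $1$-Lipschitz map $\bvec{g}\mapsto\norm{\bvec{g}}$: $\mathbb{P}(\norm{\bvec{g}}<\expect{\norm{\bvec{g}}}-t)\le e^{-t^2/2}$, combined with $\expect{\norm{\bvec{g}}}\ge\sqrt{d-1}=\sqrt{M_1}$ obtained from the Poincar\'e-type inequality $\mathrm{Var}(\norm{\bvec{g}})\le 1$.

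The main obstacle is the alignment that forces the particular exponent $m(M_1-1)=(\sqrt{M_1-2}+1)^{2}$. Setting $s=\sqrt{M_1-2}+1$ makes the Gaussian-tail piece equal $e^{-m(M_1-1)\epsilon^2/2}$ by construction, so the work is in showing that the norm-concentration piece is at most of the same order. Since with this choice $s$ is close to (and eventually exceeds) $\sqrt{d}$, the plain Borell bound is not tight enough; I would therefore replace it by a sharper lower-tail $\chi^2$ estimate such as Laurent--Massart, and track the Poincar\'e slack $\sqrt{d}-\sqrt{d-1}$ carefully. This reconciliation between the Gaussian-tail exponent and the norm-concentration exponent is the delicate calculation, and it is where the nonstandard form $(\sqrt{n-1}+1)^{2}$ in the statement is forced.
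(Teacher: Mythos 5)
Your reduction --- conditioning on $\bvec{\Phi}_{T^k}$, rotating so that the problem becomes bounding $\mathbb{P}\left(\abs{g_1}/\norm{\bvec{g}}>\epsilon\right)$ for a standard Gaussian $\bvec{g}$ in the $d$-dimensional range of $\dualproj{T^k}$ --- is exactly the paper's first half (the paper phrases the same quantity as $\abs{\cos\Theta_1}$ in polar coordinates, with $\Theta_1$ having density proportional to $\sin^{M_1-1}$). The gap is in the second half. The threshold splitting $\mathbb{P}(\abs{g_1}/\norm{\bvec{g}}>\epsilon)\le\mathbb{P}(\abs{g_1}>\epsilon s)+\mathbb{P}(\norm{\bvec{g}}<s)$ cannot yield the exponent $m(M_1-1)=(\sqrt{M_1-2}+1)^2=M_1-1+2\sqrt{M_1-2}$, because this number is at least the ambient dimension $d=M_1+1$ (strictly larger once $M_1\ge 4$). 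To make the Gaussian-tail term alone no larger than the target you must take $s^2\ge m(M_1-1)\ge d$, hence $s\ge\sqrt{d}\ge\expect{\norm{\bvec{g}}}$, and then $\mathbb{P}(\norm{\bvec{g}}<s)\ge\mathbb{P}(\chi^2_d<d)>1/2$, a constant independent of $\epsilon$, whereas the right-hand side $e^{-m(M_1-1)\epsilon^2/2}$ is exponentially small in $d$ for any fixed $\epsilon$ bounded away from zero. Conversely, any $s$ small enough to make the norm term decay forces the Gaussian-tail term above the target. No sharpening of the lower-tail $\chi^2$ estimate (Laurent--Massart or otherwise) can rescue this, since the event $\{\norm{\bvec{g}}<s\}$ genuinely has probability bounded below by a constant; the obstruction is structural, not a matter of tracking constants. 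The ``delicate reconciliation'' you defer to the end is therefore not delicate but impossible within this decoupling scheme.

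The paper closes the argument by treating numerator and denominator jointly: it uses the exact law of $\abs{\cos\Theta_1}$, namely $\mathbb{P}(\abs{\cos\Theta_1}\ge\epsilon)=f_{M_1-1}(\epsilon^2)$ with $f_n(x)=A_n^{-1}\int_0^{\sqrt{1-x}}u^n(1-u^2)^{-1/2}\,du$ (an incomplete-beta-type function), and then proves the pointwise inequality $f_n(x)\le e^{-m(n)x/2}$ for all $x\in[0,1]$ by elementary calculus (Lemma~\ref{lem:theta_1-prob-upper-bound}, Appendix~\ref{sec:proof-lemma-theta_1-prob-upper-bound}): $\ln f_n$ is bounded by a linear function via the mean value theorem, and minimizing $-f_n'/f_n$ over $[0,1]$ is precisely what generates the constant $(\sqrt{n-1}+1)^2$. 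To repair your proposal you would need to replace the union bound by an analysis of the exact beta distribution of $g_1^2/\norm{\bvec{g}}^2$.
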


\begin{proof}
Since $\bvec{z},\bvec{u}\independent \bvec{\Phi}_{T^k}$, we first find a lower bound of the probability of the desired event, conditioned on the fact that the columns of $\bvec{\Phi}_{T^k}$, and the vector $\bvec{u}$ are given. 

Now, given the $\bvec{\Phi}_{T^k}$, note that $\dualproj{T^k}$ is an orthogonal projection error operator and hence can be decomposed as \begin{align*}
\dualproj{T^k}=\bvec{U\Sigma U}^T
\end{align*} where $\bvec{U}\in \real^{M\times M}$ is an orthogonal matrix, and $\Sigma$ is a diagonal matrix, with, w.l.o.g. its first $M-d$ diagonal elements $1$ and the rest $0$, where $d=\dim R(\bm{\Phi}_{T^k})\le k$, so that $d\le k$. Then, one can write, \begin{align*}
\lefteqn{\inprod{\frac{\dualproj{T^k}\bvec{z}}{\norm{\dualproj{T^k}\bvec{z}}}}{\bvec{u}}} & &\\
\ & =\inprod{\frac{\bvec{U\Sigma U}^T\bvec{z}}{\norm{\bvec{U\Sigma U}^T\bvec{z}}}}{\bvec{u}} \\
\ & =\inprod{\frac{\bvec{\Sigma U}^T\bvec{z}}{\norm{\bvec{\Sigma U}^T\bvec{z}}}}{\bvec{\Sigma U}^T\bvec{u}}\\
\ & =\inprod{\frac{\bvec{v}}{\norm{\bvec{v}}}}{\bvec{\tilde{u}}}
\end{align*}
where \begin{itemize}
\item $\bvec{v}\in \real^{M-d}$, with its components as the first $M-d$ components of $\bvec{U}^T\bvec{z}$
\item $\bvec{\tilde{u}}\in \real^{M-d}$, with its components as the first $M-d$ components of $\bvec{U}^T\bvec{u}$
\end{itemize}
Two further observation are in order:
\begin{itemize}
\item Given $\bvec{\Phi}_{T^k}$, $\bvec{U}^T\bvec{z}\sim \mathcal{N}(\bvec{0},M^{-1}\bvec{I}_M)$\footnote{Since given $\bvec{U}$, by independence of $\bvec{\Phi}_{T^k}$ and $\bvec{x},\ \expect {(\bvec{U}^T\bvec{z})(\bvec{U}^T\bvec{z})^T\mid \bvec{U}}=\bvec{U}^T\expect{\bvec{z z}^T}\bvec{U}= \bvec{U}^T M^{-1}\bvec{I}_M\bvec{U}=M^{-1}\bvec{I}_M$}, which implies that $\bvec{v}\sim \mathcal{N}(\bvec{0},M^{-1}\bvec{I}_{M-d})$. 
\item $\norm{\bvec{\tilde{u}}}=\norm{\Sigma\bvec{U}^T\bvec{u}}\le\norm{\bvec{u}}\le 1$.
\end{itemize}  

A further simplification can be furnished by finding an orthogonal matrix $\bvec{U}_1\in \real^{(M-d)\times (M-d)}$, i.e. a rotation in the $(M-d)$ dimensional space that transforms $\bvec{\tilde{u}}$ to a vector lying on one of the coordinate axes; specifically, $\bvec{\hat{u}}:=\bvec{U}_1\bvec{\tilde{u}}$ has its first component nonzero and all the other components $0$. This task can be executed by constructing $\bvec{U}_1$ by putting in the first row the vector $\bvec{\tilde{u}}/\norm{\bvec{\tilde{u}}}$, and putting in the rest of the rows an orthonormal basis of the orthogonal complement of $\spn{\bvec{\tilde{u}}}$ in $\real^{M-d}$. Then, one can write, \begin{align*}
\lefteqn{\inprod{\frac{\bvec{v}}{\norm{\bvec{v}}}}{\bvec{\tilde{u}}}} & & \\
\ =& \inprod{\frac{\bvec{v}_1}{\norm{\bvec{v}_1}}}{\bvec{\hat{u}}}
\end{align*}
where $\bvec{v}_1=\bvec{U}_1\bvec{v}$. Note that, since $\bvec{U}_1$ is constructed from $\bvec{\tilde{u}}$ which is independent of $\bvec{v}$, conditioned on $\{\bvec{u}\}$ and $\bvec{\Phi}_{T^k}$, $\bvec{v}_1\sim \mathcal{N}(\bvec{0},M^{-1}\bvec{I}_{M-d})$.  

At this point, it is useful to make a change of coordinates from Cartesian to polar, to represent $\frac{\bvec{v}_1}{\norm{\bvec{v}_1}}$ as \begin{align*}
\begin{bmatrix}
\cos\Theta_1\\
\sin \Theta_1\cos \Theta_2\\
\vdots\\
\sin\Theta_1\sin\Theta_2\cdots\sin\Theta_{M_1-1}\cos\Theta_{M_1}\\
\sin\Theta_1\sin\Theta_2\cdots\sin\Theta_{M_1-1}\sin\Theta_{M_1}
\end{bmatrix}
\end{align*}
Where $M_1=M-d-1$. Here, given $\bvec{\Phi}_{T^k}$, $\Theta_1, \Theta_2,\ \cdots,\ \Theta_{M_1}$ are independent, but not identically distributed, continuous valued random variables, with $\Theta_1,\ \cdots,\ \Theta_{M_1-1}\in [0,\pi)$ and $\Theta_{M_1}\in [0,2\pi)$. The probability distribution function of these random variables are given by (conditioned on $\bvec{\Phi}_{T^k}$)\begin{align*}
p_{\Theta_i}(\theta_i)=\left\{\begin{array}{ll}
\frac{(\sin\theta_i)^{M_1-i}}{\beta\left(\frac{M_1}{2},\frac{1}{2}\right)}, & 1\le i\le M_1-1\\
\frac{1}{2\pi}, & i=M_1
\end{array}\right.
\end{align*}
Therefore, we have \begin{align*}
\lefteqn{\inprod{\frac{\dualproj{T^k}\bvec{z}}{\norm{\dualproj{T^k}\bvec{z}}}}{\bvec{u}}} & & \\
\ =& \cos\Theta_1
\end{align*}
Since $\Theta_1$ has density $p_{\Theta_1}$ when $\bvec{\Phi}_{T^k}$ is given, we find \begin{align*}
\lefteqn{\mathbb{P}\left\{\abs{\inprod{\frac{\dualproj{T^k}\bvec{z}}{\norm{\dualproj{T^k}\bvec{z}}}}{\bvec{u}}}\ge  \epsilon\mid \bvec{\Phi}_{T^k},\bvec{u}\right\}} & & \\
\ &=\mathbb{P}(\abs{\cos \Theta_1}\ge \epsilon\mid \bvec{\Phi}_{T^k},\bvec{u})\\
\ &=2\int_{0}^{\cos^{-1}\epsilon}p_{\Theta_1}(\theta_1)d\theta_1\\
\ &=\frac{2}{\beta\left(\frac{M_1}{2},\frac{1}{2}\right)}\int_{0}^{\cos^{-1}\epsilon}\sin^{M_1-1}\theta_1 d\theta_1\\
\ &=f_{M_1-1}(\epsilon^2)
\end{align*}
where, for a given $n\in \mathbb{N}$, $f_n:[0,1]\to [0,1]$ is defined as \begin{align*}
f_{n}(x)=\frac{1}{A_n}\int_{0}^{\sqrt{1-x}}\frac{u^n}{\sqrt{1-u^2}}du,\quad \forall x\in [0,1]
\end{align*}
where $A_n=\frac{1}{2}\beta\left(\frac{n+1}{2},\frac{1}{2}\right)$. The following lemma is invoked to find a upper bound of the desired probability. 
\begin{lem}
\label{lem:theta_1-prob-upper-bound}
$\forall n\in \mathbb{N},\ \forall x\in [0,1]$, \begin{align}
\label{eq:theta_1-prob-upper-bound}
f_n(x)\le e^{-\frac{m(n) x}{2}}
\end{align}
where $m(n):=(\sqrt{n-1}+1)^2$.
\end{lem}
\begin{proof}
The proof is postponed to Appendix.~\ref{sec:proof-lemma-theta_1-prob-upper-bound}
\end{proof}
Invoking Lemma.~\ref{lem:theta_1-prob-upper-bound}, it is immediate to find \begin{align*}
\lefteqn{\mathbb{P}\left\{\abs{\inprod{\frac{\dualproj{T^k}\bvec{z}}{\norm{\dualproj{T^k}\bvec{z}}}}{\bvec{u}}}\ge \epsilon\mid \bvec{\Phi}_{T^k},\bvec{u}\right\}} & & \\
\le & e^{-\frac{m(M_1-1)\epsilon^2}{2}}
\end{align*}
Thus, the desired probability can be upper bounded as \begin{align*}
\lefteqn{\mathbb{P}\left\{\abs{\inprod{\frac{\dualproj{T^k}\bvec{z}}{\norm{\dualproj{T^k}\bvec{z}}}}{\bvec{u}}}\ge \epsilon\right\}} & & \\
\ =& \int_{\bvec{\Phi}_{T^k},\ \bvec{u}_t}\mathbb{P}\left\{\abs{\inprod{\frac{\dualproj{T^k}\bvec{z}}{\norm{\dualproj{T^k}\bvec{z}}}}{\bvec{u}_t}}\ge \epsilon\mid \bvec{\Phi}_{T^k},\bvec{u}\right\}d\mathbb{P}(\bvec{\Phi}_{T^k})d\mathbb{P}(\bvec{u}_t)\\
\ \le &  e^{-\frac{m(M_1-1)\epsilon^2}{2}}\\
\ \le &  e^{-\frac{(\sqrt{M-k-2}+1)^2\epsilon^2}{2}}=e^{-\frac{m(M_1-1)\epsilon^2}{2}}
\end{align*}
where $M_1$, with a slight abuse of notation, is again defined as $M-k-1$.
\end{proof}

\subsection{Smallest Singular Value}
Inspired by the analysis technique introduced by Tropp and Gilbert in~\cite{tropp2007signal}, it is imperative to find out a tail bound of the smallest singular value of the projected uncorrelated matrix defined before in this section. Before proceeding to find the tail bound, we recall an important result associated with the tail bounds for ``unprojecetd'' uncorrelated matrices, i.e. matrices with the uncorrelated dictionary without being projected to any subspace. To do so, the unprojected matrix, is assumed to satisfy the following type of concentration inequality:   
\begin{align}
\label{eq:concentration_inequality}
\mathbb{P}\left\{ \abs{ \norm{\bvec{\Phi x}}^2-\norm{\bvec{x}}^2 } \le \epsilon \norm{\bvec{x}}^2\right\} \ge 1-2e^{-M c_{0}(\epsilon)},\quad 0 < \epsilon \le 1
\end{align}
where $c_0(\epsilon)$ is a constant that depends only on $\epsilon$, such that $c_0(\epsilon)>0\ \forall
\epsilon\in (0,1)$. Then the singular values can be bounded with high probability thanks to the following lemma due to Baraniuk , 


\begin{lem}
\label{lem:Baranuik_singular_value_bound}
Suppose that $\bm{Z}\in \real^{M\times K}$ be a sub-matrix of a $M\times N$ matrix $\bvec{\Phi}$, suhc that $\bvec{Z}$ satisfies the concentration inequality \eqref{eq:concentration_inequality}. Then, for any $\epsilon\in (0,1)$ and $\bvec{x}\in \real^K$, one has
\begin{align}
(1-\epsilon)\|\bm{x}\|_{2}\le\|\bm{Z x}\|_2\le (1+\epsilon)\|\bm{x}\|_{2} \hspace{1cm} 
\end{align}
with probability exceeding $1-2(12/\epsilon)^Ke^{-c_0(\epsilon/2)M}$.
\end{lem}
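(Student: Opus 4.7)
The plan is to follow the classical covering-net argument of Baraniuk, Davenport, DeVore, and Wakin. By homogeneity, it suffices to prove the bound $(1-\epsilon)\le \norm{\bvec{Zx}}\le (1+\epsilon)$ uniformly over the unit sphere $S^{K-1}=\{\bvec{x}\in\real^K:\norm{\bvec{x}}=1\}$, so I will work with $\bvec{x}\in S^{K-1}$ from the outset. The concentration inequality \eqref{eq:concentration_inequality} only controls a single, fixed $\bvec{x}$, so the task is to upgrade a pointwise statement to a uniform one, and the standard device is a discretization of $S^{K-1}$ by a finite $\delta$-net followed by a continuity argument.

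First I would invoke the standard volumetric bound: there exists a finite set $\mathcal{Q}\subset S^{K-1}$ with $\norm{\bvec{x}-\bvec{q}(\bvec{x})}\le \epsilon/4$ for every $\bvec{x}\in S^{K-1}$ and $\bvec{q}(\bvec{x})\in\mathcal{Q}$, whose cardinality satisfies $\abs{\mathcal{Q}}\le (12/\epsilon)^K$. Next, I would apply the hypothesis \eqref{eq:concentration_inequality} with tolerance $\epsilon/2$ to each $\bvec{q}\in \mathcal{Q}$ and take a union bound, obtaining the event
\begin{align*}
\mathcal{E}:\quad (1-\epsilon/2)\le \norm{\bvec{Zq}}^2\le (1+\epsilon/2),\qquad \forall \bvec{q}\in \mathcal{Q},
\end{align*}
which holds with probability at least $1-2(12/\epsilon)^K e^{-c_0(\epsilon/2)M}$, matching the advertised bound. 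Taking square roots, on $\mathcal{E}$ one has $1-\epsilon/2\le \norm{\bvec{Zq}}\le 1+\epsilon/2$ (using $\sqrt{1+t}\le 1+t/2$ and $\sqrt{1-t}\ge 1-t$ for $t\in[0,1]$).

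The third step is to transfer the bound from $\mathcal{Q}$ to all of $S^{K-1}$. Let $A=\sup_{\bvec{x}\in S^{K-1}}\norm{\bvec{Zx}}$; since $\bvec{Z}$ is a fixed matrix on $\mathcal{E}$, $A$ is finite. For any $\bvec{x}\in S^{K-1}$, pick $\bvec{q}\in\mathcal{Q}$ with $\norm{\bvec{x}-\bvec{q}}\le \epsilon/4$ and write, using the triangle inequality and linearity of $\bvec{Z}$,
\begin{align*}
\norm{\bvec{Zx}}\le \norm{\bvec{Zq}}+\norm{\bvec{Z}(\bvec{x}-\bvec{q})}\le (1+\epsilon/2)+A(\epsilon/4).
\end{align*}
Taking the supremum over $\bvec{x}$ yields $A\le (1+\epsilon/2)+A\epsilon/4$, so $A\le (1+\epsilon/2)/(1-\epsilon/4)\le 1+\epsilon$ (this last inequality is a routine check for $\epsilon\in(0,1)$), giving the upper bound. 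The matching lower bound follows from the reverse triangle inequality:
\begin{align*}
\norm{\bvec{Zx}}\ge \norm{\bvec{Zq}}-\norm{\bvec{Z}(\bvec{x}-\bvec{q})}\ge (1-\epsilon/2)-(1+\epsilon)(\epsilon/4)\ge 1-\epsilon,
\end{align*}
using the just-proved upper bound to control $\norm{\bvec{Z}(\bvec{x}-\bvec{q})}$.

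The main obstacle, as usual, is in verifying that the constants line up so that the slack $\epsilon/2$ chosen on the net and the mesh size $\epsilon/4$ combine to give exactly the target $\epsilon$; this is a deterministic algebra check after the probabilistic work is done. The randomness is entirely confined to the union bound in step two, which is why the exponent in the failure probability is $c_0(\epsilon/2)$ rather than $c_0(\epsilon)$. No additional probabilistic ingredients are needed beyond the hypothesis \eqref{eq:concentration_inequality}.
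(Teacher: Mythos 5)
Your proof is correct and is exactly the standard covering-net argument of Baraniuk, Davenport, DeVore, and Wakin: the mesh size $\epsilon/4$, the net cardinality $(12/\epsilon)^K$, the union bound at tolerance $\epsilon/2$, and the bootstrap on $A=\sup_{\bvec{x}}\norm{\bvec{Zx}}$ all check out, and the final algebra $(1+\epsilon/2)/(1-\epsilon/4)\le 1+\epsilon$ and $(1-\epsilon/2)-(1+\epsilon)\epsilon/4\ge 1-\epsilon$ holds for $\epsilon\in(0,1)$. The paper itself states this lemma without proof, attributing it to Baraniuk, so there is nothing internal to compare against; your argument is the canonical one from the cited source.
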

It should be emphasized that this result is true for a \emph{given} $M\times K$ sub-matrix of $\bvec{\Phi}$, as opposed to the stronger result, generally alluded to the restricted isometry property (RIP), which gives conditions ensuring the above kind of bound to hold for \emph{all} $M\times K$ sub-matrices of $\bvec{\Phi}$.
%
%
%
%

Another important result, specific to the case of Gaussian matrices, is attributed to Davidson and Zarek, which gives a much tighter estimate for lower bound of the lowest singular value. This one will be more useful in our analysis as we have considered Gaussian entries for our matrices.

\begin{align}
\label{eq:zarek-singular-bound}
\mathbb{P} \left\lbrace \sigma_{min}({\bm{Z}}) \ge 1-\sqrt{\frac{K}{M}} - \epsilon \right\rbrace \ge 1-e^{-\frac{\epsilon^{2}M}{2}}
\end{align}
In the following lemmas, we now attempt to find similar estimates of lower bounds on the lowest singular values of the ``projected'' uncorrelated matrix, as defined at the beginning of Section.~\ref{sec:projected-uncorrelated-dictionary-properties}.
\begin{lem}
\label{lem:least-singular-value-lem1}
Let $\mathbf{\Phi}$ $\in \mathbf{R^{M \times K}}$ and let $T^{k}$ be a set of $k$ arbitrary chosen indices from $\{1,2,\cdots,\ K\}$. Then, \begin{align*}
\sigma_{\mathrm{min}}(\mathbf{P_{T^{k}}^{\perp}}\bm{\Phi}_{(T^{k})^{c}})\ge \sigma_{\mathrm{min}}(\mathbf{\Phi})
\end{align*}
\end{lem}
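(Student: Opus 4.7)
The plan is to use the variational characterization $\sigma_{\min}(A)=\min_{\|\bvec{x}\|=1}\|A\bvec{x}\|$ and reduce $\dualproj{T^k}\bvec{\Phi}_{(T^k)^c}\bvec{x}$ to the form $\bvec{\Phi}\bvec{z}$ for a vector $\bvec{z}$ with $\|\bvec{z}\|\ge \|\bvec{x}\|$. Roughly, projecting away the span of $\bvec{\Phi}_{T^k}$ subtracts some linear combination of those columns, and once absorbed into a full-length vector, the non-projected coordinates are preserved, so the norm can only grow.

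More precisely, I would fix an arbitrary unit vector $\bvec{x}\in\real^{K-k}$ and set $\bvec{v}=\bvec{\Phi}_{(T^k)^c}\bvec{x}$. Since $\proj{T^k}\bvec{v}\in\mathrm{span}(\bvec{\Phi}_{T^k})$, there exists $\bvec{y}\in\real^{k}$ with $\proj{T^k}\bvec{v}=\bvec{\Phi}_{T^k}\bvec{y}$. Therefore
\begin{align*}
\dualproj{T^k}\bvec{\Phi}_{(T^k)^c}\bvec{x}=\bvec{\Phi}_{(T^k)^c}\bvec{x}-\bvec{\Phi}_{T^k}\bvec{y}=\bvec{\Phi}\bvec{z},
\end{align*}
where $\bvec{z}\in\real^{K}$ is defined so that $\bvec{z}_{(T^k)^c}=\bvec{x}$ and $\bvec{z}_{T^k}=-\bvec{y}$. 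The key observation is that the two index sets $T^k$ and $(T^k)^c$ are disjoint, so $\|\bvec{z}\|_2^2=\|\bvec{x}\|_2^2+\|\bvec{y}\|_2^2\ge 1$.

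Applying the singular-value lower bound to $\bvec{\Phi}$ gives $\|\bvec{\Phi}\bvec{z}\|_2\ge \sigma_{\min}(\bvec{\Phi})\|\bvec{z}\|_2\ge \sigma_{\min}(\bvec{\Phi})$, i.e. $\|\dualproj{T^k}\bvec{\Phi}_{(T^k)^c}\bvec{x}\|_2\ge \sigma_{\min}(\bvec{\Phi})$. Taking the minimum over unit $\bvec{x}$ yields the claim. I do not anticipate a hard step here; the only subtlety is being careful that the ``padding'' vector $\bvec{z}$ agrees with $\bvec{x}$ on the complementary indices, which is what forces $\|\bvec{z}\|_2\ge \|\bvec{x}\|_2$ and keeps the argument purely algebraic (no probabilistic input, no appeal to properties of Gaussian entries).
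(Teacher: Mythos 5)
Your proof is correct, and it takes a genuinely different route from the paper's. The paper partitions $\bvec{\Phi}^T\bvec{\Phi}$ into blocks, identifies $\bvec{\Phi}_{(T^k)^c}^T\dualproj{T^k}\bvec{\Phi}_{(T^k)^c}$ as the Schur complement whose inverse appears as a principal block of $(\bvec{\Phi}^T\bvec{\Phi})^{-1}$, and then applies eigenvalue interlacing to that principal submatrix to get $\lambda_{\min}(\bvec{\Phi}_{(T^k)^c}^T\dualproj{T^k}\bvec{\Phi}_{(T^k)^c})\ge\lambda_{\min}(\bvec{\Phi}^T\bvec{\Phi})$. You instead argue variationally: writing $\dualproj{T^k}\bvec{\Phi}_{(T^k)^c}\bvec{x}=\bvec{\Phi}\bvec{z}$ with $\bvec{z}$ agreeing with $\bvec{x}$ on $(T^k)^c$ and carrying the projection coefficients (negated) on $T^k$, so that $\norm{\bvec{z}}^2=\norm{\bvec{x}}^2+\norm{\bvec{y}}^2\ge 1$ by disjointness of the index sets. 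Your version is more elementary and slightly more robust: it needs no invertibility of $\bvec{\Phi}^T\bvec{\Phi}$ or of $\bvec{\Phi}_{T^k}^T\bvec{\Phi}_{T^k}$ (the paper's proof formally inverts both, and the statement is trivial anyway when $\sigma_{\min}(\bvec{\Phi})=0$), and it makes the geometric content transparent — the orthogonal-projection residual of any combination of the remaining columns is itself $\bvec{\Phi}$ applied to a vector at least as long as the input. What the paper's route buys is a one-line reduction to two standard facts (the block-inverse/Schur-complement identity and interlacing), which is a reusable pattern when one wants bounds on other blocks of $(\bvec{\Phi}^T\bvec{\Phi})^{-1}$; but for this particular inequality your direct argument is cleaner.
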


\begin{proof}
Since the singular values of a matrix is invariant under permutation of its columns, without any loss of generality, we can partition matrix $\mathbf{\Phi}$ as  $$\begin{bmatrix}
\bvec{\Phi}_{T^k} & \bvec{\Phi}_{(T^k)^c}
\end{bmatrix}$$ Then we have 
\begin{align*}
\bvec{\Phi}^T\bvec{\Phi}=\begin{bmatrix}
\bm{\Phi}_{T^k}^T\bm{\Phi}_{T^k} & \bm{\Phi}_{T^k}^T\bm{\Phi}_{(T^k)^c}\\
\bm{\Phi}_{(T^k)^c}^T\bm{\Phi}_{T^k} & \bm{\Phi}_{(T^k)^c}^T\bm{\Phi}_{(T^k)^c}
\end{bmatrix}
\end{align*}

Now, Clearly the upper left entry of the block matrix $(\bm{\Phi}^T\bm{\Phi})^{-1}$ would be the inverse of the Schur complement of the matrix $\bm{\Phi}^{T}\bm{\Phi}$ which is clearly $(\bm{\Phi}_{(T^{k})^{c}}^{T}\mathbf{P_{T^{k}}^{\perp}}\bm{\Phi}_{(T^{k})^c})^{-1}$. Now, using Cauchy's interlacing theorem for eigenvalues of Hermitian matrices~\cite{horn2012matrix} we have $\lambda_{\mathrm{max}}(\bm{\Phi}_{(T^{k})^{c}}^{T}\mathbf{P_{T^{k}}^{\perp}}\bm{\Phi}_{(T^{k})^c})^{-1} \le \lambda_{\mathrm{max}}(\mathbf{\Phi}^T\mathbf{\Phi})^{-1}$ and thus, $\lambda_{\mathrm{min}}(\bm{\Phi}_{(T^{k})^{c}}^{T}\mathbf{P_{T^{k}}^{\perp}}\bm{\Phi}_{(T^{k})^c}) \ge \lambda_{\mathrm{min}}(\mathbf{\Phi}^T\mathbf{\Phi})$. Now recall that, for any matrix $\bvec{A}$, $\sigma_{\mathrm{min}}(\bvec{A})=\sqrt{\lambda_{\min}(\bvec{A}^T\bvec{A})}$, from which the desired result follows directly.                                              
\end{proof}

\begin{lem}
\label{lem:least-singular-value-lem2}
Let $\bvec{\Phi}\in \real^{M\times K}$ and $T^k$ represent a set of $k$ arbitrary indices from $\{1,2,\cdots,\ K\}$. Let $\bm{z}_{1}, \bm{z}_{2}, \bm{z}_{3}....\bm{z}_{L} \in R^{M}$ be random vectors i.i.d. $\mathcal{N}(0,M^{-1}\bvec{I})$ and are independent of the columns of $\bvec{\Phi}_{T^{k}}$. Then the following event holds with probability exceeding $1-2Le^{{-\frac{\delta^{2}(M-d)}{8}}}$
\begin{align}
\ & \left\lbrace\left(1-\frac{d}{M}\right) \left(1-\delta \right) \le \| \mathbf{P_{T^{k}}^{\perp}}\bm{z}_i \|_{2}^{2} \le \left(1-\frac{d}{M}\right) \left(1+\delta \right)\forall\  1\le i\le L\nonumber\right\rbrace 
\end{align}
where $d=\dim R(\bm{\Phi}_{T^k})\le k$.
\end{lem}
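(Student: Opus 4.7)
The plan is to condition on $\bvec{\Phi}_{T^k}$ and exploit rotational invariance of the Gaussian distribution to reduce each $\|\dualproj{T^k}\bm{z}_i\|_2^2$ to a scaled chi-square random variable, after which a standard chi-square tail bound plus a union bound over $i=1,\dots,L$ will deliver the claim.

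First, I would fix a realization of $\bvec{\Phi}_{T^k}$. Then $\dualproj{T^k}$ is a deterministic orthogonal projector onto a subspace of dimension $M-d$, where $d = \dim R(\bvec{\Phi}_{T^k})$. Write $\dualproj{T^k} = \bvec{U}\bvec{\Sigma}\bvec{U}^T$ for an orthogonal $\bvec{U}\in\real^{M\times M}$ and $\bvec{\Sigma}=\mathrm{diag}(\bvec{I}_{M-d},\bvec{0})$, exactly as in the proof of Lemma~\ref{lem:projected-joint-correlation}. Since $\bm{z}_i \sim \mathcal{N}(\bvec{0}, M^{-1}\bvec{I}_M)$ is rotationally invariant and is independent of $\bvec{\Phi}_{T^k}$, the vector $\bvec{U}^T \bm{z}_i$ is again $\mathcal{N}(\bvec{0}, M^{-1}\bvec{I}_M)$ conditionally on $\bvec{\Phi}_{T^k}$. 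Consequently
\begin{align*}
\|\dualproj{T^k}\bm{z}_i\|_2^2 = \|\bvec{\Sigma}\bvec{U}^T \bm{z}_i\|_2^2 = \frac{1}{M}\sum_{j=1}^{M-d} g_{ij}^2,
\end{align*}
where $\{g_{ij}\}_{j=1}^{M-d}$ are i.i.d.\ standard Gaussians. Thus $M\|\dualproj{T^k}\bm{z}_i\|_2^2$ is a $\chi^2_{M-d}$ random variable, with mean $M-d$, so $\expect{\|\dualproj{T^k}\bm{z}_i\|_2^2}=(M-d)/M = 1-d/M$.

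Next I would invoke a standard chi-square concentration inequality: for $X\sim \chi^2_n$ and any $\delta\in(0,1)$,
\begin{align*}
\mathbb{P}\{|X-n|\ge n\delta\} \le 2e^{-n\delta^2/8}.
\end{align*}
Setting $n=M-d$ and $X = M\|\dualproj{T^k}\bm{z}_i\|_2^2$ gives, conditionally on $\bvec{\Phi}_{T^k}$,
\begin{align*}
\mathbb{P}\!\left\{\left|\|\dualproj{T^k}\bm{z}_i\|_2^2-\left(1-\tfrac{d}{M}\right)\right|\ge \delta\left(1-\tfrac{d}{M}\right)\right\} \le 2e^{-\delta^2(M-d)/8}.
\end{align*}
Since the right-hand side is independent of $\bvec{\Phi}_{T^k}$, removing the conditioning yields the same unconditional bound for each $i$.

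Finally, a union bound over $i=1,\dots,L$ — valid regardless of any dependence among the $\bm{z}_i$'s, since the per-vector bound holds marginally — produces the failure probability $2Le^{-\delta^2(M-d)/8}$, which matches the stated bound. The only non-routine step is identifying the conditional distribution of $\|\dualproj{T^k}\bm{z}_i\|_2^2$ as $M^{-1}\chi^2_{M-d}$; once that is established via the orthogonal decomposition of $\dualproj{T^k}$, the rest is the standard Laurent–Massart-style chi-square tail plus a union bound, and poses no real obstacle.
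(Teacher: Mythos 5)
Your proposal is correct and follows essentially the same route as the paper's proof: both diagonalize the projector as $\bvec{U}\bvec{\Sigma}\bvec{U}^T$, use rotational invariance conditionally on $\bvec{\Phi}_{T^k}$ to reduce $\|\dualproj{T^k}\bm{z}_i\|_2^2$ to the squared norm of an $(M-d)$-dimensional Gaussian with mean $1-d/M$, apply the same concentration bound with constant $1/8$ in the exponent, and finish with a union bound over the $L$ vectors. The only difference is cosmetic: you name the resulting distribution explicitly as $M^{-1}\chi^2_{M-d}$, whereas the paper leaves it as a ``standard exercise in concentration inequalities.''
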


\begin{proof}
Recalling the notation introduced in Section.~\ref{sec:notation}, it is easy to see that $\mathbf{P_{T^{k}}^{\perp}}$ is an idempotent matrix with $M-d$ eigenvalues $1$ and $d$ eigenvalues $0$, where $d=\dim{{R}}(\bvec{\Phi}_{T^k})$. Therefore, we can decompose $\mathbf{P_{T^{k}}^{\perp}}$ as $\mathbf{U \Sigma U^{T}}$ where $\mathbf{\Sigma}$ is a diagonal matrix containing $M-d$ $1$s and $d$ $0$s, and $\mathbf{U}$ is an orthonormal matrix. \\

Now, for any $\bm{z}\in\real^{M}$, distributed as $\mathcal{N}(\bvec{0},M^{-1}\bvec{I})$, and independent of $\bvec{\Phi}_{T^k}$, the following observation can be made 
\begin{align}
{\norm{\dualproj{T^k}\bvec{z}}^2} & =\bvec{z}^T\dualproj{T^k}\bvec{z}\nonumber\\
\ &=\bvec{z}^T\bvec{U}\bvec{\Sigma}\bvec{U}^T\bvec{z}^T\nonumber\\
\ &=(\bvec{\Sigma U}^T \bvec{z})^T(\bvec{\Sigma U^T z})
\end{align}
Since $\mathbf{U}$ is an orthonormal matrix independent of $\bm{z}$, given $\bvec{\Phi}_{T^k}$, $\mathbf{U^{T}}\bm{z}$ is a random vector with entries i.i.d Gaussian distributed. Now, let $\bm{v}$ be the vector with the first $M-d$ components of $\Sigma U^{T}\bm{z}$. Then $\bm{v} \sim \mathcal{N}(\bvec{0},M^{-1}\bvec{I}_{M-d})$ (given $\bvec{\Phi}_{T^k}$). Thus, given $\bvec{\Phi}_{T^k}$, $\bm{v}$ is an i.i.d distributed Gaussian vector with ${\mathbb{E}(\|v\|_2^{2})}=\left(1-\frac{d}{M}\right)$. Now, a standard exercise in concentration inequalities show that norm of $\bvec{v}/\sqrt{1-d/M}$ will be concentrated about its mean~\cite{foucart2013mathematical}, i.e. the following holds 
\begin{align}
\mathbb{P}\left\lbrace\abs{\norm{\dualproj{T^k}\bvec{z}}^{2}-{\left(1-\frac{d}{M}\right)}}\ge {\delta\left(1-\frac{d}{M}\right)}\right\rbrace \le 2e^{-\frac{\delta^{2}(M-d)}{8}}
\end{align}
Then, considering the probability for the complementary events and taking union bound over $L$ vectors, we arrive at the desired result.
\end{proof}

\begin{lem}
\label{lem:least-singular-value-lem3}
Let $\bm{\Phi}\in R^{M\times K}$, and let $\bm{\Phi}$ satisfies smallest singular value property as in \ref{lem:Baranuik_singular_value_bound} with $\sigma_{min}\ge \sigma$. Let $T^k$ be a set of $k$ indices. Then $\|(\mathbf{P_{T^{k}}^{\perp}}\bm{\Phi}_{(T^{k})^{c}} \bvec{D})\bvec{x}\|_{2} \ge \sigma\sqrt{1-\delta}$ with probability exceeding \begin{align*}
\ & 1-2(12/(1-\sigma))^Ke^{-c_0((1-\sigma)/2)M}-2(K-k)e^{-(M-k)\delta^2/8}  \\
\ & >1-e^{-cM}
\end{align*} for some $M>CK$ where 
$$\mathbf{D}=\mathrm{diag}\left(\frac{1}{\|\mathbf{P_{T^{k}}^{\perp}}\bm{\phi}_{1}\|_{2}},\ \frac{1}{\norm{\dualproj{T^k}\bvec{\phi}_2}}\cdots,\ \frac{1}{\norm{\dualproj{T^k}\bvec{\phi}_{K-k}}}\right)$$ where $\bm{\phi}_{i} \in (T^{k})^{c}$ for all $\bvec{x} \in R^{K-k}$ with $\norm{\bvec{x}}=1$, and $k=1,2,3,\cdots,\ K-1$.
\end{lem}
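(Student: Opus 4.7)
The desired lower bound factors naturally into two independent effects: the smallest singular value of the projected submatrix $\dualproj{T^k}\bm{\Phi}_{(T^k)^c}$, and the action of the diagonal normalising matrix $\bvec{D}$. Lemmas~\ref{lem:least-singular-value-lem1} and~\ref{lem:Baranuik_singular_value_bound} already control the former, while Lemma~\ref{lem:least-singular-value-lem2} controls the latter; the plan is simply to combine them with a union bound.

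First, setting $\bvec{y}=\bvec{Dx}\in\real^{K-k}$, I would start from
\begin{align*}
\|\dualproj{T^k}\bm{\Phi}_{(T^k)^c}\bvec{Dx}\|_2 \ge \sigma_{\min}(\dualproj{T^k}\bm{\Phi}_{(T^k)^c})\cdot\|\bvec{Dx}\|_2,
\end{align*}
and invoke Lemma~\ref{lem:least-singular-value-lem1} to replace $\sigma_{\min}(\dualproj{T^k}\bm{\Phi}_{(T^k)^c})$ by $\sigma_{\min}(\bm{\Phi})$. Since $\bm{\Phi}$ is assumed to satisfy the concentration inequality~\eqref{eq:concentration_inequality}, applying Lemma~\ref{lem:Baranuik_singular_value_bound} with $\epsilon=1-\sigma$ yields $\sigma_{\min}(\bm{\Phi})\ge\sigma$ with probability exceeding $1-2(12/(1-\sigma))^K e^{-c_0((1-\sigma)/2)M}$, which is precisely the first failure term in the statement.

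To lower bound $\|\bvec{Dx}\|_2$, I would expand
\begin{align*}
\|\bvec{Dx}\|_2^2 = \sum_{i\in(T^k)^c}\frac{x_i^2}{\|\dualproj{T^k}\bm{\phi}_i\|_2^2}.
\end{align*}
Because the columns $\bm{\phi}_i$ for $i\in(T^k)^c$ are i.i.d.\ $\mathcal{N}(\bvec{0},M^{-1}\bvec{I})$ and independent of $\bm{\Phi}_{T^k}$, Lemma~\ref{lem:least-singular-value-lem2} applied with $L=K-k$ yields, uniformly over $i\in(T^k)^c$, the upper bound $\|\dualproj{T^k}\bm{\phi}_i\|_2^2\le(1-d/M)(1+\delta)\le 1+\delta$ with probability at least $1-2(K-k)e^{-\delta^2(M-d)/8}$; since $d\le k$, this failure probability is at most $2(K-k)e^{-\delta^2(M-k)/8}$, matching the second failure term. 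On this event, $1/\|\dualproj{T^k}\bm{\phi}_i\|_2^2\ge (1+\delta)^{-1}\ge 1-\delta$, so $\|\bvec{Dx}\|_2^2\ge(1-\delta)\|\bvec{x}\|_2^2$. Chaining the two inequalities and taking a union bound delivers the claimed $\sigma\sqrt{1-\delta}\|\bvec{x}\|_2$ lower bound with the stated probability.

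The leftover claim that the total success probability exceeds $1-e^{-cM}$ for $M>CK$ is routine absorption: the first failure term equals $2\exp(K\log(12/(1-\sigma))-c_0((1-\sigma)/2)M)$, which is dominated by $e^{-c_1 M}$ as soon as $M/K$ exceeds $\log(12/(1-\sigma))/c_0((1-\sigma)/2)$, while the second is exponentially small in $M-k$ and hence in $M$. The one subtlety I would be careful about is the independence bookkeeping: Lemma~\ref{lem:Baranuik_singular_value_bound} is a statement about the whole $M\times K$ Gaussian matrix $\bm{\Phi}$, whereas Lemma~\ref{lem:least-singular-value-lem2} is naturally conditional on $\bm{\Phi}_{T^k}$. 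Reading the second event as a measurable set on the joint sample space of $\bm{\Phi}$ and marginalising the conditional bound over $\bm{\Phi}_{T^k}$ is what justifies a single union bound, and this is the only non-mechanical step in the argument.
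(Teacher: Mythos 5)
Your proof follows essentially the same route as the paper's: the identical factorization $\|\dualproj{T^k}\bm{\Phi}_{(T^k)^c}\bvec{Dx}\|_2 \ge \sigma_{\min}(\dualproj{T^k}\bm{\Phi}_{(T^k)^c})\,\|\bvec{Dx}\|_2$, with the first factor controlled via Lemmas~\ref{lem:least-singular-value-lem1} and~\ref{lem:Baranuik_singular_value_bound} and the second via Lemma~\ref{lem:least-singular-value-lem2}, followed by the same union bound over the two failure events. The only divergence is in your favor: you invoke the upper tail $\norm{\dualproj{T^k}\bm{\phi}_i}^2\le(1-d/M)(1+\delta)$ of Lemma~\ref{lem:least-singular-value-lem2} and then use $(1+\delta)^{-1}\ge 1-\delta$, whereas the paper's event $E_2$ as written bounds $\max_i\norm{\dualproj{T^k}\bm{\phi}_i}^2$ above by $(1-d/M)(1-\delta)$, which is the low-probability side of that lemma --- an apparent typo that your reading corrects while landing on the same final probability expression.
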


\begin{proof}
First of all note that, for any $\bvec{x}\in \real^{K-k}$ such that $\norm{\bvec{x}}=1$, \begin{align*}
\norm{\dualproj{T^k}\bvec{\Phi}_{(T^k)^c}\bvec{D x}}^2 & =\bvec{x}^T\bvec{D}^T\bvec{\Phi}_{(T^k)^c}^T\dualproj{T^k}\bvec{\Phi}_{(T^k)^c}\bvec{D}\bvec{x}\\
\ &\ge \sigma_{\min}^2(\dualproj{T^k}\bvec{\Phi}_{(T^k)^c})\sigma_{\min}^2(\bvec{D})\\
\ &\ge \sigma_{\min}^2(\bvec{\Phi})\frac{1}{\max_{i\in (T^k)^c}\norm{\dualproj{T^k}\bvec{\phi}_i}^2}
\end{align*}
where the last inequality uses Lemma.~\ref{lem:least-singular-value-lem1}. Thus, \begin{align*}
\mathbb{P}\left(\norm{\dualproj{T^k}\bvec{\Phi}_{(T^k)^c}\bvec{D x}}\ge \sigma\sqrt{1-\delta}\right) & \ge \mathbb{P}(E_1\cap E_2) \\
\ \ge 1-\mathbb{P}(E_1^c)-\mathbb{P}(E_2^c)
\end{align*}
where \begin{align*}
E_1:= & \left\{\sigma_{\min}(\bvec{\Phi})\ge \sigma\right\}\\
E_2:= & \left\{{\max_{i\in (T^k)^c}\norm{\dualproj{T^k}\bvec{\phi}_i}}\le \sqrt{1-\delta}\right\}
\end{align*}
Now, if $\bm{\Phi}$ satisfies smallest singular value property as in \ref{lem:Baranuik_singular_value_bound}, with $\sigma_{min}\ge\sigma$ then clearly using \ref{lem:least-singular-value-lem1} we can conclude that matrix $\mathbf{P_{T^{k}}^{\perp}}\bm{\Phi}_{(T^{k})^{c}}$ satisfies the least singular value bound with probability atleast $1-2(12/(1-\sigma))^Ke^{-c_0((1-\sigma)/2)M}$. On the other hand, Lemma.~\ref{lem:least-singular-value-lem2} dictates that \begin{align*}
\mathbb{P}(E_2)\ge & \mathbb{P}\left\{{\max_{i\in (T^k)^c}\norm{\dualproj{T^k}\bvec{\phi}_i}}^2\le \left(1-\frac{d}{M}\right)(1-\delta)\right\}\\
\ \ge & 1-2(K-k)e^{-(M-d)\delta^2/8}\\
\ \ge & 1-2(K-k)e^{-(M-k)\delta^2/8} 
\end{align*} 
Thus, \begin{align*}
\lefteqn{\mathbb{P}\left(\norm{\dualproj{T^k}\bvec{\Phi}_{(T^k)^c}\bvec{D x}}\ge \sigma\sqrt{1-\delta}\right)} & & \\
\ & \ge 1-2(12/(1-\sigma))^Ke^{-c_0((1-\sigma)/2)M}-2(K-k)e^{-(M-k)\delta^2/8}
\end{align*} 
For Gaussian matrices, one can use $c_0(\epsilon)=\epsilon^2/8$ to further simplify the bound.

\end{proof}
\section{Analysis of OLS in Uncorrelated dictionaries}
\label{sec:ols-uncorrelated-dictionaries}
The following theorem is the one of the main results in the paper. In this theorem, we argue that a slight modification to OLS Algorithm can be shown to be requiring $\mathcal{O}(K\log(N/K))$ number of measurements for perfect recovery, which is asymptotically the same as that required by Basis Pursuit. Our modification concerns with the first iteration in OLS.
We discuss about our modification within the proof of this theorem.
\begin{thm}
\label{thm:uncorrelated-dcitionary-recovery-probability}
Let $\mathbf{\Phi} \in \mathbf{R^{M \times N}}$ be a measurement matrix and let $\mathbf{x} \in \mathbf{R^{N}}$ be an arbitrary $K$-sparse signal. Given $\mathbf{y} = \mathbf{\Phi x} $, the  measurement vector. Then, Orthogonal Least Squares algorithm can reconstruct the signal $\mathbf{x}$ with probability exceeding $1-\delta$, with $\delta \in (0,1)$, for number of measurements $M>CK\ln\left(\frac{N}{Kc(\delta)}\right)+K+1$ for some suitably chosen constant $C>0$, and some suitable chosen constant $c(\delta)$ that depends on $\delta$.
\end{thm}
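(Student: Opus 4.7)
The plan is to adapt the Tropp--Gilbert analysis of OMP from~\cite{tropp2007signal} to OLS, exploiting Lemma~\ref{lem:OLS-selection-step}. That lemma allows OLS at iteration $k$ to be read as OMP applied to the ``projected--and--renormalized'' columns $\tilde{\bm{\phi}}_i=\dualproj{T^{k-1}}\bm{\phi}_i/\norm{\dualproj{T^{k-1}}\bm{\phi}_i}$, so the analysis mirrors the OMP one but with the probabilistic ingredients drawn from Section~\ref{sec:projected-uncorrelated-dictionary-properties}. I would proceed by induction on $k$: fixing the true support $T$ (WLOG $T=\{1,\dots,K\}$) and assuming $T^{k-1}\subseteq T$, show that OLS selects a new index in $T$ at step $k$ with probability exceeding $1-\delta/K$, so that a union bound over $K$ iterations gives $T^K=T$ and the least-squares estimate recovers $\bvec{x}$ exactly.

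The induction hypothesis reduces the correctness of step $k$ to the event
\begin{align*}
\max_{i\in \support\setminus T}\frac{\abs{\inprod{\bm{\phi}_i}{\bvec{r}^{k-1}}}}{\norm{\dualproj{T^{k-1}}\bm{\phi}_i}} < \max_{j\in T\setminus T^{k-1}}\frac{\abs{\inprod{\bm{\phi}_j}{\bvec{r}^{k-1}}}}{\norm{\dualproj{T^{k-1}}\bm{\phi}_j}}.
\end{align*}
To bound the left-hand side I would condition on $\bvec{\Phi}_T$, note that $\bvec{r}^{k-1}$ is then deterministic, that for any $i\in \support\setminus T$ the column $\bm{\phi}_i$ is Gaussian and independent of $\bvec{\Phi}_{T^{k-1}}$ and $\bvec{r}^{k-1}$, and apply Lemma~\ref{lem:projected-joint-correlation} with $\bvec{u}=\bvec{r}^{k-1}/\norm{\bvec{r}^{k-1}}$ and $\bvec{z}=\bm{\phi}_i$. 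A union bound over $i\in \support\setminus T$ yields an upper bound of order $\norm{\bvec{r}^{k-1}}\epsilon$ with failure probability at most $(N-K)e^{-c(M-k)\epsilon^2}$.

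For the lower bound on the right-hand side, I would use the representation $\bvec{r}^{k-1}=\dualproj{T^{k-1}}\bvec{\Phi}_{T\setminus T^{k-1}}\bvec{x}_{T\setminus T^{k-1}}$ and rewrite the maximum as $\opnorm{\bvec{B}^T\bvec{r}^{k-1}}_\infty$, where $\bvec{B}$ is the normalized projected support matrix. Applying Lemma~\ref{lem:least-singular-value-lem3} together with the Davidson--Szarek estimate~\eqref{eq:zarek-singular-bound} gives $\sigma_{\min}(\bvec{B})$ bounded below by a constant with probability $1-e^{-cM}$ whenever $M\gtrsim K$. Combining $\opnorm{\bvec{B}^T\bvec{r}^{k-1}}_\infty\ge \sigma_{\min}(\bvec{B})\norm{\bvec{r}^{k-1}}/\sqrt{K-k+1}$ with the joint--correlation upper bound reduces the success condition to $\epsilon\sqrt{K}\lesssim 1$; choosing $\epsilon=\Theta(1/\sqrt{K})$ makes the per-iteration failure probability $(N-K)e^{-cM/K}$, and forcing this to be $\le\delta/K$ yields $M\gtrsim K\log(N/(K c(\delta)))$, matching the announced measurement bound up to the additive $K+1$ coming from the singular value conditioning.

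The main obstacles are two-fold. First, attaining $\log(N/K)$ rather than $\log N$ requires that the $K$ iteration factor be absorbed into $c(\delta)$ without inflating the enumeration over $\support\setminus T$; this is where the sharper exponent $(M-k-1)\epsilon^2/2$ in Lemma~\ref{lem:projected-joint-correlation} is essential, together with careful bookkeeping that ties $\epsilon$ to $1/\sqrt{K}$. Second, and more delicately, the first iteration has $\dualproj{T^0}=\bvec{I}$, so the denominator $\norm{\bm{\phi}_i}$ is a random quantity strongly coupled to the numerator $\abs{\inprod{\bm{\phi}_i}{\bvec{y}}}$, which invalidates the clean application of Lemma~\ref{lem:projected-joint-correlation}; the announced modification is likely to replace the OLS ratio rule at $k=1$ by the OMP inner-product rule (equivalent to treating the columns as deterministically unit-normalised once at the outset), which decouples the dependence and allows Lemma~\ref{lem:projected-joint-correlation} to apply uniformly across iterations. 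With these pieces in place, the induction closes and a final union bound over the $K$ iterations completes the proof.
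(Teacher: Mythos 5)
Your proposal follows essentially the same route as the paper: Tropp--Gilbert's greedy-selection-ratio argument adapted to OLS via Lemma~\ref{lem:OLS-selection-step}, with Lemma~\ref{lem:projected-joint-correlation} bounding the off-support correlations, Lemma~\ref{lem:least-singular-value-lem3} plus the Davidson--Szarek estimate controlling the on-support denominator through $\sigma_{\min}$, the $\ell_\infty\ge\ell_2/\sqrt{K}$ step, $\epsilon=\sigma_{\min}/\sqrt{K}$, and a final union/product bound yielding $M\gtrsim K\log(N/(c(\delta)K))$. The only point where your guess diverges is the ``slight modification'': the paper does not replace the $k=1$ rule by the OMP rule, but instead adds a dummy atom to the measurement ($\bvec{y}_1=\bvec{y}+\alpha\bm{\phi}_d$) and warm-starts OLS with $\bm{\phi}_d$ preselected as a $0$th iteration, so that every genuine iteration already involves a nontrivial projection and Lemma~\ref{lem:projected-joint-correlation} applies uniformly --- the same difficulty you identified, patched differently.
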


\begin{proof}
Our proof of this Theorem is inspired by the approached adopted by Tropp \cite{tropp2007signal}. The main innovation in our proof relies on the fact that the first iteration of OLS is essentially the same as that of OMP, where a column is chosen which has a maximum absolute correlation with measurement $\bvec{y}$. From second iteration onwards, the selection criteria of OLS is unique to itself. Let us consider the greedy selection rule for OLS from $2^{nd}$ iteration onwards.
\begin{equation}
\begin{split}
\rho(\mathbf{r}) &= \frac{\| (\mathbf{P_{T^{k}}^{\perp} {\Psi \bvec{D}_{1}})^{T}\bm{r_{k}}} \|_{\infty}}{\| (\mathbf{P_{T^{k}}^{\perp} {\mathbf{\Phi}_{S}\bvec{D}_{2}})^{T}\bm{r_{k}}} \|_{\infty}} = \frac{\max_{\bm{\psi}} \left| \langle \frac{\mathbf{P_{T^{k}}^{\perp} \bm{\psi}}}{\|\mathbf{P_{T^{k}}^{\perp} \bm{\psi}}\|_{2}},\bm{r_{k}} \rangle \right|}{\| (\mathbf{{P_{T^{k}}^{\perp}\mathbf{\Phi}_{S}\bvec{D}_{2}})^{T}\bm{r_{k}}} \|_{\infty}} \\
D_{1}(i) &= \frac{1}{\| \mathbf{P_{T^{k}}^{\perp}} \bm{\psi}_{i} \|_{2}} \\
D_{2}(j) &= \frac{1}{\| \mathbf{P_{T^{k}}^{\perp}}  \bm{\phi}_{j} \|_{2}} \textnormal {if $j \notin T^{k}$ otherwise 0} \\
\end{split}
\end{equation}
where $D_{k}(i)$ represents $i^{th}$ entry of a diagonal matrix for $k=1,2$
with $\mathbf{r}=\mathbf{y}$ where $ \mathbf{\Phi_{S}}$ is the matrix formed by the stacking the columns indexed by the true support set $S$. At the $k^{\mathrm{th}}$ iteration,the OLS algorithm chooses a column from the true support set $S$ if and only if $\rho(\bm{r_{k}}) < 1$. This can be shown aloing the same lines of the argument that Tropp \cite{tropp2004greed} used to show how the greedy selection rule $\rho(\mathbf{r}) < 1$ ensures the reconstruction of signal $\bvec{x}$ by the OMP algorithm in $K$ iterations under non-noisy conditions. The gist of the idea for OMP is to consider an imaginary and a real execution of the OMP algorithm. Starting with initial residual as $r_{0}=y$ and posing the induction arguments, Tropp shows that the greedy selection rule ensures that the OMP algorithm will identify the correct set of indices from the true support set $S$ of signal $\bvec{x}$. In our case we pose the same arguments as above for OLS algorithm but from the $2^{nd}$ iteration onwards and with a different greedy selection rule. 

Before proceeding with the proof, we propose a slight modification to the OLS Algorithm. The purpose of this modification will eventually become clear. We choose a dummy column say $\bm{\phi}_{d}$ whose distribution is same as that of the other columns and a signal value say $\alpha$. Then, we simply modify our measurement vector $y_{1}=y+\alpha\bm{\phi}_{d}$. Consequently, we can warm start the OLS algorithm with $\bm{\phi}_{d}$ as the column preselected for the first iteration and proceed with the rest of the iterations in the usual way OLS works. The effect of this modification is that now we are able to bypass the first iteration of OLS, whose selection criteria is same as that of OMP, and instead use the greedy selection criteria unique to OLS, from second iteration onwards. We call this as the $0^{th}$ iteration.

The rest of the iterations will continue until it has picked $K$ columns. 

The two conditional events we consider are,

\begin{itemize}
\item $\Sigma ={\sigma_{min}(\mathbf{\Phi_S}) > \sigma}$
\item $E_{1}$ =Success at the first iteration    
\end{itemize}

Conditioning on the above two events, we denote $\mathbb{P}(E_{S})$ as the overall probability of success of OLS i.e. OLS is able to recover $K$-sparse signal.

\begin{equation}
E_{S} = {\rho (\bm{r_{k}}) < 1 , \hspace{0.2cm} \textnormal{for} \hspace{0.2cm} k=1,2,3,4,...K}
\end{equation}


Recall Lemma.~\ref{lem:least-singular-value-lem2} to appreciate that, for any iteration $k=1,2,,\cdots,\ KK$, the event $\Sigma$ implies that $\sigma_{\min} (\mathbf{P_{T^{k}}^{\perp}\bm{\phi}_{(T^{k})^{c}}\bvec{D}}) > \sigma\sqrt{(1-\delta)}$. We denote $\sigma_{min}=\sigma \sqrt{(1-\delta)}$ Also, clearly,
\begin{equation}
\mathbf{P_{T^{k}}^{\perp}\bm{\phi}_{(T^k)^{c}}\bvec{D}} = \mathbf{({P_{T^{k}}^{\perp}\mathbf{\Phi}_{S}\bvec{D}_{2}})}
\end{equation}

To ensure success for iterations $k = 1,2,\cdots,\ K $ with residual $\bm{r_{k}}$, we require,

\begin{equation}
\rho(\bm{r_{k}}) < 1 \implies \frac{\max_{\mathbf{\bm{\psi}}} \left| \langle \frac{\mathbf{P_{T^k}^{\perp} \bm{\psi}}}{\|\mathbf{P_{T^k}^{\perp} \bm{\psi}}\|_{2}},\bm{r_{k}} \rangle \right|}{\| (\mathbf{{P_{T^k}^{\perp}\mathbf{\Phi}_{S}D_{2}})^{T}\bm{r_{k}}} \|_{\infty}} < 1
\end{equation}

At any iteration $k$, assuming all previous iterations were successful, $\bm{r_{k}}$ lies in the span of $\mathbf{P_{T^k}^{\perp} \mathbf{\Phi}_{S}D_{2}}$, so that $\| \mathbf{P_{T^k}^{\perp} \mathbf{\Phi}_{S}D_{2}r_{k}} \|_{2} \ge \sigma_{\min} \| \bm{r_{k}}\|_{2}$
Thus, 
\begin{equation}
\| \mathbf{(P_{T^k}^{\perp} \mathbf{\Phi}_{S} D_{2})^{T}r_{k}} \|_{\infty} \ge \frac{\| \mathbf {( P_{T^k}^{\perp} \mathbf{\Phi}_{S} D_{2} )^{T} \bm{r}_{k} } \|_{2} }{\sqrt(K)} \ge \frac{ \sigma_{\min} \| \bm{r_{k}}\|_{2} }{\sqrt{K}}
\end{equation}
\begin{equation}
\begin{split}
&\mathbb{P}(\rho(\bm{r_{k}}) < 1 \hspace{0.1cm} \forall \hspace{0.1cm} k=1,2,3,4....K)\\
&\ge \mathbb{P}\left( \max_{k} \frac{\sqrt{K} \max_{\mathbf{\bm{\psi}}} \left| \langle \frac{\mathbf{P_{T^k}^{\perp} \bm{\psi}}}{\|\mathbf{P_{T^k}^{\perp} \bm{\psi}}\|_{2}},\bm{r_{k}} \rangle \right| }{ \| \mathbf {( P_{T^k}^{\perp} \mathbf{\Phi}_{S} D_{2} )^{T} r_{k} } \|_{2}} < 1 \right) \\
&\ge \mathbb{P}\left( \max_{k}   \max_{\mathbf{\bm{\psi}}} \left| \langle \frac{\mathbf{P_{T^k}^{\perp} \bm{\psi}}}{\|\mathbf{P_{T^k}^{\perp} \bm{\psi}}\|_{2}},\mathbf{\frac{r_{k}}{\| r_{k} \|_{2}}} \rangle \right| < \frac{\sigma_{\min}}{\sqrt(K)} \right)\\
\end{split}
\end{equation}



Using stochastic independence of columns of matrix $\mathbf{\Phi}$, for $\mathbb{P}(\rho(r_{k}) < 1)$ for $k=1,2,\cdots,\ K$, and defining $\bm{u_k}=\mathbf{\frac{r_{k}}{\| r_{k}\|_{2} }}$ and $\epsilon=\frac{\sigma_{\min}}{\sqrt{K}}$, we express the lower bound as 
\begin{align}
\label{eq:joint-rpoduct}
\ & \prod_{\mathbf{\bm{\psi}}}\mathbb{P}\left( \max_{k} \left| \left\langle \frac{\mathbf{P_{T^k}^{\perp} \bm{\psi}}}{\|\mathbf{P_{T^k}^{\perp} \bm{\psi}}\|_{2}},\mathbf{\frac{r_{k}}{\| r_{k} \|_{2}}} \right\rangle \right| < \epsilon \right)
\end{align}
Now, recall Lemma.~\ref{lem:projected-joint-correlation} to find that \begin{align}
\  & \mathbb{P}\left( \max_{k} \left| \left\langle \frac{\mathbf{P_{T^k}^{\perp} \bm{\psi}}}{\|\mathbf{P_{T^k}^{\perp} \bm{\psi}}\|_{2}},\mathbf{\frac{r_{k}}{\| r_{k} \|_{2}}} \right\rangle \right| < \epsilon \right)\nonumber\\
\  & \ge 1-\sum_{k=1}^K e^{\frac{-m(M-k-1)\epsilon^2}{2}}\nonumber\\
\  & =1-K\exp\left(-\frac{\sigma_{\min}^2\sqrt{M_1}}{K}\right) e^{-\frac{M_1\sigma_{\min}^2}{2K}}
\end{align}
where $M_{1}=M-K-1$ (again slightly abusing notation used earlier)
Thus, we have,
\begin{align}
\label{eq:prob-succ-uncorrelated-dcitionary}
\mathbb{P}(E_{s}) \ge \left(1-K e^{-\frac{\sigma_{\min}^2\sqrt{M_1}}{K}}e^{-\frac{M_1\sigma_{\min}^2}{2K}}\right)^{N-K}(1-e^{-cM})
\end{align}
To complete the proof, we need to find bounds on the number of measurements $M$, that will allow recovery with high probability. 

To that end, first note that, whenever $x\ge 0,\ (1-x)^k\ge 1-kx,\ \forall k\in \mathbb{N}$. Thus, we can write, \begin{align*}
\mathbb{P}(E_S)\ge 1-K(N-K)e^{-\frac{\sigma_{\min}^2\sqrt{M_1}}{K}}e^{-\frac{M_1\sigma_{\min}^2}{2K}}-e^{-cM}
\end{align*}    
Further observe that $e^{-\frac{\sigma_{\min}^2\sqrt{M_1}}{K}}\le K/(\sigma_{\min}^2\sqrt{M_1})$, and  allowing to further simplify the probability expression as \begin{align*}
\mathbb{P}(E_S)\ge 1-\frac{K^2(N-K)}{\sigma_{\min}^2\sqrt{M_1}}e^{-\frac{M_1\sigma_{\min}^2}{2K}}-e^{-cM}
\end{align*}
Now, use assumptions, $M>2K,\ N>K\sqrt{K}$, and the simple fact that $K^2(N-K)<N^3$, to get \begin{align*}
\mathbb{P}(E_S) & \ge 1-\frac{N^3}{\sigma_{\min^2}\sqrt{K}}e^{-\frac{M_1\sigma_{\min}^2}{2K}}-e^{-cM}\\
\ & \ge 1-1/\sigma_{\min}^2\left(\frac{N}{K}\right)^8e^{-\frac{M_1\sigma_{\min}^2}{2K}}-e^{-cM}
\end{align*}
Now, the third term can be absorbed into the second, probably by some change in constants, to produce the following simplified expression, \begin{align*}
\mathbb{P}(E_S)\ge 1-c_1\left(\frac{N}{K}\right)^8e^{-\frac{M_1\sigma_{\min}^2}{2K}}
\end{align*}
It takes little effort to see now that the failure probability can be upper bounded by $\delta$ where $\delta$ is some small constant $\delta\in (0,1)$, if $M_1>CK\ln (N/({c_2(\delta) K}))$, where $C$ is some suitably chosen constant, and $c_2$ is some constant that depends only on $\delta$.

\end{proof}
The process of choosing the constant $C$ in the proof of Lemma.~\ref{lem:least-singular-value-lem3} can be made more rigorous to get rough estimates of $C$. See Appendix.~\ref{sec:constant-C-lem:uncorrelated-prob-bound} for details.
\section{Experiments}
Several sets of numerical experiments are carried out to verify the claims presented in this paper.
In the first experiment we verify the result of Theorem~\ref{thm:uncorrelated-dcitionary-recovery-probability} by an experiment in which we empirically calculated the number of measurements required by the OLS algorithm to recover a sparse signal of dimension $N$ with a probability equal to say $0.95$.  The experiment was performed with for $N=1024$ and $N=3000$. Figure \ref{fig:measurement_vs_sparsity} shows the accuracy of the estimates. The solid line in each of the figures is drawn after estimation of the constant $C$. We can easily see that the solid line matches reasonably well to the actual data points. However, it is worth sating here that our calculation of the associated constant $C$ (see Appendix.~\ref{sec:constant-C-lem:uncorrelated-prob-bound} is too high as compared to the constant that is observed empirically. Nevertheless, we are able to show that number of measurements required required by OLS is of the order of $K\log(N/K)$ which is an improved result as compared to the previous results \cite{tropp2007signal} for greedy algorithms like OMP which show that measurements required are of the order of $K\log N$.

In the second experiment we compare the numerical simulation results for percentage of signals recovered vs number of measurements $M$ for $N=1024$, against theoretical estimates of lower bounds of probability of success, as found in the proof of Theorem.~\ref{thm:uncorrelated-dcitionary-recovery-probability}. From the plots in Figure.~\ref{fig:prob-recovery-vs-measurements} we see that though the shape of the lower bound on recovery probability curve, as estimated from theory, matches well with the simulations, the actual values have a large gap. We attribute the large gap between the empirical and the theoretical values to the analysis technique and the constants produced thereof. We admit that this is an intrinsic limitation of our analysis approach, which was also the case for Tropp's result on OMP~\cite{tropp2007signal}, where he listed a few such limitations and possible reasons of those arriving from the analysis technique.

Finally, the last experiment is done to verify the runtime complexity of OLS, as claimed in Section.~\ref{sec:OLS-runtime-complexity} of the algorithm as $\mathcal{O}(KMN)$. A clever modification to the implementation of OLS algorithm as suggested previously makes the algorithm run in linear time wrt. to $K,M$ and $N$. The experimental results in Fig.~\ref{fig:running_complexity_ols} shows the correspondence between theory and experiment.
\label{sec:experiments-uncorrelated-dictionary}
\begin{figure}[t!]
\centering
  \includegraphics[height=2.5in,width=3.5in]{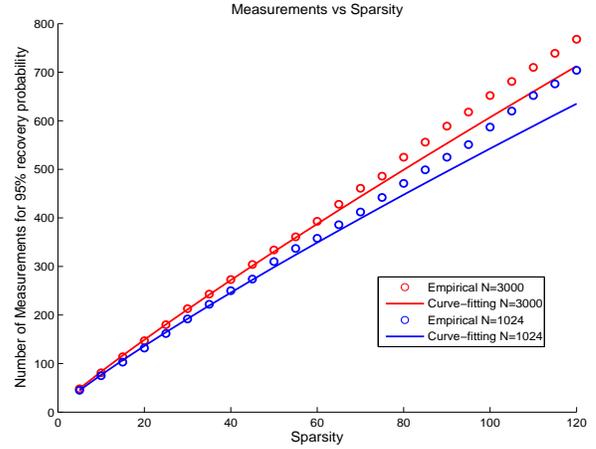}
  \caption{Measurements for 95 percent recovery probability with $N=3000$}
  \label{fig:measurement_vs_sparsity}
\end{figure}
\begin{figure}[ht!]
\centering
  \includegraphics[height=2.5in,width=3.5in]{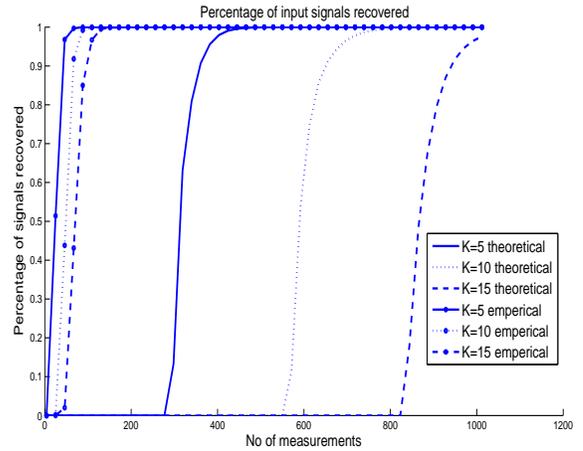}
  \caption{Probability of recovery vs no. of measurements}
  \label{fig:prob-recovery-vs-measurements}
\end{figure}
\begin{figure}[ht!]
\centering
  \includegraphics[height=2.5in,width=3.5in]{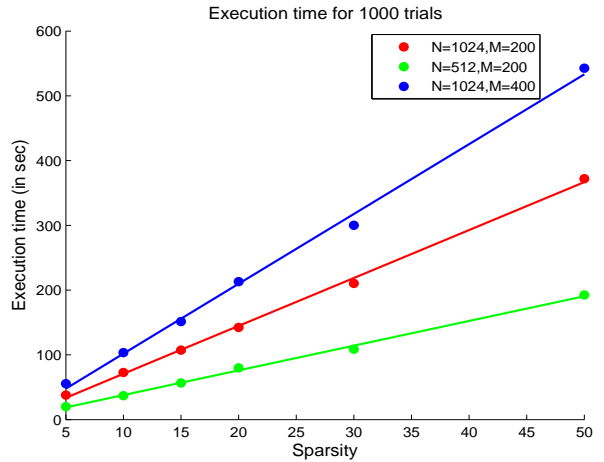}
  \caption{Execution time in seconds for OLS}
  \label{fig:running_complexity_ols}
\end{figure}
\section{Signal recovery using OLS with Hybrid Dictionaries}
In the preceding sections it was shown, both theoretically and empirically that, Orthogonal Least Squares algorithm can efficiently recovers a sparse signal when the measurement matrix consists of i.i.d Gaussian entries. A huge amount of research work has been devoted to the sparse signal representation/recovery of signal when the associated measurement matrix/dictionary satisfies strong RIP bounds. If a matrix satisfies RIP of an order $K$, it is an indicator of the fact that every $K$ columns of the matrix are almost orthogonal to each other. However, RIP is a measure used to assist worst case analysis of recovery algorithms. Average case analysis, on the other hand, studies the performance of recovery algorithms in a Monte-Carlo setup. From these average case performance plots, It has been noted in the literature, that recovery algorithms often can perform significantly well even in the presence of sensing matrices which do not satisfy the RIP bounds established by the worst case analysis. Soussen et. al.~\cite{soussen2013joint} has talked about this kind of matrices and has empirically shown that OLS is a better recovery algorithm than OMP for these kind of matrices. The measurement matrix $\mathbf{\Phi_{i}}$ has entries $\bm{\phi_{i}} = u_{i}\bm{1}+\bm{n}$ where $\bm{\phi_{i}}$ represents the $i^{th}$ column of the matrix, $u_{i}$ represents a uniformly distributed random variable from 0 to T and $\mathbf{n}$ is a vector whose entries are i.i.d gaussian distributed. Thus,$\mathbf{\Phi} \in \mathbb{R}^{M \times K}$
\begin{equation}
\mathbf{\Phi}=\mathbf{1}\mathbf{u}^{T}+\mathbf{N}
\end{equation}
where $\mathbf{u}=\lbrace u_{i} \rbrace_{i=1}^{K}$ is vector of i.i.d uniformly distributed random variables while $N$ is matrix containing i.i.d Gaussian entries. However, there is nothing special about the vector $\mathbf{1}$, we can choose any other vector in place of it.\\

It is important to note here that before using a greedy algorithm like OLS or OMP to recover sparse signal from hybrid dictionary, it is mandatory to normalize the columns since norm varies highly from column to column unlike in the case of gaussian dictionaries where the norm of a column is highly concentrated around the mean. Since OMP/OLS relies on the criteria of using inner product to select a column, a column with a large norm will be highly probable to be selected even though it is not a part of support set. So, in a normalized Hybrid matrix every column $\bm{\phi_{i}}=\alpha_{i}(u\bm{1}+\bm{n})$ where $\alpha_{i}=1/\|u\bm{1}+\bm{n} \|_{2}$. For signal recovery, a normalized Hybrid matrix is used to identify the columns in the support set and then original hybrid matrix is used to calculation signal $\bm{x}$

\subsection{Motivation of studying signal recovery with Hybrid dictionaries}
Soussen et. al. \cite{soussen2013joint} have extended the idea of Tropp's \cite{tropp2004greed} Exact Recovery Condition (ERC) on OMP to ERC on OLS. They found the ERC at any step for OMP and OLS. If ERC condition holds true at a certain iteration, it can be shown that the algorithm will be able to successfully recovery the rest of the columns without any uncertainty i.e with Probability 1.\\

For Gaussian dictionaries Soussen et. al. have empirically shown that the probability ERC is met at any iteration is same for both OMP and OLS but for Hybrid dictionaries, ERC can be guaranteed to hold with a high probability at very early iteration for OLS than for OMP. In other words that likelihood that OLS identifies correct column from the support set increases with every iteration.

Though, the normalized Hybrid measurement matrix as defined above consists of independently generated random columns but they are structured. Each and every column of the matrix is distributed around a single vector which is $\bm{1}$ in our case. Secondly, it can be shown the smallest singular value of this matrix is indeed very close to zero with high probability i.e we can find a vector $u$ such that $\mathbf{\Phi}\bm{u} \approx 0$.

As the probability of identifying correct column from the support set increases with every iteration for OLS, later in this paper, we'll show how OLS can be utilized to recover correct support set by simply running the algorithm for more than $K$ number of iterations.

\subsection{A note on Soussens et al \cite{soussen2013joint}}
Soussen's et al \cite{soussen2013joint} have inspired us to study deeply about OLS and its recovery performance with Hybrid Measurement matrix. Some of the chief contributions of the Soussen's paper are 

\begin{itemize}
\item
Extension of Tropp's ERC-OMP to any arbitrary iteration for both OMP and OLS 
\begin{equation}
\max_{j\notin Q^{*}}F^{Oxx}_{Q^{*},Q}(\mathbf{a_{j}}) < 1 
\end{equation}
with Card(Q)=$q(<k)$ where k is the actual sparsity of the signal to be recovered.
\item
With Hybrid dictionaries,the phase transition curve for OLS was empirically shown to be significantly higher than OMP.
\end{itemize}
This establishes the fact that the once the OLS reaches a given iteration and is able to recover a proper subset of support-set, it is guaranteed that it will be able to recover complete support set.


This has inspired us to study what is so special about the OLS algorithm that it is relatively  better than OMP for recovering the sparse signal especially when the columns of the dictionaries are highly coherent as in Hybrid matrix. Also, as stated earlier, the smallest singular value of the Hybrid matrix is very close to zero with high probability.


In the following section, we have recreated the experiments to analyze the behavior of OLS and OMP for Hybrid dictionaries.In the subsequent sections we'll also provide a mathematical explanation of the experiments.

\subsection{Experiments}
Here, we will discuss about the experiments which were done to study performance of Orthogonal Least Squares and Orthogonal Matching pursuit for Hybrid dictionaries as well as Gaussian Dictionaries. We consider here a hybrid dictionary $M \times N$ where $N=256$. we choose a fixed sparsity say $K=12$. For every $M$, we perform 1000 trials where we generate random measurement matrices and use these matrices for measuring a randomly generated K(=12) sparse signal under non-noisy conditions. Then we use OLS as well as OMP algorithm to recover the signal. The probability of recovery is calculated as
\begin{equation}
\begin{split}
&\textnormal{Probability of recovery}\ = \\
&\frac{\textnormal{Percentage signal recovered successfully}}{\textnormal{Number of trials}}
\end{split}
\end{equation}
In the above experiments, we have calculated conditional success probability for both OLS and OMP with Gaussian and Hybrid Dictionaries. We know that both OLS and OMP algorithm goes through $K(=12)$ iterations. At every iteration, the algorithms select a particular column from the measurement matrix. The algorithm is said to be successful if it chooses correct column at every iteration. While doing these experiments we have empirically calculated $P(S_{i})$ for every iteration $i=1,2\cdots,\ K$ where $P(S_{i})$ where $P(S_{i})$ denotes success at all iterations from $j=1,2,3,\cdots,\ i$. Therefore, we have $P(S_{i}|S_{i-1}))=\frac{P(S_{i} \cap S_{i-1})}{P(S_{i-1})}=\frac{P(S_{i})}{P(S_{i-1})}$. Effectively, $P(S_{i}|S_{i-1})$ denotes conditional probability that the $i^{th}$ iteration is successful given that previous iterations were all successful.

One can see from the figure\ref{fig:OMP_OLS_with_Gaussian} that for Gaussian Dictionaries, both for OLS and OMP, one can see the curves $P(S_{i}|S_{i-1}) vs M $ is continuously increasing with the iteration and achieves the value 1 subsequently for Measurements closer to $N$(dimension of the signal).The final figure shows the overall recovery performance for both OMP and OLS. We can see OLS is superior to OMP in terms of recovery probability but the difference is not appreciable.\\

While, for the figure \ref{fig:OMP_OLS_with_Hybrid}, which is same experiment as above but for Hybrid Dictionaries, one can see that curve$P(S_{i}|S_{i-1}) vs M$ shows an increasing trend with increasing $i$ only for OLS algorithm while it is not the case with the OMP algorithm. Going by this trend one can anticipate that there should exist an iteration after which OLS successfully recovers all columns from true support set with probability 1. We found this to be true but only for OLS while we were unable to find any such step/iteration after which OMP is successful with probability 1. In simple words, the implication of this observation is that there exist an iteration (say j) that once OLS is successful at all the iterations prior to j it is guaranteed that it will be successful at all the subsequent iterations i.e $j,j+1,\cdots,\ K$ (with probability 1) while no such iteration exists for OMP. Infact one can show analytically that if OLS is successful till $K-1$ iterations, it will definitely 
be successful at the last iteration i.e. $K_{th}$ iteration. 


\begin{figure}[t!]
\centering
\begin{subfigure}{0.5\textwidth}
  \centering
  \includegraphics[height=2.5in,width=3.5in]{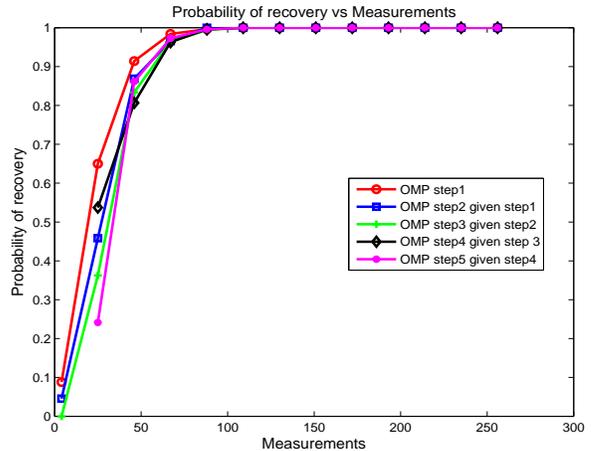}
  \caption{OMP with Gaussian Dictionary}
  \label{fig:Gaussian_OMP}
\end{subfigure}%
\hfill
\begin{subfigure}{0.5\textwidth}
  \centering
  \includegraphics[height=2.5in,width=3.5in]{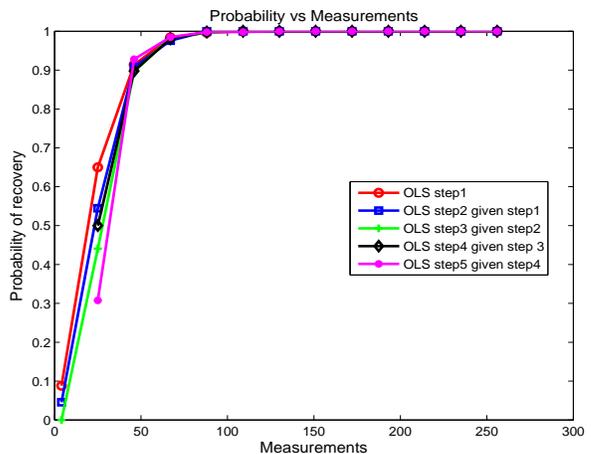}
  \caption{OLS with Gaussian Dictionary}
  \label{fig:Gaussian_OLS}
\end{subfigure}
\caption{Conditional success Probability for OMP and OLS with Gaussian Dictionary}
\label{fig:OMP_OLS_with_Hybrid}
\end{figure}

\begin{figure}[t!]
\centering
\begin{subfigure}{0.5\textwidth}
  \centering
  \includegraphics[height=2.5in,width=3.5in]{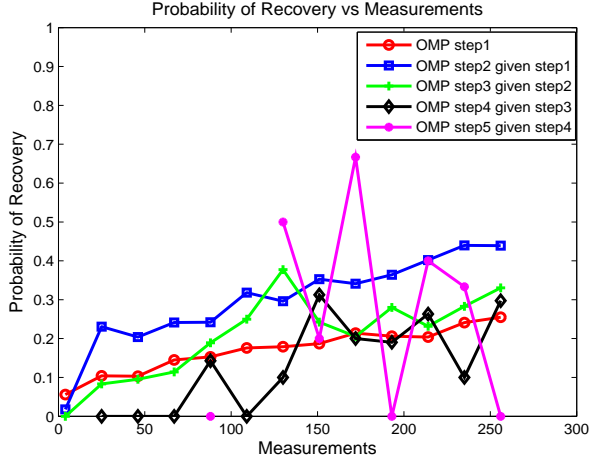}
  \caption{OMP with Hybrid Dictionary, T=100}
  \label{fig:Hybrid_OMP}
\end{subfigure}%
\hfill
\begin{subfigure}{0.5\textwidth}
  \centering
  \includegraphics[height=2.5in,width=3.5in]{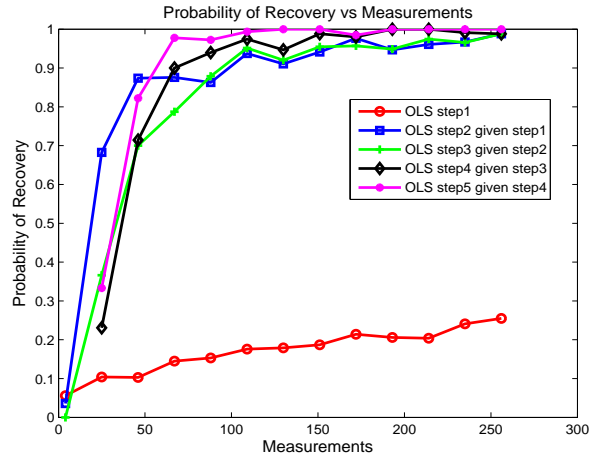}
  \caption{OLS with Hybrid Dictionary, T=100}
  \label{fig:Hybrid_OLS}
\end{subfigure}
\caption{Conditional success Probability for OMP and OLS with Hybrid Dictionary}
\label{fig:OMP_OLS_with_Gaussian}
\end{figure}


\subsection{Aanlytical justification for the empirically observed phenomenon}
\label{sec:ols-hybrid-dictionary}
In this section, we provide an explanation of the phenomenon observed in the above set of experiments. We have observed in the above experiments with Hybrid dictionary probability of success of $2^{nd}$ iteration and beyond (in OLS) conditioned on the event that the previous iteration is successful continuously increases which is not the case with OMP. Also, there exist an iteration such that if OLS is successful till that iteration, it is guaranteed to be successful for the subsequent iterations thus recovering the entire support set successfully.

We will give the justification of these empirically oberved phenomenon with respect to the generalized hybrid dictionary, defined in Section.~\ref{sec:random-measurement-ensemble}. This is achieved by first stating a series of lemmas, that will be needed to give the proof of our main theorem on the performance of OLS in generalized hybrid dictionaries.

Recall that $W_r=\spn{\bvec{a}_1,\cdots,\ \bvec{a}_r}$. A natural question arises, whether there is a collection of $r$ such columns of the generalized hybrid dictionary, that can span $W_r$ with high probability. The following lemma shows that this indeed is the case, though the probability of happening depends highly on the level of correlation, which is a function of $T$:
\begin{lem}
\label{lem:prob_hybrid_vector_align}
Let $\{\bm{\phi}_i/\norm{\bm{\phi}_i}\}_{i=1}^N$ be a collection of $N$ normalized columns forming a generalized hybrid dictionary, with orthonormal basis $\{\bvec{a}_i\}_{i=1}^r$, and parameter $T>0$ (see Definition.~\ref{defn:hybrid}). Define the events $$A_{ij}:=\left\{\left|\inprod{\frac{\boldsymbol{\phi}_i}{\norm{\boldsymbol{\phi}_i}}}{\bvec{a}_j}\right|\ge 1-\delta\right\}$$ $\forall\ 1\le  i\le N,\ 1\le j\le r$. Then, 
 \begin{align}
\mathbb{P}\left(A_{ij}\right)\ge :p(\delta)\ \forall i,j
\end{align}
where \begin{align}
\lefteqn{p(\delta)}& &\\
\ & = \sup_{\tiny{\sigma>M-1+(r-1)T^2} }2\left\{\left(1-e^{g(\sigma)/2}\right)\mathbb{E}_u\left(Q\left(\sqrt{M}(\sqrt{\sigma}\delta_1 -u)\right)\right)\right\}
\end{align}
where $u\sim\mathcal{U}[0,T),\ \delta_1^2=\frac{(1-\delta)^2}{1-(1-\delta)^2}$, 
\begin{align*}
g(\sigma)&=-\sigma-(r-1)T^2+\left(h(\sigma)-(M-1)\ln\left(\frac{M-1+h(\sigma)}{2\sigma}\right)\right)
\end{align*}
and \begin{align*}
h(\sigma)=\sqrt{(M-1)^2+4\sigma(r-1)T^2}
\end{align*}
\end{lem}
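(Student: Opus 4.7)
The starting point is to extend $\{\bvec{a}_1,\ldots,\bvec{a}_r\}$ to an orthonormal basis of $\real^M$, writing $v_k := \inprod{\bvec{n}_i}{\bvec{a}_k}$ for $k=1,\ldots,r$ and letting $\bvec{w}$ denote the component of $\bvec{n}_i$ in $W_r^\perp$. Rotational invariance of $\mathcal{N}(\bvec{0},M^{-1}I_M)$ makes $v_1,\ldots,v_r,\bvec{w}$ mutually independent, with $v_k\sim\mathcal{N}(0,M^{-1})$, $M\norm{\bvec{w}}^2\sim\chi^2_{M-r}$, and all jointly independent of the $\{u_{ik}\}$. Orthonormality of the $\bvec{a}_k$ then gives $\inprod{\bm{\phi}_i}{\bvec{a}_j}=v_j+u_{ij}$ and
\begin{align*}
\norm{\bm{\phi}_i}^2=(v_j+u_{ij})^2+B,\qquad B:=\sum_{k\ne j}(v_k+u_{ik})^2+\norm{\bvec{w}}^2.
\end{align*}
A direct rearrangement of $\abs{\inprod{\bm{\phi}_i/\norm{\bm{\phi}_i}}{\bvec{a}_j}}^2\ge (1-\delta)^2$ shows that $A_{ij}$ is exactly the ratio event $(v_j+u_{ij})^2\ge \delta_1^2 B$.

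Next, I would condition on $u_{ij}=u$ and on the event $\{B\le \sigma\}$, using the fact that $v_j$ is independent of both. On this event, the stronger inequality $|v_j+u|\ge \delta_1\sqrt{\sigma}$ implies $(v_j+u)^2\ge \delta_1^2 B$, so a lower bound for the conditional probability follows from a standard Gaussian tail: since $v_j\sim\mathcal{N}(0,M^{-1})$, each of the two sub-events $\{v_j+u\ge \delta_1\sqrt{\sigma}\}$ and $\{v_j+u\le -\delta_1\sqrt{\sigma}\}$ is controlled by a Q-function evaluated at $\sqrt{M}(\sqrt{\sigma}\delta_1-u)$ (up to sign of $u$), and summing the two tails and averaging over $u\sim\mathcal{U}[0,T)$ yields the leading $2\,\mathbb{E}_u[Q(\sqrt{M}(\sqrt{\sigma}\delta_1-u))]$ factor.

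The principal technical step is the tail bound $\mathbb{P}(B>\sigma)\le e^{g(\sigma)/2}$, which supplies the $(1-e^{g(\sigma)/2})$ factor. Conditional on $\{u_{ik}\}_{k\ne j}$, the rescaled variable $MB$ is a noncentral chi-square with $M-1$ degrees of freedom and noncentrality parameter $\lambda$ proportional to $\sum_{k\ne j}u_{ik}^2$; its moment generating function has the closed form $(1-2t)^{-(M-1)/2}\exp(\lambda t/(1-2t))$ for $t<1/2$. The plan is to apply Chernoff's bound and optimize over $t$. With the substitution $u=1/(1-2t)$ the stationarity condition is a quadratic in $u$ whose discriminant, after the deterministic worst-case bound on $\lambda$ supplied by $0\le u_{ik}<T$, is exactly $(M-1)^2+4\sigma(r-1)T^2=h(\sigma)^2$. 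Substituting the optimal $t$ back into the Chernoff exponent and grouping the $\sigma$-linear term, the $h(\sigma)$ correction, and the logarithmic piece reproduces $g(\sigma)$ in the form stated. The range $\sigma>M-1+(r-1)T^2$ is precisely the regime in which the optimizing $t$ lies inside $(0,1/2)$ and $g(\sigma)<0$, so that the tail bound is nontrivial.

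Putting the pieces together via the tower property and the independence of $v_j$ from $(u_{ij},B)$,
\begin{align*}
\mathbb{P}(A_{ij})\ \ge\ \mathbb{P}(B\le\sigma)\cdot\mathbb{P}\bigl((v_j+u_{ij})^2\ge \delta_1^2\sigma\bigr)\ \ge\ \bigl(1-e^{g(\sigma)/2}\bigr)\cdot 2\,\mathbb{E}_u\bigl[Q(\sqrt{M}(\sqrt{\sigma}\delta_1-u))\bigr]
\end{align*}
holds for every admissible $\sigma$, and taking the supremum recovers $p(\delta)$. The main obstacle will be the Chernoff computation in the third paragraph: tracking the optimal $t$ through the algebra, keeping the worst-case bound on the noncentrality parameter just tight enough to produce the clean $h(\sigma)$ discriminant, and verifying that the collected exponent matches the precise $g(\sigma)$ displayed in the statement.
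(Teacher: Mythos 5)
Your proposal is correct and follows essentially the same route as the paper's own proof: rotating into the basis $\{\bvec{a}_k\}$ so that $\inprod{\bm{\phi}_i}{\bvec{a}_j}$ becomes a shifted Gaussian coordinate and the denominator event becomes a noncentral $\chi^2$ tail, splitting via $\mathbb{P}(A_{ij})\ge\mathbb{P}(B\le\sigma)\,\mathbb{P}((v_j+u_{ij})^2\ge\delta_1^2\sigma)$, bounding the first factor by a Chernoff/MGF optimization with the worst-case noncentrality $(r-1)T^2$ (which is exactly where the paper invokes its monotonicity Lemma for $g_{a,b}$ and the constraint $\sigma>M-1+(r-1)T^2$), and the second by Gaussian tails averaged over $u$. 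The only caveat, which you share with the paper, is that the two-sided tail of $\psi_{ij}\sim\mathcal{N}(u,1/M)$ equals $Q(\sqrt{M}(\delta_1\sqrt{\sigma}-u))+Q(\sqrt{M}(\delta_1\sqrt{\sigma}+u))$, so writing it as $2Q(\sqrt{M}(\delta_1\sqrt{\sigma}-u))$ overstates it slightly rather than lower-bounding it.
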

\begin{proof}
The proof is postponed to Appendix.~\ref{sec:proof_lem_hybrid_vector_align}
\end{proof}
\begin{lem}
\label{lem:hybrid_subspace_align}
 Let $L$ be a positive integer such that $r\le L\le N$. Then it follows that, whenever $\delta \in (0,0.293)$, and $p(\delta)\in [0,1/r]$,\begin{align}
 &\mathbb{P}\left(\mbox{\emph{in every collection of }}L \ \mbox{\emph{columns, }}\right.\nonumber\\
 &\left.\exists \mbox{\emph{at least one set of $r$ columns, indexed by}}\right. \\
 &\left.\ i_1,i_2,\cdots,\ i_r, \ \mbox{\emph{such that the event}} \right.\nonumber\\
&  \left.\ A_{i_11}\cap A_{i_22}\cap\cdots\cap A_{i_rr}\ \mbox{\emph{takes place}}\right)\nonumber\\
& \ge \sum_{j=0}^r (-1)^j \binom{r}{j}(1-jp(\delta))^L
 \end{align}
\end{lem}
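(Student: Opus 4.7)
The plan is to reduce the existence of the system of distinct representatives $(i_1,\dots,i_r)$ to the non-emptiness of $r$ automatically-disjoint ``aligned-column'' sets, and then to lower bound the probability of that event by coupling with an exactly-Bernoulli model and using exact inclusion--exclusion.

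First I would isolate the purpose of the hypothesis $\delta<0.293\approx 1-1/\sqrt{2}$: by Bessel's inequality applied to the orthonormal system $\{\bvec{a}_j\}_{j=1}^r$ and the unit vector $\boldsymbol{\phi}_i/\norm{\boldsymbol{\phi}_i}$, one has $\sum_{j=1}^r\inprod{\boldsymbol{\phi}_i/\norm{\boldsymbol{\phi}_i}}{\bvec{a}_j}^2\le 1$. Hence at most one of these inner products can be of magnitude at least $1-\delta$ whenever $2(1-\delta)^2>1$, which is precisely $\delta<1-1/\sqrt{2}$. Fix an arbitrary collection of $L$ column indices (by exchangeability the target probability does not depend on the choice) and set $S_j:=\{i:A_{ij}\text{ holds}\}$. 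The above observation shows the $S_j$'s are pairwise disjoint, so a valid SDR $(i_1,\dots,i_r)$ with $i_j\in S_j$ exists \emph{iff} every $S_j$ is non-empty. Thus the target reduces to lower-bounding $\mathbb{P}(S_1\ne\emptyset,\dots,S_r\ne\emptyset)$.

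Next I would couple the family $\{A_{ij}\}$ down to a tractable one. Lemma~\ref{lem:prob_hybrid_vector_align} gives $q_{ij}:=\mathbb{P}(A_{ij})\ge p:=p(\delta)$, and for fixed $i$ the $A_{ij}$'s are disjoint with $\sum_j q_{ij}\le 1$. For each $i$ independently, thin the atom of state $j$ from mass $q_{ij}$ to the common mass $p$ by flipping an independent auxiliary coin with success probability $p/q_{ij}$; this produces events $\tilde A_{ij}\subseteq A_{ij}$ that are still disjoint for fixed $i$, independent across $i$, and satisfy $\mathbb{P}(\tilde A_{ij})=p$ exactly. This construction is feasible iff $p\le q_{ij}$ (free) and $rp\le 1$, which is precisely the hypothesis $p(\delta)\le 1/r$. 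Setting $\tilde S_j:=\{i:\tilde A_{ij}\text{ holds}\}\subseteq S_j$, we get $\mathbb{P}(\text{all }S_j\ne\emptyset)\ge\mathbb{P}(\text{all }\tilde S_j\ne\emptyset)$.

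Finally, in the thinned model, for any $J\subseteq\{1,\dots,r\}$ the within-column disjointness gives $\mathbb{P}\bigl(i\in\bigcup_{j\in J}\tilde A_{ij}\bigr)=|J|p$, and independence across $i$ yields $\mathbb{P}\bigl(\bigcap_{j\in J}\{\tilde S_j=\emptyset\}\bigr)=(1-|J|p)^L$ exactly. Exact inclusion--exclusion on $\mathbb{P}\bigl(\bigcup_{j=1}^r\{\tilde S_j=\emptyset\}\bigr)$ then delivers $\mathbb{P}(\text{all }\tilde S_j\ne\emptyset)=\sum_{k=0}^r(-1)^k\binom{r}{k}(1-kp)^L$, which is the claimed lower bound. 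The main obstacle I anticipate is the coupling step: one must verify that the heterogeneous marginals $q_{ij}\ge p$ can be thinned simultaneously to the common value $p$ while preserving both within-column disjointness and cross-column independence, and that the hypothesis $p(\delta)\le 1/r$ is exactly the admissibility condition for this construction; everything else is bookkeeping around Bessel and inclusion--exclusion.
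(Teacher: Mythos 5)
Your proof is correct, and it shares the combinatorial skeleton of the paper's argument --- the key observation that $\delta<1-1/\sqrt{2}$ forces $A_{i1},\dots,A_{ir}$ to be pairwise disjoint for each fixed $i$ (the paper's Lemma~\ref{lem:condition_mutual_independence_Aij}, which you rederive via Bessel's inequality), independence across $i$, and inclusion--exclusion over the events $\{S_j=\emptyset\}$ --- but it diverges in how the inequality $\mathbb{P}(A_{ij})\ge p(\delta)$ is converted into the stated bound. The paper observes that by symmetry all $A_{ij}$ share a common probability $p\ge p(\delta)$, identifies $(X_1,\dots,X_r,\,L-\sum_j X_j)$ as multinomial with parameters $(p,\dots,p,1-rp)$, evaluates $\mathbb{P}(X_1\ge 1,\dots,X_r\ge 1)$ exactly as the polynomial $P(L,p,r)$, and then needs the separate analytic Lemma~\ref{lem:increasing_fn_p} (monotonicity of $P(L,\cdot,r)$ on $[0,1/r]$, proved by an inductive derivative identity) to pass from $p$ to $p(\delta)$. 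Your thinning coupling replaces that last step: shaving each atom $A_{ij}$ down to mass exactly $p(\delta)$ with auxiliary coins makes the inclusion--exclusion identity hold with $p(\delta)$ on the nose, and the inequality then comes for free from $\tilde A_{ij}\subseteq A_{ij}$. This buys a slight generalization (the marginals need only be bounded below by $p(\delta)$, not equal) and lets you dispense with Lemma~\ref{lem:increasing_fn_p} entirely; the paper's route, in exchange, isolates the monotonicity of the polynomial as a reusable fact, which it invokes again for $Q(L,p,r)$ in Lemma~\ref{lem:coherence-generalized-hybrid-dictionary}. One caveat applies to both arguments equally: the lemma as stated quantifies over \emph{every} collection of $L$ columns, while both you and the paper actually bound the probability for a single fixed collection; you at least flag this restriction explicitly.
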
 
\begin{proof}
The proof is postponed to Appendix.~\ref{sec:proof_lem_hybrid_subspace_align}. 
\end{proof}
\begin{lem}
\label{lem:proj_error_bound}
Let, w.l.o.g., the support selected upto the $r^{\mathrm{th}}$ step of OLS  be $T_r=\{1,\ 2,\cdots,\ r \}$, such that $\min_{1\le k\le r}\abs{\inprod{\bm{\phi}_{k}/\norm{\bm{\phi}_k}}{\bvec{a}_k}}\ge 1-\delta$, then, 
\begin{align}
\norm{\dualproj{T_r}\bvec{a}_i}\le \sqrt{2\delta-\delta^2}
\end{align}
for all $1\le i\le r$.
\end{lem}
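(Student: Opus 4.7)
The plan is to exploit the variational (best-approximation) characterization of the orthogonal projection: by definition, $\proj{T_r}\bvec{a}_i$ is the closest point in $\spn{\bm{\phi}_1,\ldots,\bm{\phi}_r}$ to $\bvec{a}_i$, so for any candidate vector $v$ in that subspace one obtains the upper bound $\norm{\dualproj{T_r}\bvec{a}_i}=\norm{\bvec{a}_i-\proj{T_r}\bvec{a}_i}\le \norm{\bvec{a}_i-v}$. The hypothesis supplies a column, namely $\bm{\phi}_i/\norm{\bm{\phi}_i}$, which is nearly parallel to $\bvec{a}_i$, so a natural and effective choice of $v$ is the best rank-one approximation of $\bvec{a}_i$ along the direction $\bm{\phi}_i$ alone; the remaining $r-1$ columns can only shrink the residual, so they may be discarded without loss.

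Concretely, I would set $v_i := \inprod{\bvec{a}_i}{\bm{\phi}_i/\norm{\bm{\phi}_i}}\cdot \bm{\phi}_i/\norm{\bm{\phi}_i}$, which lies in $\spn{\bm{\phi}_i}\subseteq \spn{\bm{\phi}_1,\ldots,\bm{\phi}_r}$. By construction, the residual $\bvec{a}_i - v_i$ is orthogonal to $\bm{\phi}_i/\norm{\bm{\phi}_i}$, so the Pythagorean identity yields
\begin{equation*}
\norm{\bvec{a}_i-v_i}^2 \;=\; \norm{\bvec{a}_i}^2 \;-\; \abs{\inprod{\bvec{a}_i}{\bm{\phi}_i/\norm{\bm{\phi}_i}}}^2.
\end{equation*}
Using $\norm{\bvec{a}_i}=1$ (orthonormality of $\{\bvec{a}_j\}_{j=1}^r$) and the hypothesis $\abs{\inprod{\bm{\phi}_i/\norm{\bm{\phi}_i}}{\bvec{a}_i}}\ge 1-\delta$, the right-hand side is bounded above by $1-(1-\delta)^2 = 2\delta-\delta^2$.

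Chaining the two inequalities then gives $\norm{\dualproj{T_r}\bvec{a}_i} \le \norm{\bvec{a}_i-v_i}\le \sqrt{2\delta-\delta^2}$, which is the claim. There is no real obstacle here; the only subtlety worth flagging is that the argument uses only the alignment of the single column $\bm{\phi}_i$ with $\bvec{a}_i$, and does not exploit the additional alignments of $\bm{\phi}_k$ with $\bvec{a}_k$ for $k\ne i$. A tighter bound could presumably be derived by analyzing the full Gram matrix of $\{\bm{\phi}_k/\norm{\bm{\phi}_k}\}_{k=1}^r$ as a perturbation of the identity, but the stated bound $\sqrt{2\delta-\delta^2}$ is already adequate for the downstream use in the analysis of OLS on generalized hybrid dictionaries.
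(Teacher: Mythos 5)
Your proof is correct and follows essentially the same route as the paper: the paper's first step $\norm{\dualproj{T_r}\bvec{a}_i}\le\norm{\dualproj{\{i\}}\bvec{a}_i}$ is exactly your reduction to the rank-one projection along $\bm{\phi}_i$, and the Pythagorean computation recovers the paper's explicit formula $\sqrt{1-\abs{\inprod{\bm{\phi}_i}{\bvec{a}_i}}^2/\norm{\bm{\phi}_i}^2}$ before applying the alignment hypothesis.
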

\begin{proof}
The proof follows from the simple observation that \begin{align*}
\norm{\dualproj{T_r}\bvec{a}_i} & \le\norm{\dualproj{\{i\}}\bvec{a}_i} \\
\ &=\sqrt{1-\frac{\abs{\inprod{\bm{\phi}_i}{\bvec{a}_i}}^2}{\norm{\bm{\phi}_i}^2}}\\
\ &\le \sqrt{1-(1-\delta)^2}=\sqrt{2\delta-\delta^2}.
\end{align*}
\end{proof}
 
The following lemma shows that the generalized hybrid dictionary defined above is indeed correlated in the sense that it has a high worst case coherence.
\begin{lem}
\label{lem:coherence-generalized-hybrid-dictionary}
For a sensing matrix $\bm{\Phi}\in \real^{M\times N}$, with columns belonging to a generalized hybrid dictionary as defined in~\ref{defn:hybrid}, for some $\delta\in (0,0.293)$, \begin{align*}
\lefteqn{\mathbb{P}\left(\mu(\Phi)\ge 1-4\delta+2\delta^2\right)} & & \\
\ & \ge \sum_{k=0}^r (-1)^{k}\binom{r}{k}\sum_{j=0}^{k} \binom{k}{j} \frac{(2r)!}{(2r-j)!}p(\delta)^j(1-kp(\delta))^{2r-j} 
\end{align*}
where $p(\delta)$ was defined in Lemma.~\ref{lem:prob_hybrid_vector_align}. 
\end{lem}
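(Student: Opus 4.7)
The approach is to lower bound $\mu(\bm{\Phi})$ by showing that, with the stated probability, two distinct columns must both be strongly aligned with the same axis vector $\bvec{a}_k$, which forces their pairwise inner product to be large.

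\emph{Geometric step.} Suppose that for some $k \in \{1, \ldots, r\}$ and some $i_1 \neq i_2$, the events $A_{i_1,k}$ and $A_{i_2,k}$ (defined in Lemma~\ref{lem:prob_hybrid_vector_align}) both occur. Writing each normalized column as
\[
\frac{\bm{\phi}_{i_l}}{\norm{\bm{\phi}_{i_l}}} = \alpha_l \bvec{a}_k + \beta_l \bvec{b}_l, \qquad \bvec{b}_l \perp \bvec{a}_k,\ \norm{\bvec{b}_l}=1,
\]
one has $\abs{\alpha_l}\geq 1-\delta$ and therefore $\abs{\beta_l} \leq \sqrt{1-(1-\delta)^2} = \sqrt{2\delta-\delta^2}$. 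Expanding the inner product and applying Cauchy--Schwarz to $\inprod{\bvec{b}_1}{\bvec{b}_2}$ yields
\[
\abs{\inprod{\bm{\phi}_{i_1}/\norm{\bm{\phi}_{i_1}}}{\bm{\phi}_{i_2}/\norm{\bm{\phi}_{i_2}}}} \geq (1-\delta)^2 - (2\delta-\delta^2) = 1 - 4\delta + 2\delta^2,
\]
so $\mu(\bm{\Phi}) \geq 1-4\delta+2\delta^2$ on this event.

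\emph{Mutual exclusivity and probabilistic model.} The hypothesis $\delta \in (0,0.293)$ is exactly the condition $(1-\delta)^2 > 1/2$. Since $\{\bvec{a}_k\}$ is orthonormal and each normalized column has unit norm, $\sum_{k=1}^r \abs{\inprod{\bm{\phi}_i/\norm{\bm{\phi}_i}}{\bvec{a}_k}}^2 \leq 1$; hence for every fixed $i$, the events $A_{i,1},\ldots,A_{i,r}$ are pairwise exclusive. I would therefore encode each column's alignment by a label $K_i \in \{0,1,\ldots,r\}$, with $K_i = k$ iff $A_{ik}$ holds and $K_i = 0$ otherwise. Writing $p := p(\delta)$, Lemma~\ref{lem:prob_hybrid_vector_align} gives $\mathbb{P}(K_i = k) \geq p$ for each $k \in [r]$, and the labels are independent across $i$ because the columns themselves are.

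\emph{Combinatorial step.} Restrict attention to the first $2r$ columns, and let $Z_k = \abs{\{i \in [2r] : K_i = k\}}$. The event $\bigcup_{k=1}^r \{Z_k \geq 2\}$ is contained in $\{\mu(\bm{\Phi}) \geq 1 - 4\delta + 2\delta^2\}$ by the geometric step, so it suffices to lower bound $\mathbb{P}\bigl(\bigcup_{k=1}^r \{Z_k \geq 2\}\bigr)$. I would derive the displayed formula via an inclusion--exclusion argument in the spirit of Lemma~\ref{lem:hybrid_subspace_align}: for each $S \subseteq [r]$ of size $k$, the probability that every bin in $S$ contains at most one ball---when only the $k$ outcomes indexed by $S$ are tracked against an ``everything-else'' outcome of probability $1 - kp$---evaluates, by a direct enumeration of configurations with $j$ singleton bins in $S$, to the inner sum
\[
\sum_{j=0}^{k} \binom{k}{j}\frac{(2r)!}{(2r-j)!}\, p^{j}(1-kp)^{2r-j}.
\]
Assembling these pieces with the alternating weights $(-1)^{k}\binom{r}{k}$, as in standard Möbius-style bookkeeping over subsets of $[r]$, produces exactly the right-hand side of the lemma.

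\emph{Main obstacle.} The delicate part is the combinatorial step: only one-sided marginal bounds $\mathbb{P}(A_{ik}) \geq p(\delta)$ are available, so the inclusion--exclusion must be arranged so that each truncated partial sum is monotone in these marginals in the correct direction, and a coupling against a reference multinomial with all bin probabilities set to exactly $p(\delta)$ is needed. Verifying that the specific algebraic expression in the statement---rather than a sharper but potentially unavailable multinomial form---emerges from this arrangement, without losing the validity of the bound, is where most of the technical care lies.
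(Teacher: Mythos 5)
Your proposal follows essentially the same route as the paper: the identical geometric bound $(1-\delta)^2-(2\delta-\delta^2)=1-4\delta+2\delta^2$ for two columns aligned with a common $\bvec{a}_k$, the same mutual-exclusivity observation under $\delta<1-1/\sqrt{2}$, the same multinomial occupancy model with inclusion--exclusion over the bins, and the same appeal to monotonicity in $p$ to substitute $p(\delta)$ --- except that the paper actually proves that monotonicity as a separate lemma (the polynomial $Q(L,p,r)$ is increasing in $p$), whereas you only flag it as the remaining obstacle. Two minor points of divergence worth noting: by restricting to $2r$ columns you land exactly on the $(2r)!/(2r-j)!$ factors of the stated bound, whereas the paper's own proof tracks $K$ columns and derives $K!/(K-j)!$ (an internal inconsistency in the paper that your version avoids); and your remark that a single doubly-occupied bin already suffices for the coherence bound (the union event) is a mild logical economy over the paper's stronger requirement that every bin be doubly occupied, though both end up bounding the same intersection probability.
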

\begin{proof}
The proof is postponed to Appendix.~\ref{sec:proof-lem-coherence-generalized-hybrid-dictionary}.
\end{proof}
It is evident that the poor performance of OLS/OMP at the very first iteration is due the presence of a constant bias term residing in the space $W_r$, being added to each of the columns of the Hybrid measurement matrix. This makes the columns of the matrix to be correlated or packed closely to each other. This actually results in incorrectly identifying the columns since all the columns vectors are packed very closely to each other.\\  
The above lemma will be helpful in proving our claim that from $2^{nd}$ iteration onwards in OLS, the strength of the bias term added due to vector in the space $W_r$ decreases due to which the correlation among the columns decreases. This helps the algorithm to easily figure out the correct column from the support set.

Our aim was to explain the reason behind the improved conditional recovery performance of OLS from second iteration onwards. The following theorem and the discussion which follows will provide an explanation for the phenomena observed in the experiments.
\begin{thm}
\label{thm:ols-acting-like-omp}
Let $T^k$ be the set of indices selected in the first $k(r\le k<K)$ iterations of OLS, and, w.l.o.g., assume that $T^k=\{1,2,\cdots,\ k\}$. Also, assume that the first $k$ iterations of OLS are successful, that is $T^k\subset T $, where $T$ is the actual (unknown) support of the unknown vector $\bvec{x}$. Then, at the $(k+1)^{\mathrm{th}}$ iteration, the probability that OLS chooses another correct index, is at least as large as the probability of OMP choosing a correct index in its first iteration, with a $(K-k)$-sparse unknown vector, and with a hybrid sensing matrix, with non-orthonormal bias, such that the Frobenius norm of the bias matrix is upper bounded by $\sqrt{r}\kappa(\delta)$, where $\kappa(\delta):=\sqrt{2\delta-\delta^2}$.
\end{thm}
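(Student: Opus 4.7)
The plan is to exploit Lemma~\ref{lem:OLS-selection-step}, which tells us that OLS at the $(k+1)^{\mathrm{th}}$ iteration behaves exactly like OMP applied, at its first iteration, to the projected dictionary with columns $\dualproj{T^k}\bm{\phi}_i/\norm{\dualproj{T^k}\bm{\phi}_i}$ for $i\notin T^k$, with measurement vector $\bvec{r}^k$. So the task reduces to showing that this projected dictionary can itself be modeled as a hybrid dictionary on $N-k$ columns with (i) a Gaussian component living in the $(M-\dim R(\bvec{\Phi}_{T^k}))$-dimensional orthogonal complement, and (ii) a reduced bias whose matrix has Frobenius norm at most $\sqrt{r}\kappa(\delta)$. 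Because the unknown vector $\bvec{x}$ has its remaining nonzero entries on $T\setminus T^k$ (a set of size $K-k$), the induced signal on the projected dictionary is $(K-k)$-sparse, matching the statement.

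First, I would decompose each column using the hybrid structure from Definition~\ref{defn:hybrid}:
\begin{align*}
\dualproj{T^k}\bm{\phi}_i=\dualproj{T^k}\bvec{n}_i+\sum_{j=1}^r u_{ij}\dualproj{T^k}\bvec{a}_j.
\end{align*}
Conditionally on $\bvec{\Phi}_{T^k}$, the vector $\dualproj{T^k}\bvec{n}_i$ is Gaussian with covariance $M^{-1}\dualproj{T^k}$, which plays the role of the ``noise'' part of the effective hybrid dictionary (lifted to the orthogonal complement of $R(\bvec{\Phi}_{T^k})$). The deterministic bias vectors become $\tilde{\bvec{a}}_j:=\dualproj{T^k}\bvec{a}_j$, and the bias matrix is $\bvec{B}:=[\tilde{\bvec{a}}_1\ |\ \cdots\ |\ \tilde{\bvec{a}}_r]$. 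Note that the uniform coefficients $u_{ij}$ are untouched by the projection, so the mixture structure is preserved.

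Second, I would invoke Lemma~\ref{lem:proj_error_bound} to bound $\norm{\tilde{\bvec{a}}_j}\le \kappa(\delta)$ for each $j\in\{1,\dots,r\}$, using the implicit assumption (valid for $k\ge r$ and all previous iterations successful, modulo the high-probability alignment events from Lemmas~\ref{lem:prob_hybrid_vector_align}--\ref{lem:hybrid_subspace_align}) that $T^k$ contains a set of $r$ indices along which $|\inprod{\bm{\phi}_\ell/\norm{\bm{\phi}_\ell}}{\bvec{a}_{\pi(\ell)}}|\ge 1-\delta$ for a suitable permutation $\pi$. Since $T_r\subseteq T^k$ implies $\dualproj{T^k}$ annihilates more than $\dualproj{T_r}$, the norm can only get smaller, so the lemma applies. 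Then
\begin{align*}
\opnorm{\bvec{B}}_F=\sqrt{\sum_{j=1}^r\norm{\tilde{\bvec{a}}_j}^2}\le\sqrt{r}\,\kappa(\delta),
\end{align*}
as required. This is the precise quantitative statement in which the ``contraction'' of the bias happens as we go past the first $r$ iterations.

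Third, I would assemble the reduction: the effective problem at iteration $k+1$ of OLS on $\{\dualproj{T^k}\bm{\phi}_i\}_{i\notin T^k}$ with residual $\bvec{r}^k\in R(\dualproj{T^k}\bvec{\Phi}_{T\setminus T^k})$ has the same joint law (conditional on $\bvec{\Phi}_{T^k}$) as the first iteration of OMP on a hybrid dictionary with $N-k$ columns, $K-k$ true support indices, Gaussian component in dimension $M-\dim R(\bvec{\Phi}_{T^k})$, and bias matrix $\bvec{B}$ of Frobenius norm at most $\sqrt{r}\kappa(\delta)$. The main obstacle I anticipate is the honest bookkeeping in this comparison: the projected Gaussian components $\dualproj{T^k}\bvec{n}_i$ are not, strictly speaking, distributed as a standard hybrid dictionary's noise (they live on a lower dimensional subspace, and their independence from $\bvec{a}_j,u_{ij}$ must be handled conditional on $\bvec{\Phi}_{T^k}$). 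One has to argue that these differences only help the selection probability, so that the OMP first-iteration success probability with the stated bias-Frobenius bound serves as a genuine lower bound for the $(k+1)^{\mathrm{th}}$ OLS step success. Once that monotonicity in the bias's Frobenius norm is in hand (it reflects the heuristic that a smaller bias makes the dictionary effectively less correlated), the theorem follows.
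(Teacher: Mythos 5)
Your proposal is correct and follows essentially the same route as the paper: decompose $\dualproj{T^k}\bm{\phi}_i$ via the hybrid structure, identify the bias as the projected basis vectors, bound each $\norm{\dualproj{T^k}\bvec{a}_j}$ by $\kappa(\delta)$ through Lemmas~\ref{lem:hybrid_subspace_align} and~\ref{lem:proj_error_bound}, and read off the Frobenius bound $\sqrt{r}\kappa(\delta)$ together with the $(K-k)$-sparsity of the residual. The "main obstacle" you flag at the end is handled in the paper not by a monotonicity argument but by an explicit rotation $\bvec{U}_1$ onto the range of $\dualproj{T^k}$, under which the projected noise $\bvec{U}_1^T\bvec{n}_i$ becomes exactly i.i.d.\ Gaussian in $\real^{d_1}$ conditional on $\bvec{\Phi}_{T^k}$, so the projected dictionary is literally a hybrid dictionary with non-orthonormal bias.
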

\begin{proof}
Since the unknown vector $\bvec{x}$ is $K$-sparse with support set $T$, (w.l.o.g. assumed to be $\{1,2,\cdots,\ K\}$) the measurement vector $\bvec{y}$ can be represented as \begin{align*}
\bvec{y}=x_1\bm{\phi}_1/\norm{\bm{\phi}_1}+\cdots+x_K\bm{\phi}_K/\norm{\bm{\phi}_K}
\end{align*}

Now, note that the residual after $k^{\mathrm{th}}$ iteration becomes, \begin{align*}
\bvec{r}^k=\dualproj{T^k}\bm{y}=x_{k+1}\dualproj{T^k}\frac{\bm{\phi}_{k+1}}{\norm{\bm{\phi}_{k+1}}}+\cdots+x_K\dualproj{T^k}\frac{\bm{\phi}_K}{\norm{\bm{\phi}_K}}
\end{align*}
Observe that the operator $\dualproj{T^k}$ has two distinct eigenvalues, $0$ and $1$. Let the corresponding eigenspaces have dimensions $d_0$ and $d_1$ respectively. Then, it follows that $\dim(R(\bvec{\Phi}_{T^k}))=d_0$ and $d_0+d_1=M$. One can construct a unitary matrix $\bvec{U}$ with the columns as the orthonormal eigenvectors corresponding to the eigenvalues of $\dualproj{T^k}$, so as to get, \begin{align*}
\bvec{U}=[\bvec{c}_1\ \cdots\ \bvec{c}_{d_1}\ \bvec{b}_1\ \cdots\ \bvec{b}_{d_0}] 
\end{align*}
where $\{\bvec{b}_1,\cdots,\ \bvec{b}_{d_0}\}$ and $\{\bvec{c}_1,\ \cdots,\ \bvec{c}_{d_1}\}$ are a set of orthonormal bases for the eigenspaces corresponding to the $0$ and $1$ eigenvalues, respectively, of $\dualproj{T^k}$. Denote $\bvec{U}_0=[\bvec{b}_1\ \bvec{b}_{2}\ \cdots\ \bvec{b}_{d_0}]$, and $\bvec{U}_1=[\bvec{c}_1\ \cdots\ \bvec{c}_{d_1}]$. Then, note that $\bvec{U}_0,\ \bvec{U}_1$ are not square, but they satisfy $\bvec{U}^T_0\bvec{U}_0=\bvec{I}_{d_0}$, $\bvec{U}_1^T\bvec{U}_1=\bvec{I}_{d_1}$.

Recall that $\bm{\phi}_i:={\sum_{j=1}^r u_{ij}\bvec{a}_j+\bvec{n}_i}$
Take some $k+1\le i\le K$. Then,  \begin{align*}
\dualproj{T^k}\frac{\bm{\phi}_i}{\norm{\bm{\phi}_i}}=&\frac{\sum_{j=1}^r u_{ij}\dualproj{T^k}\bvec{a}_j+\dualproj{T^k}\bvec{n}_i}{\norm{\sum_{j=1}^r u_{ij}\bvec{a}_j+\bvec{n}_i}}
\end{align*}
Now, since $r\le k<K<N$, Lemma.~\ref{lem:hybrid_subspace_align} dictates that, by appropriately choosing some $\delta\in [0,0.293]$, such that $p(\delta)<1/r$ ($p(\delta)$ was defined in Lemma.~\ref{lem:prob_hybrid_vector_align}), w.h.p., one can find $r$ indices, w.l.o.g., taken as $\{1,2,\cdots,\ r\}$, from the collection $\{1,2,\cdots,\ k\}$ such that $\abs{\inprod{\bm{\phi}_i/\norm{\bm{\phi}_i}}{\bvec{a}_i}}\ge 1-\delta$, where $\{\bvec{a}_j\}_{j=1}^r$ are defined in Definition~\ref{defn:hybrid} of the generalized hybrid dictionary.
Then, from Lemma.~\ref{lem:proj_error_bound}, we have that for any $j,\ 1\le j\le r$, \begin{align*}
\norm{\dualproj{T^k}\bvec{a}_j}\le & \norm{\dualproj{T_r}\bvec{a}_j} \le \kappa(\delta)
\end{align*}
where $T_r:=\{1,2,\cdots,\ r\}$.

On the other hand, note that $\dualproj{T^k}=\bvec{U}\bvec{\Sigma}{\bvec{U}}^T$, where $\bvec{\Sigma}=\mathrm{diag}\left(1,\ 1,\ \cdots\ ,0,\ 0,\ \cdots,\ 0\right)$, with $d_0$ number of $0$'s and $d_1$ number of $1$'s. Thus, $\dualproj{T^k}=\bvec{U}_1\bvec{U}_1^T$. Consequently, for any $1\le j\le r$, $\dualproj{T^k}\bvec{a}_j=\bvec{U}_1\bm{\epsilon}_j$, where $\bm{\epsilon}_j=\bvec{U}_1^T\bvec{a}_j$. Observe that $\norm{\bm{\epsilon}_j}=\norm{\bvec{U}_1\bm{\epsilon}_j}\le \kappa(\delta)$.

Also note that, $\dualproj{T^k}\bvec{n}_i=\bvec{U}_1\tilde{\bvec{n}_i}$, where $\tilde{\bvec{n}_i}=\bvec{U}_1^T\bvec{n}_i$. Now, \emph{given} the columns $\{\bm{\phi}_i/\norm{\bm{\phi}_i}\}_{i\in T^k}$, $\tilde{\bvec{n}_i}\sim\mathcal{N}(\bvec{0},M^{-1}\bvec{I}_{d_1})$.

Putting everything together, we get \begin{align*}
\dualproj{T^k}\frac{\bm{\phi}_i}{\norm{\bm{\phi}_i}}&=\frac{\bvec{U}_1\left(\sum_{j=1}^r u_{ij}\bm{\epsilon}_j+{\tilde{\bvec{n}}_i}\right)}{\norm{\sum_{j=1}^r u_{ij}\bm{\epsilon}_j+\tilde{\bvec{n}}_i}}
\end{align*} 
where $\tilde{\bm{\phi}_i}=\sum_{j=1}^r u_{ij} \bm{\epsilon}_j+\tilde{\bvec{n}}_i$. So, in the $(k+1)^{\mathrm{th}}$ step of OLS, the new index is chosen by finding the index that maximizes 
$\abs{\inprod{\tilde{\bm{\phi}_i}}{\bvec{r}^k}}/\norm{\tilde{\bm{\phi}_i}}$ where $\bvec{r}^k$ is obtained from measuring a $K-k$ sparse vector, and $\left\{\tilde{\bm{\phi}_i}/\norm{\tilde{\bm{\phi}_i}}\right\}$ form a dictionary such that $\tilde{\bvec{\phi}_i}=\sum_{j=1}^r {u}_{ij}{\bm{\epsilon}}_j+\tilde{\bvec{n}}_i$. Note that the bias matrix here is $\bvec{E}$, where $\bvec{E}:=[\bm{\epsilon}_1\ \bm{\epsilon}_2\ \cdots\ \bm{\epsilon}_r]$. Since $\norm{\bm{\epsilon}_j}\le \kappa(\delta)$, it is an easy matter to check that the Frobenius norm of the bias matrix is bounded by $\sqrt{r}\kappa(\delta)$.

\end{proof}
\begin{remark}
This is analogous to the generalized hybrid dictionary defined before, however, with the major difference that the columns $\{\tilde{\bm{\epsilon}}_{i}\}$ are \emph{not} orthonormal. Though we can see that the Frobenius norm of the bias can decrease with suitable choice of $\delta$. This is further clarified in the corollary below that stresses on the specific case with $r=1$.
\end{remark}
\begin{cor}
\label{cor:ols_hybrid_dictionary_r=1_case}
For the case when $r=1$, the algorithm OLS, after picking $k(<K)$ correct indices in the first $k$ iterations, acts as OMP with generalized hybrid dictionary with parameters $r=1$ and $T\sqrt{2\delta-\delta^2}$, in the $(k+1)^{\mathrm{th}}$ iteration.
\end{cor}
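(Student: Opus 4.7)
The plan is to derive this corollary as a direct specialization of Theorem~\ref{thm:ols-acting-like-omp} to $r=1$, followed by a simple rescaling that absorbs the norm of the reduced bias vector into the uniform coefficient.

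First, I would instantiate Theorem~\ref{thm:ols-acting-like-omp} at $r=1$. The theorem tells us that after $k$ successful OLS iterations, the $(k+1)^{\mathrm{th}}$ selection is equivalent to the first step of OMP on the dictionary $\{\tilde{\bm{\phi}}_i/\|\tilde{\bm{\phi}}_i\|\}_{i\notin T^k}$, where $\tilde{\bm{\phi}}_i = u_{i1}\bm{\epsilon}_1 + \tilde{\bvec{n}}_i$ with the single bias vector $\bm{\epsilon}_1 = \bvec{U}_1^T \bvec{a}_1$, conditional noise $\tilde{\bvec{n}}_i \sim \mathcal{N}(\bvec{0},M^{-1}\bvec{I}_{d_1})$, and $u_{i1}$ still i.i.d.\ $\mathcal{U}[0,T)$ and independent of the conditioned quantities. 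By Lemma~\ref{lem:proj_error_bound} together with the identity $\dualproj{T^k}\bvec{a}_1 = \bvec{U}_1 \bm{\epsilon}_1$, we inherit $\|\bm{\epsilon}_1\| \le \kappa(\delta) = \sqrt{2\delta-\delta^2}$.

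Next, I would perform the rescaling step. Define the unit vector $\bvec{a}_1' := \bm{\epsilon}_1/\|\bm{\epsilon}_1\|$ and the rescaled coefficients $u_{i1}' := u_{i1}\,\|\bm{\epsilon}_1\|$. Since $\|\bm{\epsilon}_1\|$ is measurable with respect to $\bvec{\Phi}_{T^k}$, conditioning on the selected support turns $\|\bm{\epsilon}_1\|$ into a constant scaling, so the $u_{i1}'$ are conditionally i.i.d.\ uniform on $[0,T\|\bm{\epsilon}_1\|)$ and remain conditionally independent of $\tilde{\bvec{n}}_i$ (as $u_{i1} \independent \bvec{n}_i$ for $i\notin T^k$). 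Substituting, $\tilde{\bm{\phi}}_i = u_{i1}'\, \bvec{a}_1' + \tilde{\bvec{n}}_i$, which matches exactly the form of Definition~\ref{defn:hybrid} with $r=1$, orthonormal basis $\{\bvec{a}_1'\}$, and uniform parameter $T\|\bm{\epsilon}_1\| \le T\sqrt{2\delta-\delta^2}$. Since OMP's selection rule is invariant under the isometry $\bvec{U}_1$, we can perform the analysis in the $d_1$-dimensional range of $\bvec{U}_1$ without loss.

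I expect the main subtlety, rather than a serious obstacle, to be justifying that this effective dictionary really satisfies Definition~\ref{defn:hybrid} despite two minor mismatches: the ambient dimension dropped from $M$ to $d_1$, and the noise covariance remains $M^{-1}\bvec{I}_{d_1}$ rather than $d_1^{-1}\bvec{I}_{d_1}$. Both mismatches are benign for OMP's first-iteration behavior, which depends only on inner products and norms, and all such quantities are preserved by the isometry $\bvec{U}_1^T$; furthermore the per-coordinate noise scaling $1/M$ is exactly that prescribed by the generalized hybrid dictionary model with ambient dimension $M$. With this observation the corollary follows, and the tightened parameter $T\sqrt{2\delta-\delta^2}$ quantifies the sense in which the bias strength in the effective dictionary is strictly smaller than in the original one, explaining the improved conditional success probability observed empirically from the second iteration onward.
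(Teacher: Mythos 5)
Your proposal is correct and follows essentially the same route as the paper, whose entire proof of this corollary is the single sentence that it is a straightforward implication of Theorem~\ref{thm:ols-acting-like-omp} for $r=1$. Your additional rescaling step (normalizing $\bm{\epsilon}_1$ and absorbing $\norm{\bm{\epsilon}_1}\le\sqrt{2\delta-\delta^2}$ into the uniform parameter) simply makes explicit the specialization the paper leaves implicit, and is the right way to see why the $r=1$ case yields a genuine generalized hybrid dictionary even though the reduced bias vectors are not orthonormal for general $r$.
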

\begin{proof}
This is a straightforward implication of Theorem.~\ref{thm:ols-acting-like-omp}, for $r=1$.
\end{proof}
\begin{remark}
The importance of the implication of this corollary is far greater than the corollary itself. What it says is that after OLS is able to capture $k$ correct indices in $k$ iterations, it acts like OMP in the $(k+1)^{\mathrm{th}}$ iteration, with a hybrid dictionary which is less coherent. This in turn, acts as a warms-start for a new OLS algorithm with less coherent dictionary, and after a few iterations, if it chooses a few correct indices, it can again act as OMP with a hybrid dictionary where the coherence of the dictionary is further reduced, as the parameter decreases to $T(2\delta-\delta^2)$. This ``decorrelation'' effect is the key to the success of OLS in coherent dictionaries. As can be seen from the proof of Theorem.~\ref{thm:ols-acting-like-omp}, this feature of OLS is attributed to the selection criteria of OLS, which basically searches indices based on the angles between residual error vector and the normalized orthogonal projection error of the columns of the sensing matrix, after being projected onto the space spanned by the columns already selected. In OMP, the normalization of the orthogonal projection error vector is absent, and that is the reason for the failure of OMP in hybrid dictionaries.      
\end{remark}
\section{OLS support recovery with Hybrid dictionary}
\begin{figure}[t!]
\centering
  \centering
  \includegraphics[height=2.5in, width=3.5in]{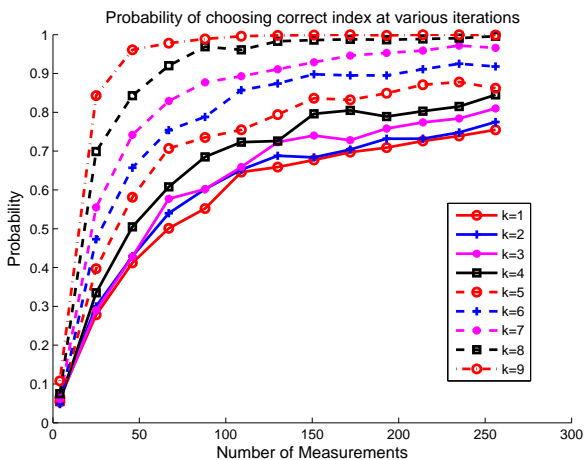}
  \caption{Prob. of choosing correct index at the start of various iteration}
  \label{fig:Prob_correct_index_iteration}
\end{figure}%
\begin{figure}[t!]
\centering
  \centering
  \includegraphics[height=2.5in,width=3.5in]{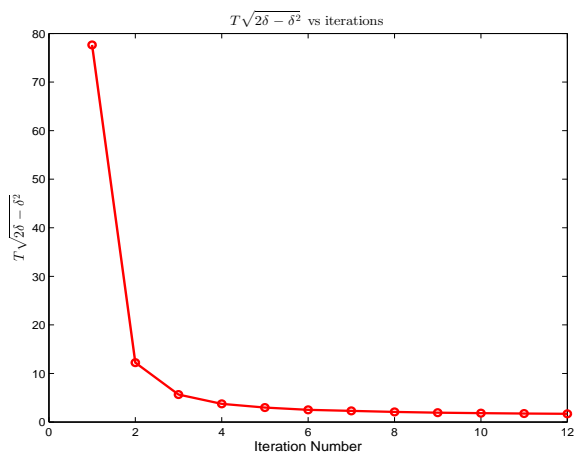}
  \caption{T$\sqrt{\delta-\delta^2}$ for different iterations}
  \label{fig:T_delta_iterations}
\end{figure}%
\begin{figure}[t!]
  \centering
  \includegraphics[height=2.5in,width=3.5in]{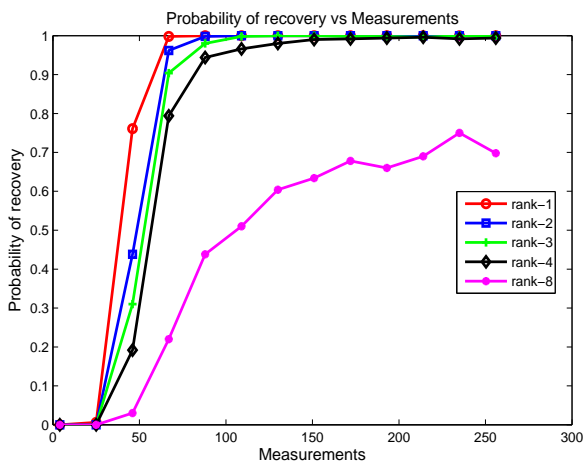}
  \caption{Probability of success for different ranks (2K iterations)}
  \label{fig:Different_ranks}
\end{figure}%
The Corollary.~\ref{cor:ols_hybrid_dictionary_r=1_case} states that, in generalized hybrid dictionary, with $r=1$, after $k(>1)$ successful iterations of OLS, the probability of success in the next will be equal to the success of the first iteration of when OMP is used to recover $K-k$ sparse signal with a modified hybrid dictionary having parameter $T\sqrt{2\delta-\delta^2}$. Now, we perform an experiment where we initially choose $T=100$ and $K=12$. For every iteration $k$, we experimentally find the value of $\delta$ using equation as in Lemmas~\ref{lem:prob_hybrid_vector_align}, and \ref{lem:hybrid_subspace_align}, for the case $r=1$, such that the probability $p(\delta)$, as defined in Lemma.~\ref{lem:prob_hybrid_vector_align} calculates out to be more than $0.99$. We plot that probability in figure \ref{fig:T_delta_iterations}. We can see the rapid decrease in the value $T\sqrt{2\delta-\delta^2}$ with the iteration numbers, as estimated in Corollary.~\ref{cor:ols_hybrid_dictionary_r=1_case}.\\

Using the values of $T\sqrt{2\delta-\delta^2}$ as in previous experiment, we perform an experiment with OMP and plot the probability of success at various iterations w.r.t. $M$ as shown in the fig. \ref{fig:Prob_correct_index_iteration}. We can see that as iteration proceeds, the probability of success increases. This justifies the phenomenon that was observed before empirically.

As we have observed that with Hybrid dictionaries, the high probability of success of OLS is assured only during last iteration, the overall probability of success of OLS is quite low. However, if we run the OLS for more than $K$ iterations lets say $2K$ iterations, the probability that $2K$ columns obtained by OLS will be definitely increase. We empirically show this in the figure \ref{fig:Different_ranks}. If we assume that $M>2K$, by projecting $\bvec{y}$ on $2K$ columns we can find the true $K$ columns of the support set since when $M>2K$ and column entries have been generated from a continuous probability space the event that every $2K$ columns is linearly independent occurs almost surely.
\section{Conclusion}

In this paper we have established the probabilistic recovery guarantee of Orthogonal Least Squares Algorithm with compressive measurements through Gaussian dictionaries under non-noisy conditions. Specifically we found lower bounds on the probability of success of OLS algorithm in unocrrelated dictionaries with Gaussian dictionaries and with normalized columns. We showed that OLS can be implemented in a way so that it has the same computational complexity as that of the popular OMP algorithm. We have defined certain type of correlated dictionary that we call geenralized hybrid dictionary, and have numerically demonstrated the competitive edge offered by OLS in these dictionaries, compared against OMP, in terms of recovery performance. We have given theoretical justifications for these numerical evidence and found out that the core reason behind the success of OLS in correlated dictionaries is a phenomenon of ``decorrelation'' of sensing matrix, which is unique to OLS because of the rule it uses at its identification step. Our future aim is to provide a more rigorous explanation of the improved performance of OLS algorithm at any general iteration, and probably find a way to use that to design algorithms with superior performances.

\appendices
\section{Proof of Lemma.~\ref{lem:theta_1-prob-upper-bound}}
\label{sec:proof-lemma-theta_1-prob-upper-bound}
\begin{proof}
For each $n\in \mathbb{N}$, define the function $g_n:(-\infty,0]$, such that $g_n(x)=\ln f_n(x)\ \forall x\in [0,1]$. Now, it is easy to verify that $f_n$ is continuous in $[0,1]$, and differentiable in $(0,1)$, which makes $g_n$ continuous in $[0,1]$, and differentiable in $(0,1)$. Now, note that $f_n(0)=1,\ f_n(1)=0$, and $f_n$ is decreasing in $[0,1]$. Using mean value theorem, we get, for any $x\in (0,1)$, \begin{align*}
g_n(x)= & g_n(0)+xg_n'(\theta x)\\
\ =& xg_n'(\theta x)
\end{align*} for some $\theta\in [0,1]$. Now, for any $x\in [0,1]$, $g_n'(x)=f_n'(x)/f_n(x)$. Since $x$ is positive, $g_n$ can be upper bounded by a linear function if a constant upper bound of $g'$ can be found. Let $h_n=-g_n'$, with domain $[0,1]$. We want to estimate $\inf_{x\in [0,1]}h_n(x)$. Now note that $h_n$ is continuous in $(0,1)$, and $h_n(x)<\infty,\ \forall x\in (0,1)$. Furthermore, observe that \begin{align*}
h_n(x)=&\frac{1}{2}\frac{(1-x)^{(n-1)/2}}{\sqrt{x}\int_{0}^{\sqrt{1-x}}\frac{u^n}{\sqrt{1-u^2}}du}\\
\implies \lim_{x\to 0+}h_n(x)=&\infty\\
\lim_{x\to 1-}h_n(x)=&\infty
\end{align*} i.e. $h_n$ is continuous on the domain $[0,1]$. Since $[0,1]$ is compact, $h_n$ attains its sup and inf in $[0,1]$. Moreover, $h_n(0+)=h_n(1-)=\infty\implies\ x^*:=\argmin_{x\in [0,1]}h_n(x)\in (0,1)$. $x^*$ satisfies \begin{align*}
h_n'(x^*)=&0\\
\ \implies f_n''(x^*)f_n(x^*)=(f_n'(x^*))^2
\end{align*} 
Then, \begin{align*}
h_n(x^*)=& -\frac{f_n'(x^*)}{f_n(x^*)}=-\frac{f_n''(x^*)}{f_n'(x^*)}
\end{align*}
Recalling the definition of $f_n$, we find, $\forall x\in (0,1)$, \begin{align*}
f_n'(x)=& -\frac{1}{2A_n}\frac{(1-x)^{(n-1)/2}}{\sqrt{x}}\\
f_n''(x)=& \frac{(1-x)^{(n-3)/2}}{4A_nx^{3/2}}(1+(n-2)x)
\end{align*}
Thus \begin{align*}
h_n(x^*)=\frac{1+(n-2)x^*}{2x^*(1-x^*)}
\end{align*}
Now, let us look at the real valued function $\phi_n$, with domain $[0,1]$, defined as \begin{align*}
\phi_n(x)=\frac{1+(n-2)x}{x(1-x)}
\end{align*}
It is straightforward to see that $\phi_n$ attains its minima at $a=\frac{1}{\sqrt{n-1}+1}$. Thus $h_n(x^*)\ge \phi_n(a)=(\sqrt{n-1}+1)^2$. Hence, \begin{align*}
g_n(x)=-xh_n(\theta x)\le -m(n)x
\end{align*}
where $m(n)=(\sqrt{n-1}+1)^2$. The desired result follows immediately.
\end{proof}
\section{Rough estimate of constant $C$ in Lemma.~\ref{lem:least-singular-value-lem3}}
\label{sec:constant-C-lem:uncorrelated-prob-bound}
We had shown in the proof of Theorem~\ref{thm:uncorrelated-dcitionary-recovery-probability}
\begin{equation}
\begin{split}
\mathbb{P}(E_{s})&=1-\frac{c_{1}N^{2.5}}{\sigma \sqrt{K}}e^{\frac{-\sigma^2 M_{1}}{2K}}-e^{\frac{-\epsilon_{1}^2K}{2}}-2Ke^{\frac{-\epsilon_{2}^2K}{8}}\\
\textnormal{where} \hspace{0.1cm} \sigma &\ge\left(1-\sqrt{\frac{K}{M_{1}}}-\epsilon_{1}\sqrt{\frac{K}{M_{1}}}\right)\left(\sqrt{1-\epsilon_{2}\sqrt{\frac{K}{M_{1}}}}\right)\\
M_{1}&=M-K-1
\end{split}
\end{equation}
We claim that no of measurements required to reduce probability of failure below $3\delta$ for some $\delta \in (0,1)$ is $CK \ln(\frac{N}{\delta K})+K+1$ for some positive constant C. Choosing, $\epsilon_{1}=\left(\frac{2\ln(N/K\delta)}{K}\right)^{\frac{1}{2}}$ and $\epsilon_{2}=\left(\frac{8\ln(N/\delta)}{K}\right)^{\frac{1}{2}}$ \\
$e^{\frac{-\epsilon_{1}^2K}{2}}=\frac{K\delta}{N} \le \delta$ and $2Ke^{\frac{-\epsilon_{2}^2K}{8}}=\frac{2K\delta}{N} \le \delta$\\
Consider, $\epsilon_{2}\sqrt{K/M_{1}}=\left(\frac{8\ln(N/\delta)}{M_{1}} \right)^{\frac{1}{2}} \le\left(\frac{8}{C} \right)^{\frac{1}{2}}$ With assumption on $M_{1}$ and $K>1$\\
$\frac{\sigma^{2}M_{1}}{K}\ge\left( \sqrt{\frac{M_{1}}{K}}-1-\epsilon_{1} \right)^{2}\left(1-(\frac{8}{C})^{\frac{1}{2}} \right)$\\
$e^{-\frac{\sigma^{2}M_{1}}{K}}=\left( \frac{\delta K}{N} \right)^ {\left[ \sqrt{C} -\frac{1}{\ln(N/\delta K)} - \sqrt{\frac{2}{K}} \right]^2 \left( 1- (\frac{8}{C})^{\frac{1}{2}} \right)/2}$\\
$\frac{c_{1}N^{2.5}}{\sigma \sqrt{K}}e^{\frac{-\sigma^2 M_{1}}{2K}}= \frac{c_{1}N^{2.5}}{\sigma \sqrt{K}} \left( \frac{\delta K}{N} \right)^ {\left[ \sqrt{C} -\frac{1}{\ln(N/\delta K)} - \sqrt{\frac{2}{K}} \right]^2/2 }$ \\
Let  $\sqrt{C}=\left( \sqrt{S} + \frac{1}{\ln(N/\delta K)} + \sqrt{\frac{2}{K}} \right)$, where $S$ is some sufficiently large number such that \\
$\frac{c_{1}N^{2.5}}{\sigma \sqrt{K}}e^{\frac{-\sigma^2 M_{1}}{2K}}= \frac{c_{1}N^{2.5}}{\sigma \sqrt{K}} \le \delta $. \\
One good choice of $S$ could be 32.
So we have $C=\left( \sqrt{32}+\frac{1}{\ln(N/\delta K)} +\sqrt{\frac{2}{K}} \right)^{2}$
\section{Proof of Lemma.~\ref{lem:prob_hybrid_vector_align}}
\label{sec:proof_lem_hybrid_vector_align}
The following simple observation will be useful for the proof of Lemma.~\ref{lem:prob_hybrid_vector_align}.
\begin{lem}
\label{lem:increasing_g}
Let $g_{a,b}$ be a real valued function, parameterized by $a,b\in \real^+$, defined as \begin{align*}
g_{a,b}(x)&=\sqrt{a^2+4bx}-x\\
\ &-a\ln\left(\frac{\sqrt{a^2+4bx}+a}{2b}\right)
\end{align*}
Then, $g_{a,b}$ is a monotonically increasing function.
\end{lem}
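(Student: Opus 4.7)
The plan is the standard differentiate-and-check-sign approach. First I would introduce the shorthand $w(x) := \sqrt{a^2 + 4bx}$ so that $w'(x) = 2b/w(x)$, and I would rewrite the logarithmic summand as $\ln(w(x)+a) - \ln(2b)$ so it differentiates via the chain rule to $w'(x)/(w(x)+a)$. Differentiating the three summands of $g_{a,b}$ term by term then yields
\begin{align*}
g'_{a,b}(x) = \frac{2b}{w(x)} - 1 - \frac{2ab}{w(x)\,(w(x)+a)}.
\end{align*}

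Next, I would combine these terms over the common denominator $w(x)(w(x)+a)$, expand, and collect; one factor of $w(x)$ cancels out of the numerator, and the derivative simplifies to the compact closed form
\begin{align*}
g'_{a,b}(x) = \frac{2b - a - \sqrt{a^2 + 4bx}}{\sqrt{a^2 + 4bx} + a}.
\end{align*}

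Since the denominator is manifestly positive, monotonicity reduces to a sign analysis of the numerator $2b - a - \sqrt{a^2+4bx}$. This last step is the main obstacle I anticipate. The numerator equals $2b - 2a$ at $x = 0$ and decreases without bound as $x$ grows, so the inequality $g'_{a,b}(x) \ge 0$ is equivalent to $\sqrt{a^2+4bx} \le 2b - a$, which forces $b \ge a$ and $x \le b - a$. To close out the proof I would therefore either restrict the domain of $x$ to the interval $[0, b-a]$ under the hypothesis $b \ge a$, or, more likely in view of how this lemma is invoked inside Lemma.~\ref{lem:prob_hybrid_vector_align} (with the identifications $a = M-1$, $b = (r-1)T^2$, $x = \sigma$), pin down the precise parameter regime in which $\sqrt{a^2+4bx} \le 2b - a$ holds and deduce $g'_{a,b} \ge 0$ on that regime.
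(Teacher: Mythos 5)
Your derivative computation is correct: with $w(x)=\sqrt{a^2+4bx}$ one indeed gets $g_{a,b}'(x)=\frac{2b-a-\sqrt{a^2+4bx}}{\sqrt{a^2+4bx}+a}$, and your sign analysis is also right. There is nothing to compare against here, because the paper states this lemma in the appendix and supplies no proof at all; your proposal is therefore more than the paper offers, and it exposes a real defect in the statement: as written, the lemma is false. Since $w(x)\ge a$ for $x\ge 0$ and $w(x)\to\infty$, the numerator $2b-a-w(x)$ is eventually negative, so $g_{a,b}$ is increasing on $[0,\infty)$ for no choice of $a,b>0$; it is increasing precisely on $\{x\ge 0: x\le b-a\}$, which is nonempty only when $b\ge a$. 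The correct fix is the one you gesture at in your last sentence, but you have the parameter identification backwards. In the proof of Lemma~\ref{lem:prob_hybrid_vector_align} the lemma is applied to $g_{M-1,\sigma}(\lambda_{ij})$ with $\lambda_{ij}=\sum_{k\ne j}u_{ik}^2\le (r-1)T^2$, so the identification is $a=M-1$, $b=\sigma$, $x=\lambda_{ij}$ (not $b=(r-1)T^2$, $x=\sigma$). Your condition $x\le b-a$ then reads $\lambda_{ij}\le\sigma-(M-1)$, which holds throughout the range $\lambda_{ij}\in[0,(r-1)T^2]$ exactly when $\sigma\ge M-1+(r-1)T^2$ --- and this is precisely the restriction the paper imposes on $\sigma$ and flags with the remark that the upper bound on $g_{M-1,\sigma}$ is valid ``only when $\sigma>M-1+(r-1)T^2$.'' So the lemma should be restated as: for $b\ge a$, $g_{a,b}$ is monotonically increasing on $[0,b-a]$; with that restatement your argument is a complete and correct proof, and the application in Lemma~\ref{lem:prob_hybrid_vector_align} goes through unchanged.
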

\begin{proof}[Proof of Lemma.~\ref{lem:prob_hybrid_vector_align}]
We first note that, given the $\{u_{ij}\}$s, the random vector $\bm{\phi}_i$ is distributed as $\mathcal{N}_{M}\left(\sum_{j=1}^r u_{ij} \bvec{a}_j,\ M^{-1}\bvec{I}_M\right)$. Let us define the matrix $\bvec{U}$ such that \begin{align*}
\bvec{U}=\begin{bmatrix}
\bvec{a}_1^T\\
\bvec{a}_2^T\\
\vdots\\
\bvec{a}_r^T\\
\bvec{a}_{r+1}^T\\
\vdots\\
\bvec{a}_M^T
\end{bmatrix}
\end{align*}  where the collection $\{\bvec{a}_i\}_{i=1}^M$ forms an orthogonal basis for $\real^M$. By construction $\bvec{U}$ is unitary which implies, for any $1\le i,j\le M$, \begin{align*}
\frac{\inprod{\bm{\phi}_i}{\bvec{a}_j}}{\norm{\bm{\phi}_i}}&=\frac{\inprod{\bm{U\phi}_i}{\bvec{Ua}_j}}{\norm{\bm{U\phi}_i}}\\
\ &=\frac{\psi_{ij}}{\norm{\bm{\psi}_i}}
\end{align*}  
where $\bm{\psi}_i=\bm{U\phi_i}=[{\psi}_{i1}\ {\psi}_{i2}\cdots\ {\psi}_{iM}]^T$. Note that, by construction, $\psi_{ij}\sim\mathcal{N}(u_{ij},1/M),\ j=1,2,\cdots,\ r$ and $\psi_{ij}\sim\mathcal{N}(0,1/M),\ j=r+1,\cdots,\ M$. Then, \begin{align*}
\lefteqn{\mathbb{P}\left(\left|\frac{\inprod{\bm{\phi}_i}{\bvec{a}_j}}{\norm{\bm{\phi}_i}}\right|\ge 1-\delta\mid\{u_{ik}\}_{k=1}^r.\right)}& &\\
\ &=\mathbb{P}\left(\frac{\psi_{ij}^2}{\sum_{k=1}^r \psi_{ik}^2}\ge (1-\delta)^2\right)\\
\ &=\mathbb{P}\left(\sum_{k\ne j}\psi_{ik}^2\le \frac{\psi_{ij}^2}{\delta_1^2}\right)\\
\ &\ge \sup_{\sigma>0}\mathbb{P}\left(\sum_{k\ne j}\psi_{ik}^2\le \sigma\right)\mathbb{P}(\psi_{ij}^2\ge \delta_1^2\sigma)
\end{align*} where $\delta_1$ is defined as in Lemma.~\ref{lem:prob_hybrid_vector_align}.
We are interested in the case where $1\le j\le r$. To bound the whole probability, we bound the two product terms separately as below
\begin{itemize}
\item Note that $\psi_{ij}\sim \mathcal{N}(0,M^{-1})\implies$ $$\mathbb{P}(\psi_{ij}^2\ge \delta_1^2\sigma)=2Q\left(\sqrt{M}(\delta_1\sqrt{\sigma}-u_{ij})\right)$$
\item Using Chernoff's bound, \begin{align*}
\lefteqn{\mathbb{P}\left(\sum_{k\ne j}\psi_{ik}^2\le \sigma\right)}& &\\
\ & \ge 1-\sup_{\theta>0}e^{-\theta \sigma}\mathbb{E}\left(\exp\left[\theta\sum_{k\ne j}\psi
_{ik}^2\right]\right)\\
\ &=1-\sup_{1/2>\theta>0}e^{-\theta\sigma}\frac{\exp\left(\frac{\sum_{k\ne j}{u_{ik}^2}\theta}{1-2\theta}\right)}{(1-2\theta)^{(M-1)/2}}\\
\ &=1-\sup_{1/2>\theta>0}\exp(f(\theta,\sigma))
\end{align*}
where $f(\theta,\sigma)=\frac{\sum_{k\ne j}{u_{ik}^2}\theta}{1-2\theta}-\theta\sigma-\frac{M-1}{2}\ln(1-2\theta)$
\end{itemize}
Then, defining $t=1-2\theta$, it is easy to observe that, for a fixed $\sigma$, $f(\theta,\sigma)$ is maximized at $$t=t_{max}=\frac{M-1+\sqrt{(M-1)^2+4\lambda_{ij}\sigma}}{2\sigma}$$ where $\lambda_{ij}:=\sum_{k\ne j}u_{ik}^2$. Note that the condition $\theta\in [0,1/2)$ constraints the range of $\sigma$ to be $[M-1+\lambda_{ij},\infty)$. Now, it is a simple matter of computation to show that $f(\theta_{\mathrm{max}},\sigma)=\left(-\sigma+g_{M-1,\sigma}(\lambda_{ij})\right)/2$. Using Lemma.~\ref{lem:increasing_g}, we get $f(\theta_{\mathrm{max}},\sigma)\le\left(-\sigma+g_{M-1,\sigma}((r-1)T^2)\right)/2$, since $\lambda_{ij}=\sum_{k\ne j}u_{ik}^2\le (r-1)T^2$. It is important to note that we can upper bound $g_{M-1,\sigma}$ in this way only when $\sigma>M-1+(r-1)T^2$.

Then, we get \begin{align*}
\lefteqn{\mathbb{P}\left(\left|\frac{\inprod{\bm{\phi}_i}{\bvec{a}_j}}{\norm{\bm{\phi}_i}}\right|\ge 1-\delta\mid\{u_{ik}\}_{k=1}^r\right)}& &\\
\ &\ge \sup_{\tiny{\sigma>M-1+(r-1)T^2}}2\left[(1-e^{g(\sigma)/2})Q\left(\sqrt{M}(\delta_1\sqrt{\sigma}-u_{ij})\right)\right]
\end{align*}
where the function $g$ is defined as in Lemma.~\ref{lem:prob_hybrid_vector_align}. Finally taking expectation with respect to the random variables $\{u_{ik}\}_{k=1}^r$, results in the desired expression for $\mathbb{P}(A_{ij})$ as in Lemma.~\ref{lem:prob_hybrid_vector_align}. 
\end{proof}
\section{Proof of Lemma.~\ref{lem:hybrid_subspace_align}}
\label{sec:proof_lem_hybrid_subspace_align}
The following lemmas will be essential for the proof of Lemma.~\ref{lem:hybrid_subspace_align}:
\begin{lem}
\label{lem:condition_mutual_independence_Aij}
Let the events $A_{ij}$ be defined as in Lemma.~\ref{lem:prob_hybrid_vector_align} for $1\le i\le N,\ 1\le j\le r$. Then, if $\delta\in (0,1-1/\sqrt{2})$, the events $A_{ij_1}$ and $A_{ij_2}$ are mutually exclusive $\forall \  1\le i\le N$ and $1\le j_1,j_2\le M,\ j_1\ne j_2$.
\end{lem}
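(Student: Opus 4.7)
The plan is to exploit the orthonormality of $\{\bvec{a}_j\}_{j=1}^r$ through Bessel's inequality. Fix $i$ and let $v_i := \bm{\phi}_i/\norm{\bm{\phi}_i}$, which is a unit vector in $\real^M$. Since $\{\bvec{a}_j\}_{j=1}^r$ is an orthonormal set in $\real^M$, Bessel's inequality gives
\begin{align*}
\sum_{j=1}^r \inprod{v_i}{\bvec{a}_j}^2 \le \norm{v_i}^2 = 1.
\end{align*}
In particular, for any two distinct indices $j_1\ne j_2$, the sum of just those two squared inner products is at most $1$.

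Next, I would suppose for contradiction that the events $A_{ij_1}$ and $A_{ij_2}$ both occur. By definition of $A_{ij}$, this gives $|\inprod{v_i}{\bvec{a}_{j_1}}|\ge 1-\delta$ and $|\inprod{v_i}{\bvec{a}_{j_2}}|\ge 1-\delta$, so
\begin{align*}
\inprod{v_i}{\bvec{a}_{j_1}}^2 + \inprod{v_i}{\bvec{a}_{j_2}}^2 \ge 2(1-\delta)^2.
\end{align*}
Combining with the Bessel bound yields $2(1-\delta)^2\le 1$, i.e.\ $1-\delta\le 1/\sqrt{2}$, i.e.\ $\delta\ge 1-1/\sqrt{2}$. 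Taking the contrapositive: whenever $\delta\in(0,1-1/\sqrt{2})$, the events $A_{ij_1}$ and $A_{ij_2}$ cannot both occur, i.e.\ they are mutually exclusive. (Note: I read the ``$1\le j_1,j_2\le M$'' in the statement as a typo for $1\le j_1,j_2\le r$, since the vectors $\bvec{a}_j$ are only defined for $j\le r$; the argument applies verbatim within that range.)

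There is no substantive obstacle here; the entire content is a two-line application of Bessel's inequality together with a trivial inequality manipulation. The only thing worth double-checking is the threshold $1-1/\sqrt{2}$, which is indeed sharp: at $\delta = 1-1/\sqrt{2}$ one could in principle have $v_i = (\bvec{a}_{j_1}+\bvec{a}_{j_2})/\sqrt{2}$, which achieves equality in both $A_{ij_1}$ and $A_{ij_2}$ simultaneously, so the strict inequality $\delta<1-1/\sqrt{2}$ in the hypothesis is what forces mutual exclusivity.
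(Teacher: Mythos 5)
Your proof is correct and is essentially the paper's argument: the paper extends $\{\bvec{a}_j\}_{j=1}^r$ to a full orthonormal basis and uses $\sum_k\inprod{v_i}{\bvec{a}_k}^2=1$ to conclude that one coefficient of magnitude $\ge 1-\delta$ forces every other coefficient below $1-\delta$ when $(1-\delta)^2>1/2$, which is the same Bessel/Parseval computation you perform (and your reading of $1\le j_1,j_2\le M$ as a typo for $\le r$ is the right one).
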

\emph{\remark{Lemma.~\ref{lem:condition_mutual_independence_Aij} demonstrates that the condition $\delta\in (0,1-1/\sqrt{2})$ ensures that a column of a generalized hybrid dictionary cannot be ``too close'', simultaneously, to more than one of the basis vectors $\{\bvec{a}_{i}\}$}.}
\begin{proof}
Let $\{\bvec{a}_i\}_{i=1}^M$ form an orthonormal basis for $\real^M$. Then, one can uniquely express a column $\bm{\tilde{\phi}}_i:=\bm{\phi}_i/\norm{\bm{\phi}_i}$ as \begin{align*}
\bm{\tilde{\phi}}_i=\sum_{k=1}^M \epsilon_{ik}\bvec{a}_k
\end{align*}
where $\epsilon_{ik}=\inprod{\bm{\tilde{\phi}_i}}{\bvec{a}_k}$. Since the columns are normalized, we have $\sum_{k=1}^M \epsilon_{ik}^2=1$. Hence, if one has $|\epsilon_{ij_1}|\ge 1-\delta$ for some $1\le j_1\le M$, essentially, for any other index $1\le j_2\le M$, $|\epsilon_{ij_2}|\le \sqrt{1-(1-\delta)^2}<1-\delta$ under the condition $\delta\in (0,1-1/\sqrt{2})$. This concludes the proof. 
\end{proof}
\begin{lem}
\label{lem:increasing_fn_p}
The real valued polynomial $P(L,p,r)$ defined as \begin{align*}
P(L,p,r)=\sum_{j=0}^r (-1)^j \binom{r}{j}(1-jp)^L
\end{align*}
 is a positive valued monotonically increasing function of $p$ whence $p\in [0,1/r]$ where $r\ge 1, L\ge r$.
\end{lem}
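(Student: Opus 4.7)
The plan is to give $P(L,p,r)$ a transparent probabilistic interpretation, from which both claims follow almost for free. Specifically, I would identify $P(L,p,r)$ as the probability that every one of $r$ ``categories'' gets hit at least once in $L$ i.i.d.\ multinomial trials whose ``no-hit'' category carries mass $1-rp$.

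First, for fixed $p\in[0,1/r]$, I would introduce $L$ i.i.d.\ random variables $X_1,\ldots,X_L$ taking values in $\{0,1,\ldots,r\}$ with $\mathbb{P}(X_i=j)=p$ for $j=1,\ldots,r$ and $\mathbb{P}(X_i=0)=1-rp$; the constraint $p\in[0,1/r]$ is precisely what makes this a legitimate distribution. With $A_j:=\{\exists\,i:X_i=j\}$, the joint event $\bigcap_{j\in S}A_j^c$ says every trial avoids every category in $S$, and independence of the trials gives $\mathbb{P}(\bigcap_{j\in S}A_j^c)=(1-|S|p)^L$. Inclusion--exclusion then yields
\[
\mathbb{P}\bigl(\textstyle\bigcap_{j=1}^r A_j\bigr)=\sum_{k=0}^r(-1)^k\binom{r}{k}(1-kp)^L=P(L,p,r).
\]
Since the left-hand side is a probability, $P(L,p,r)\ge 0$ on $[0,1/r]$; strict positivity for $p>0$ and $L\ge r$ follows by noting that the single scenario ``the first $r$ trials realize a permutation of $\{1,\ldots,r\}$ and the remaining trials are arbitrary'' already contributes at least $r!\,p^r>0$ to the probability.

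Second, to establish monotonicity I would use a coupling. For each trial $i$, draw independent $U_i\sim\mathcal{U}[0,1]$ and $V_i$ uniform on $\{1,\ldots,r\}$, and define, for every $p\in[0,1/r]$, $X_i^{(p)}:=V_i\,\mathbbm{1}\{U_i\le rp\}$. A direct check confirms $X_i^{(p)}$ has the prescribed marginal law. Because $\{U_i\le rp_1\}\subseteq\{U_i\le rp_2\}$ whenever $p_1\le p_2$, any trial that hits category $V_i$ at level $p_1$ hits exactly the same $V_i$ at level $p_2$. Hence the (random) set of hit categories is pointwise non-decreasing in $p$, and the event ``every category is hit at $p_1$'' implies the corresponding event at $p_2$, so $P(L,p_1,r)\le P(L,p_2,r)$.

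I do not anticipate a real obstacle beyond the initial leap of recognizing the right probabilistic interpretation; once that is in hand, positivity and monotonicity reduce to essentially one-line observations (``it is a probability'' and ``couple all trials through a common $U_i$''). An alternative, slightly more computational route would be to differentiate $P$ in $p$, use $j\binom{r}{j}=r\binom{r-1}{j-1}$ to reveal the recursion $\partial_p P(L,p,r)=Lr(1-p)^{L-1}\,P\bigl(L-1,\tfrac{p}{1-p},r-1\bigr)$, and then induct on $r$ with base case $P(L,p,1)=1-(1-p)^L$; but this is strictly less illuminating than the coupling argument above.
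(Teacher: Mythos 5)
Your proof is correct, but it diverges from the paper's argument in a way worth noting. For positivity, the paper does not argue self-containedly: it defers to the proof of Lemma~\ref{lem:hybrid_subspace_align}, where $P(L,p,r)$ is recognized, exactly as you recognize it, as $\mathbb{P}(X_1\ge 1,\dots,X_r\ge 1)$ for a multinomial model via inclusion--exclusion. You reconstruct that interpretation on the spot and additionally supply the strict-positivity bound $P(L,p,r)\ge r!\,p^r$, which the paper only asserts; your version also avoids any appearance of circularity between the two lemmas. For monotonicity the routes genuinely differ: the paper differentiates, uses $j\binom{r}{j}=r\binom{r-1}{j-1}$ to obtain precisely the recursion $\partial_p P(L,p,r)=Lr(1-p)^{L-1}P\bigl(L-1,\tfrac{p}{1-p},r-1\bigr)$ that you relegate to your ``alternative route,'' and then closes the induction on $r$ using positivity on the shifted interval $\tfrac{p}{1-p}\in[0,\tfrac{1}{r-1}]$. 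Your coupling through a common $U_i\sim\mathcal{U}[0,1]$ and category label $V_i$ is cleaner and makes monotonicity structurally obvious, at the cost of delivering only $P(L,p_1,r)\le P(L,p_2,r)$ (weak monotonicity); the paper's derivative identity is manifestly strictly positive on $(0,1/r)$, and its algebraic template is reused verbatim for the harder polynomial $Q(L,p,r)$ in Lemma~\ref{lem:increasing_fn_q}, which is the main thing the computational route buys. Since the lemma is only ever used to pass from $p\ge p(\delta)$ to a lower bound on a probability, weak monotonicity suffices, so this is a cosmetic rather than substantive gap.
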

\begin{proof}
The fact that $P(L,p,r)$ is positive valued for $p\in [0,1/r], L\ge r$, will be clear from the proof of Lemma.~\ref{lem:hybrid_subspace_align} as it will be shown there that this polynomial is in fact a probability expression and hence is always non negative, and for $p\in (0,1/r)$, it is strictly positive.

To show that this polynomial is increasing in $p$, note that for $r=1$, $P(L,p,1)=1-(1-p)^L$ which is clearly increasing when $p\in [0,1]$.

For $r\ge 2$, note that, from the definition of $P$, $P(L,p,r)$ is continuous and differentiable everywhere, w.r.t. $p$. Then, \begin{align*}
\lefteqn{\frac{\partial P(L,p,r)}{\partial p}}& &\\
\ &=-L\sum_{j=0}^r j(-1)^j\binom{r}{j}(1-jp)^{L-1}\\
\ &=Lr\sum_{j=0}^{r-1}(-1)^{j}\binom{r-1}{j}(1-(j+1)p)^{L-1}\\
\ &=Lr(1-p)^{L-1}P\left(L-1,\frac{p}{1-p},r-1\right)
\end{align*}
Note that $p\in [0,1/r]\implies \frac{p}{1-p}\in[0, \frac{1}{r-1}]$, which implies, from the positivity of the polynomial $P(L,p,r)$ for $p\in [0,1/r],\ L\ge r\ge 1$, that $P(L-1,p/(1-p),r-1)$ is also positive valued for $r\ge 2,\ p\in [0,1/r]$. Thus, the polynomial $P(L,p,r)$ is monotonically increasing in $p$ for $r\ge 2$.
\end{proof}
\begin{proof}[Proof of Lemma.~\ref{lem:hybrid_subspace_align}]
Let us first discuss the model of the random experiment that generates the events $A_{ij}$. To describe the experiment, we symbolize the different vectors involved according to the following notation:
\begin{description}
\item[$c_i$] denotes the symbol for the $i^{\mathrm{th}}$ normalized random column vector $\bm{\tilde{\phi}_i}$, for $1\le i\le L$   
\item[$b_j$] denotes the symbol for the $j^{\mathrm{th}}$ vector $\bvec{a}_j$ in the orthonormal basis, for $1\le j\le r$

\end{description}
We say that an ``assignment'' of $c_i$ to $b_j$ has taken place if the event $A_{ij}$ occurs. Now, in virtue of Lemma.~\ref{lem:condition_mutual_independence_Aij}, whenever $\delta\in (0,0.293)$, the events $A_{ij}$ are pairwise mutually exclusive for any fixed $i$. Also, observe that, due to the independence of the random vectors $\bm{\tilde{\phi}_i}$, any pair of events $A_{ij}$ and $A_{kl}$ are stochastically independent, as long as $i\ne k$.
Thus the model for the random experiment can be described as below:
\begin{itemize}
\item An assignment of $c_i$ to $b_j$ occurs independently of an assignment of $c_k$ to $b_l$ whenever $i\ne k$
\item For a given $i$, $c_i$ can be assigned to at most one of $b_j$, for some $1\le j\le r$.
\end{itemize}
Each of these events occur with the same probability, say $p$ which greater than $p(\delta)$ as shown in Lemma.~\ref{lem:prob_hybrid_vector_align}. 

Having set the stage for the experiment, now Let us define, for a fixed $j$, $X_j$ as random variable denoting the number of $c_i$'s assigned to $b_j$. Then, note that $0\le X_j\le L$, for any $1\le j\le r$ and $0\le \sum_{j=1}^r X_j\le L$. The objective of Lemma.~\ref{lem:hybrid_subspace_align} then boils down to calculating the quantity $\mathbb{P}(X_1\ge 1,X_2\ge 1,\cdots,\ X_r\ge 1)$. We proceed by finding out the joint distribution of the random variables $X_1,X_2,\cdots,\ X_r$. 

Let $k_1,k_2,\cdots,\ k_r$ be natural numbers such that $0\le k_1,k_2,\cdots, k_r\le L $ and $\sum_{j=1}^r k_j\le L $. Then
\begin{align*}
\lefteqn{\mathbb{P}(X_1=k_1,X_2=k_2,\cdots,\ X_r=k_r)} & &\\
\ &=\binom{L}{k_1}p^{k_1}\cdot\binom{L-k_1}{k_2}p^{k_2}\\
\ &\cdots\binom{L-(k_1+k_2+\cdots+k_{r-1})}{k_{r}}p^{k_r}\\
\ &\cdot(1-rp)^{L-(k_1+k_2+\cdots+k_{r-1})}\\
\ &=\binom{L}{k_1\ k_2\ \cdots\ k_r\  L-\sum_{j=1}^r k_j}p^{\sum_{j=1}^r k_j}(1-rp)^{L-\sum_{j=1}^r k_j}
\end{align*}
That is the random variables $X_1,X_2,\cdots,\ X_r, (L-\sum_{j=1}^r X_j)$ are multinomial distributed with parameters $p_1=p_2=\cdots=p_{r}=p,\ p_{r+1}=1-rp$. Obviously, the condition $p\in [0,1/r]$ is necessary here. Finally, we can find the desired probability as \begin{align*}
\lefteqn{\mathbb{P}(X_1\ge 1,X_2\ge 1,\cdots,\ X_r\ge 1)} & & \\
\ &=1-\mathbb{P}\left(\{X_1=0\}\cup \{X_2=0\}\cup\cdots\cup \{X_r=0\}\right)\\
\ &=1-\sum_{k=1}^r(-1)^{k-1}\\
\ & \sum_{1\le i_1<i_2<\cdots<i_k\le r}\mathbb{P}\left(X_{i1}=0,X_{i_2}=0, \cdots, X_{i_k}=0\right)\\
\ &=1-\sum_{k=1}^r (-1)^{k-1}\binom{r}{k}(1-kp)^L\\
\ &=\sum_{k=0}^r (-1)^{k}\binom{r}{k}(1-kp)^L
\end{align*}
Now, since $p\ge p(\delta)$ from Lemma.~\ref{lem:prob_hybrid_vector_align}, use of Lemma.~\ref{lem:increasing_fn_p} concludes the proof.
\end{proof}
\section{Proof of Lemma.~\ref{lem:coherence-generalized-hybrid-dictionary}}
\label{sec:proof-lem-coherence-generalized-hybrid-dictionary}
\begin{lem}
\label{lem:increasing_fn_q}
The real valued polynomial $Q(L,p,r)$ defined as \begin{align*}
Q(L,p,r)=\sum_{k=0}^r (-1)^{k}\binom{r}{k}\sum_{j=0}^{k} \binom{k}{j} \frac{L!}{(L-j)!}p^j(1-kp)^{L-j}
\end{align*}
 is a positive valued monotonically increasing function of $p$ whence $p\in [0,1/r]$ where $r\ge 1, L\ge r$.
\end{lem}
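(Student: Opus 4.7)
The plan is to echo the strategy used for Lemma~\ref{lem:increasing_fn_p}: identify $Q(L,p,r)$ with a probability attached to the same multinomial experiment that appeared in the proof of Lemma~\ref{lem:hybrid_subspace_align}, and then deduce both positivity and monotonicity from that probabilistic interpretation.

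First, I would revisit the random experiment from the proof of Lemma~\ref{lem:hybrid_subspace_align}: place $L$ balls independently into $r+1$ categories, the first $r$ with probability $p$ each and the last (``external'') with probability $1-rp$; let $X_j$ denote the number of balls in category $j$ for $1\le j\le r$. I claim that $Q(L,p,r)=\mathbb{P}(X_1\ge 2,\ldots,X_r\ge 2)$. To establish this, I apply inclusion--exclusion on the complementary events $\{X_j\le 1\}$. Marginalizing the multinomial onto any subset $\{i_1,\ldots,i_k\}\subseteq\{1,\ldots,r\}$ and summing over the assignments $(j_1,\ldots,j_k)\in\{0,1\}^k$ of how many balls land in each chosen bin gives
\begin{align*}
\mathbb{P}(X_{i_1}\le 1,\ldots,X_{i_k}\le 1) &=\sum_{j=0}^{k}\binom{k}{j}\frac{L!}{(L-j)!}p^j(1-kp)^{L-j},
\end{align*}
where $\binom{k}{j}$ selects which of the $k$ targeted bins receive a ball, and $\frac{L!}{(L-j)!}p^j$ counts the ordered assignment of $j$ of the $L$ balls into these distinguished slots (the constraint $p\in[0,1/r]$ keeps the multinomial and this interpretation valid). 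Substituting into the standard inclusion--exclusion formula reproduces $Q(L,p,r)$ exactly, and since a probability is always non-negative, positivity follows on $[0,1/r]$.

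For monotonicity in $p$, rather than differentiating and inducting on $r$ as was done for $P(L,p,r)$, I would set up a monotone coupling between two parameter values $0\le p<p'\le 1/r$. First sample the configuration at parameter $p$; then, for each ball currently external, independently move it to a uniformly chosen one of the $r$ inner bins with probability $q=\tfrac{r(p'-p)}{1-rp}\in[0,1]$, and otherwise leave it external. A short calculation shows that each ball ends up in bin $i$ with marginal probability $p+(1-rp)\cdot q/r=p'$ and external with probability $1-rp'$, so the $p'$-marginal is correct; by construction $X_j^{(p)}\le X_j^{(p')}$ almost surely for every $j$. Since the event $\{X_j\ge 2\ \forall j\}$ is monotone nondecreasing in each coordinate, the coupling yields $Q(L,p,r)\le Q(L,p',r)$, which is the required monotonicity.

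The main obstacle I expect is the bookkeeping of the inclusion--exclusion identity, namely making explicit that restricting a multinomial to $\{0,1\}$ values on a fixed block of $k$ coordinates produces the factorial-style coefficient $\binom{k}{j}\tfrac{L!}{(L-j)!}$ instead of the more familiar binomial coefficient that appeared in the simpler lemma about $P(L,p,r)$. Once that combinatorial identification is firmly in place, positivity is automatic, and the monotone coupling disposes of the monotonicity claim without any calculus, thereby both paralleling and slightly simplifying the inductive differentiation argument given for $P(L,p,r)$.
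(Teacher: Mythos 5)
Your proposal is correct, and it splits naturally into two halves relative to the paper. For positivity you do exactly what the paper does: the paper also identifies $Q(L,p,r)$ with $\mathbb{P}(X_1\ge 2,\ldots,X_r\ge 2)$ for the multinomial variables of Lemma~\ref{lem:hybrid_subspace_align}, via inclusion--exclusion over the events $\{X_j\le 1\}$, and your computation of $\mathbb{P}(X_{i_1}\le 1,\ldots,X_{i_k}\le 1)=\sum_{j=0}^{k}\binom{k}{j}\frac{L!}{(L-j)!}p^j(1-kp)^{L-j}$ matches the one carried out in the proof of Lemma~\ref{lem:coherence-generalized-hybrid-dictionary}. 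For monotonicity, however, you take a genuinely different route: the paper differentiates, manipulates binomial identities to obtain $\frac{\partial Q(L,p,r)}{\partial p}=rL(L-1)p(1-p)^{L-2}Q\left(L-2,\tfrac{p}{1-p},r-1\right)$, and then closes an induction on $r$ using the positivity already established; you instead build a monotone coupling in which the $p'$-configuration is obtained from the $p$-configuration by promoting each external ball to a uniform inner bin with probability $q=\tfrac{r(p'-p)}{1-rp}$. Your marginal check ($p+(1-rp)q/r=p'$, external mass $1-rp'$, and $q\in[0,1]$ precisely because $p'\le 1/r$) is right, each $X_j$ is almost surely nondecreasing under the coupling, and $\{X_j\ge 2\ \forall j\}$ is an increasing event, so $Q(L,p,r)\le Q(L,p',r)$ follows with no calculus. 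The coupling is cleaner and more robust (it proves monotonicity simultaneously for any increasing event in the $X_j$, e.g.\ the polynomial $P(L,p,r)$ of Lemma~\ref{lem:increasing_fn_p} as well), whereas the paper's derivative identity has the advantage of being explicit and potentially usable for quantitative rate estimates. The only nuance worth flagging is that the coupling as stated yields weak monotonicity ($\le$); if strict increase on the open interval is wanted, note that the coupling moves a failing configuration to a succeeding one with positive probability whenever $0<p<p'<1/r$ and $L\ge 2r$.
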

\begin{proof}
From the proof of Lemma.~\ref{lem:coherence-generalized-hybrid-dictionary}, $Q(L,p,r)$ will be understood as a probability expression, which will imply the non-negativity of $Q(L,p,r)$.
Now observe, \begin{align*}
\lefteqn{\frac{\partial Q(L,p,r)}{\partial p}} & & \\
\ =& \sum_{k=0}^r (-1)^k \binom{r}{k}\\
\ & \sum_{j=0}^k \binom{k}{j}\frac{L!}{(L-j)!}\left(jp^{j-1}(1-kp)^{L-j}\right.\\
\ & \left.-k(L-j)p^j(1-kp)^{L-j-1}\right)
\end{align*}
After a bit of manipulation, and using the identity $\binom{k}{j}-\binom{k-1}{j}=\binom{k-1}{j-1}$, it follows that \begin{align*}
\lefteqn{\frac{\partial Q(L,p,r)}{\partial p}} & & \\
\ &= \sum_{k=1}^r (-1)^{k-1} k \binom{r}{k}\\
\ & \left(\sum_{j=1}^{k}\binom{k-1}{j-1}\frac{L!}{(L-j-1)!}p^j(1-kp)^{L-j-1}\right)\\
\ &= r\sum_{k=1}^{r}(-1)^{k-1}\binom{r-1}{k-1}\\
\ &\left(\sum_{j=0}^{k-1}\binom{k-1}{j}\frac{L!}{(L-2-j)!}p^{j+1}(1-kp)^{L-2-j}\right)\\
\ &= rL(L-1)p(1-p)^{L-2}\\
\ &\left[\sum_{k=0}^{r-1}(-1)^k\binom{r-1}{k}\sum_{j=0}^k\binom{k}{j}\frac{(L-2)!}{(L-2-j)!}q^{j}(1-kq)^{L-2-j}\right]\\
\ &=rL(L-1)p(1-p)^{L-2}Q(L-2,q,r-1)>0
\end{align*}
where $q=\frac{p}{1-p}$. This proves that $Q(L,p,r)$ is increasing in $p$, as long as $p\in (0,1/r)$.
\end{proof}
\begin{proof}[Proof of Lemma.~\ref{lem:coherence-generalized-hybrid-dictionary}]
The columns of $\bm{\Phi}$ are of the form $\frac{\bm{\phi}_i}{\norm{\bm{\phi}_i}}$, where $\bm{\phi}_i$ is of the form $\bm{\phi}_i=\sum_{j=1}^r u_{ij}\bvec{a}_j+\bvec{n}_i$, where properties of these variables can be recalled from the definition~\ref{defn:hybrid} of generalized hybrid dictionary. For the sake of simplicity we will denote $\frac{\bm{\phi}_i}{\norm{\bm{\phi}_i}}$ by $\bm{\psi}_i$. Now, let us assume that, in any collection of $K$ columns, $\exists\ 2r$ columns, w.l.o.g. $\{\bm{\psi}_i\}_{i=1}^{2r}$, such that $\min_{1\le i\le r}\{\abs{\inprod{\bm{\psi}_i}{\bvec{a}_i}}\}\ge 1-\delta$, and $\min_{r+1\le i\le 2r}\{\abs{\inprod{\bm{\psi}_i}{\bvec{a}_i}}\}\ge 1-\delta$. Let $\{\bvec{a}_i\}_{i=1}^M$ form an orthonormal basis for $\real^M$. We can uniquely expand any column $\bm{\psi}_i$ as $\bm{\psi}_i=\sum_{j=1}^M \epsilon_{ij}\bvec{a}_j$. Then, note that the assumption of existence of the $2r$ columns with aforementioned property forces the constraints, $\abs{\epsilon_{ii}}\ge 1-\delta$, for $i=1,2,\cdots,\ r$, $\abs{\epsilon_{i+r,i}}\ge 1-\delta$, for $i=1,2,\cdots,\ r$, with $\sum_{j=1}^M{\epsilon_{ij}^2}=1,\ \forall i$, so that we have $\sum_{j\ne i}\epsilon_{ij}^2\le 1-(1-\delta)^2=2\delta-\delta^2$, and , similarly, $\sum_{j\ne i}\epsilon_{i+r,j}^2\le 1-(1-\delta)^2=2\delta-\delta^2$. Thus, for any $1\le i\le r$, we have \begin{align*}
\lefteqn{\abs{\inprod{\bm{\psi}_i}{\bm{\psi}_{i+r}}}} & & \\
\ &=   \abs{\sum_{j=1}^M\epsilon_{ij}\epsilon_{i+r,j}}\\
\ &\ge \abs{\epsilon_{ii}\epsilon_{i+r,i}}-\abs{\sum_{j\ne i}\epsilon_{ij}\epsilon_{i+r,j}}\\
\ &\ge (1-\delta)^2-(1-(1-\delta)^2)\quad (\mbox{Using Cauchy-Scwartz inequality})\\
\ &=1-4\delta+2\delta^2
\end{align*}
To find the probability that there are $2r$ columns $\bm{\psi}_i$ satisfying the aforementioned condition, we recall the proof of Lemma.~\ref{lem:hybrid_subspace_align}, and recognize the desired probability as $\mathbb{P}\left(X_1\ge 2,\ X_2\ge 2,\ \cdots,\ X_r\ge 2\right)$ where $X_1,\cdots,X_r,\ (K-\sum_{i=}^r X_i)$ are multinomial distributed random variables with parameters $(p_1=p,\ p_2=p,\cdots,\ p_r=p,\ p_{r+1}=1-rp)$, where $p$ was defined as in Lemma.~\ref{lem:hybrid_subspace_align}, with $p\ge p(\delta)$. consequently, the desired probability is \begin{align*}
\lefteqn{\mathbb{P}\left(X_1\ge 2,\ X_2\ge 2,\ \cdots,\ X_r\ge 2\right)} & &\\
\ &=1-\mathbb{P}\left(\{X_1\in \{0,1\}\}\cup \{X_2\in \{0,1\}\}\cup\cdots\cup \{X_r\in\{0,1\}\}\right)\\
\ &=1-\sum_{k=1}^r(-1)^{k-1}\\
\ & \sum_{1\le i_1<i_2<\cdots<i_k\le r}\mathbb{P}\left(X_{i1}\in\{0,1\},X_{i_2}\in\{0,1\}, \cdots, X_{i_k}\in\{0,1\}\right)
\end{align*} 
With a little effort, it can be shown that \begin{align*}
\lefteqn{\mathbb{P}\left(X_{i1}\in\{0,1\},X_{i_2}\in\{0,1\}, \cdots, X_{i_k}\in\{0,1\}\right)} & & \\
\ &=\sum_{j=0}^{k} \binom{k}{j} \frac{K!}{(K-j)!}p^j(1-kp)^{K-j} 
\end{align*}
Thus, the desired probability is \begin{align*}
\ & 1-\sum_{k=1}^r (-1)^{k-1}\binom{r}{k}\sum_{j=0}^{k} \binom{k}{j} \frac{K!}{(K-j)!}p^j(1-kp)^{K-j} \\
\ &=\sum_{k=0}^r (-1)^{k}\binom{r}{k}\sum_{j=0}^{k} \binom{k}{j} \frac{K!}{(K-j)!}p^j(1-kp)^{K-j} 
\end{align*}
From Lemma.~\ref{lem:increasing_fn_q}, the preceding term can be recognized as $Q(K,p,r)$. Then, an application of Lemma.~\ref{lem:increasing_fn_q} along with the fact that $p\ge p(\delta)$ establishes that \begin{align*}
\ & \mathbb{P}\left(\exists 2r\ \mbox{columns $\{\bm{\psi}_i\}_{i=1}^{2r}$ in any collection of $K$ columns}\right.\\
\ & \left.\mbox{ such that} \min_{1\le i\le r}\min\left\{\abs{\inprod{\bm{\psi}_i}{\bvec{a}_i}},\abs{\inprod{\bm{\psi}_{i+r}}{\bvec{a}_i}}\right\}\ge 1-\delta\right)\\
\ &\ge \sum_{k=0}^r (-1)^{k}\binom{r}{k}\sum_{j=0}^{k} \binom{k}{j} \frac{K!}{(K-j)!}p(\delta)^j(1-kp(\delta))^{K-j} 
\end{align*}
Thus, 
\end{proof} 
\bibliography{ref_ols_analysis}
\end{document}